\newcommand{\ketbra}[1]{\ket{#1}\bra{#1}}
\newtheorem{theorem}{Theorem}
\newtheorem{corollary}{Corollary}
\newtheorem{observation}{Observation}
\newtheorem{Lemma}{Lemma}
\newcommand{\Tr}{\operatorname{Tr}}
\begin{document}

\title{Repeater-Based Quantum Communication Protocol: Maximizing Teleportation Fidelity with Minimal Entanglement}

\author{Arkaprabha Ghosal}
\thanks{These authors contributed equally to the manuscript.}
\affiliation{Optics \& Quantum Information Group, The Institute of Mathematical Sciences, CIT Campus, Taramani, Chennai 600113, India.}
\affiliation{Homi Bhabha National Institute, Training School Complex, Anushakti Nagar, Mumbai 400085, India.}
\affiliation{Mathematics Department, University of New Orleans, Louisiana, USA.}

\author{Jatin Ghai}
\thanks{These authors contributed equally to the manuscript.}
\affiliation{Optics \& Quantum Information Group, The Institute of Mathematical Sciences, CIT Campus, Taramani, Chennai 600113, India.}
\affiliation{Homi Bhabha National Institute, Training School Complex, Anushakti Nagar, Mumbai 400085, India.}

\author{Tanmay Saha}
\affiliation{Optics \& Quantum Information Group, The Institute of Mathematical Sciences, CIT Campus, Taramani, Chennai 600113, India.}
\affiliation{Homi Bhabha National Institute, Training School Complex, Anushakti Nagar, Mumbai 400085, India.}

\author{Sibasish Ghosh}
\affiliation{Optics \& Quantum Information Group, The Institute of Mathematical Sciences, CIT Campus, Taramani, Chennai 600113, India.}
\affiliation{Homi Bhabha National Institute, Training School Complex, Anushakti Nagar, Mumbai 400085, India.}

\author{Mir Alimuddin}   
\thanks{These authors contributed equally to the manuscript.}
\affiliation{ICFO-Institut de Ciencies Fotoniques, The Barcelona Institute of Science and Technology, Av. Carl Friedrich Gauss 3, 08860 Castelldefels (Barcelona), Spain.}

\begin{abstract}
Transmitting unknown quantum states to distant locations is crucial for distributed quantum information protocols. The seminal quantum teleportation scheme achieves this feat while requiring prior maximal entanglement between the sender and receiver. In scenarios with noisy entangled states, optimal teleportation fidelity characterizes the efficacy of transmitting the state, demanding the proper selection of local operations at the sender's and receiver's ends. The complexity escalates further in long-range communication setups, prompting the consideration of a repeater-based approach, which incorporates arrays of nodes with multiple segments to facilitate the efficient transmission of quantum information. The fidelity of the communication line gets degraded even if a single segment is affected by noise. In such cases, the general wisdom employs the standard entanglement swapping protocol involving maximally entangled states across the noiseless segments and applying maximally entangled basis measurement at the corresponding nodes to achieve optimal fidelity. In this Letter, we propose a more efficient protocol for a certain class of noisy states, achieving the same fidelity as the standard protocol while consuming less amount of entanglement. Our approach ensures optimal teleportation fidelity even when the end-to-end state gets noisier, and thus promises efficient utility of quantum resources in repeater-based distributed quantum protocols.    
\end{abstract}
\maketitle

{\it Introduction.--} In distributed communication scenarios, the successful implementation of most information processing tasks critically depends on the relentless transmission of signals across distant points \cite{Wilde2013, Marinescu2012, Khatri2024}. However, in the domain of long-range communication, signal quality often deteriorates as the separation between sender and receiver increases \cite{Briegel1998,Dur1999}. To address this crucial challenge, potential remedy involves a repeater-based approach, wherein the communication line is divided into multiple segments or short-range communication lines \cite{Sangouard2011, Azuma2023}. Signals received from each segment are then replicated, amplified, and relayed to neighboring stations (nodes) until they reach the intended endpoints. However, conventional signal replication is rendered unfeasible in quantum information processing due to the {\it no-cloning} principle \cite{Wootters1982}. Nonetheless, thanks to the advent of quantum teleportation protocols \cite{Bennett1993}, the way has been paved for the development of a viable quantum repeater-based protocol through entanglement swapping \cite{Zukowski1993, Pan1998, Bouda2001, Bayrakci2022}. This approach provides the most effective and scalable solution for achieving long-range quantum communication, as it mitigates signal degradation most compelling over extended distances \cite{Briegel1998,Dur1999,Sangouard2011,Azuma2023}. Noteworthy, the protocol requires equipping each segment with maximally entangled states (MES) and performing maximally entangled measurements (MEM) at intermediary nodes to facilitate seamless teleportation from sender to receiver.

However, establishing maximal entanglement poses significant challenges, and environmental action often results in noisy entangled states \cite{Horodecki2009}. Addressing this challenge involves various strategies, one of the kind which entails distilling MES from asymptotic copies of the identical noisy entangled states through local operations and classical communication (LOCC) \cite{Briegel1998,Dur1999,Sangouard2011, Azuma2023, Bennett1996, BennettBrassard1996, Deutsch1996, Razavi2009, Broadfoot2010,Pal2014}. Although the asymptotic approach provides a fundamental theoretical limit for quantum resources, practical scenarios often demand considering finite copies of resources, making such considerations highly relevant and necessary \cite{Lo2001,Morikoshi2001,Martin2003,Hayashi2006,Buscemi2010,Buscemi2013,Brandao2011,Datta2015,Fang2019,Regula2019}. The distribution of quantum entanglement among distant parties has also garnered significant interest in the finite-copy regime \cite{Hardy2000,Gour2004, Acin2007,Perseguers2008,Cuquet2009,Perseguers2010,Meng2021, Riera2021}. In practice, when repeater segments share even a single non-maximally entangled resource, achieving optimal entanglement distribution between the desired parties becomes particularly challenging due to the complexity of non-universal LOCC protocols. Because of the monotonicity of entanglement under LOCC, one can always establish an upper bound for entanglement measures, which corresponds to the least entangled state shared across the segments.

 However, achieving this upper bound is also quite nontrivial, and no entanglement measure is known to achieve this in a typical repeater scenario. For a given noisy segment with a noisy entangled state, a trivial achievability condition arises for all entanglement measures if all other segments are equipped with MES resources. This effectively reduces the problem to an entanglement-swapping (E-SWAP) protocol, recreating the same noisy entangled state between the desired parties.
Under specific conditions, such as when each segment shares the same pure non-maximally entangled (NMES) resource and the scenario involves a single intermediary node, the singlet conversion probability (SCP) of the distributed entangled state equals the SCP of the segment's entangled state\cite{Hardy2000,Acin2007}. However, in multi-node quantum repeater setups, this question remains unresolved. Notably, other entanglement measures, such as concurrence \cite{Gour2004} and worst-case entanglement (WCE)\cite{Perseguers2008}, are strictly less than the entanglement of the noisy state shared in a segment. The ability to achieve the optimal bound under nontrivial conditions (with non-maximally entangled resources) for a given entanglement measure holds significant practical importance. Such an achievement would reduce entanglement consumption compared to the traditional E-SWAP protocol, making it a compelling advancement in quantum communication.

This Letter directly addresses the concerns outlined above. We demonstrate that teleportation of an unknown quantum state, quantified by an entanglement measure known as the optimal teleportation fidelity, exhibits above phenomena while the repeater have a certain class of two-qubit noisy states in it's noisy segment. The protocol utilizes non-maximal resources in free-segments and nodes, resulting in lower resource consumption compared to the standard E-SWAP protocol. Furthermore, in a given quantum repeater scenario, as the noisy segment shares a noisier state, the resource consumption in the repeater reduces accordingly. Notably, this advantage persists even if the number of nodes approaches infinity.  

{\it Optimal teleportation fidelity and Optimal fully entangled fraction.--} Teleportation of an unknown quantum state $\ket{\psi}$ using a shared quantum state $\rho_{AB}$ between the sender (Alice) and receiver (Bob) is typically achieved via an LOCC protocol $\mathcal{T}$. The performance of teleportation is characterized by the average teleportation fidelity, defined as $f_{\mathcal{T}}(\rho_{AB})=\int d\psi ~\bra{\psi} \rho_{\psi}\ket{\psi}$, where $\rho_{\psi}$ represents the teleported state at Bob's end. For a given state $\rho_{AB}$, the teleportation fidelity explicitly depends on the chosen protocol $\mathcal{T}$. The optimal teleportation fidelity (OTF) is then defined as \cite{Horodecki1999}:
\begin{align}
    f^{\star}(\rho_{AB}) = \max_{\mathcal{T}}f_{\mathcal{T}}(\rho_{AB}). \label{OTF}
\end{align} 
Another significant quantity of interest is the  optimal fully entangled fraction (OFEF), 
\begin{align}
F^{\star}(\rho_{AB}) = \max_{\Lambda\in \Lambda_{LOCC}} F(\Lambda (\rho_{AB})), \label{OFEF}
\end{align}
where $F(X) = \max_{\ket{\phi}} \bra{\phi} X \ket{\phi}$, is the fully entangled fraction \cite{Horodecki1999, Badzia2000}, and $\ket{\phi}$ represents an arbitrary $d\times d$ MES. The LOCC optimization in the definition of $F^\star$ makes it an entanglement monotone, but it also presents computational challenges \cite{Verstraete2001,Verstraete2002,Verstraete2003}, similar to determining the OTF. For arbitrary $d \times d$ states, these two measures are related by \cite{Horodecki1999}: $f^{\star}(\rho_{AB})=(d~F^{\star}(\rho_{AB})+1)/(d+1)$ \cite{Horodecki1999}. Due to this equivalence, OTF and OFEF are often used interchangeably in our subsequent studies.

Notably, for separable or PPT-entangled states \cite{Horodecki-PPT1996}, the OTF and OFEF attain the classical bounds of $2/d+1$ and $1/d$, respectively \cite{Bennett1996, Horodecki1999, Chitambar2024}. Exceeding these bounds demonstrates the state's usefulness in teleportation and its potential for quantum information processing. It's noteworthy that for an arbitrary two-qubit state $\rho_{AB}$, the OFEF is always upper-bounded by \cite{Verstraete2002}
\begin{align}
F^{\star}(\rho_{AB}) \leq \dfrac{1+\mathcal{C}(\rho_{AB})}{2}, \label{OFEF-N-C}
\end{align}
where $\mathcal{C}(X)$ is the concurrence of the state $X$ \cite{Wootters1998}. 
While this upper bound is achieved for pure entangled states and certain classes of mixed entangled states \cite{Verstraete2002, Ghosal2020}, it is generally not true for arbitrary states. This distinction underscores the independence between $F^{\star}$ and $\mathcal{C}$ \cite{Somshubhro2012, Pal2018}.

 \begin{figure}[h!]
    \includegraphics[height=140px, width=250px]{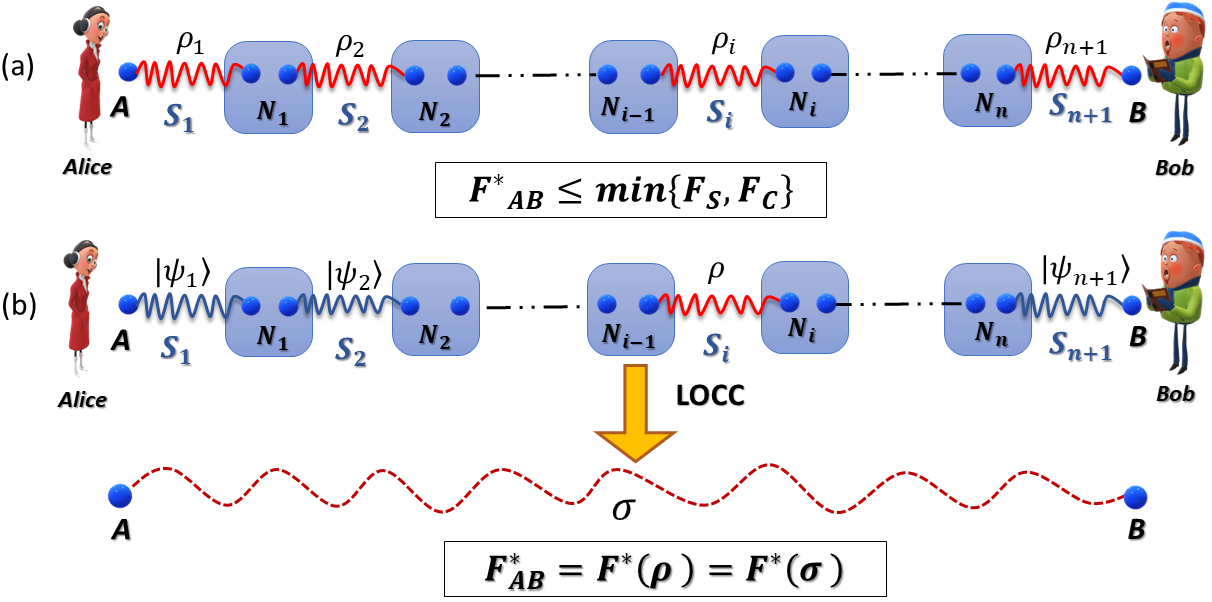}
    \caption{{\it Quantum Repeater-Based Teleportation Scheme.--} (\textbf{a})  Considering \( n+1 \) segments (\( n \) nodes) between the sender (Alice) and receiver (Bob), where each segment \( S_i \) shares a two-qubit entangled state \(\rho_i\). The general strategy is to prepare a two-qubit state between Alice and Bob in such a way that it achieves the highest possible fully entangled fraction (\( F^\star_{AB} \)), optimized over LOCC. (\textbf{b}) In this scenario, one of the segments ($ S_i$) shares a noisy state \(\rho\) ({\it called noise segment}). The objective is to establish \( F^\star_{AB} = F^\star(\rho) \), assuming the other segments ({\it noise-free or called free segment}) can share arbitrary entangled states. We aim to find a LOCC protocol that ultimately results in a two-qubit state \(\sigma\) between Alice and Bob with \( F^\star(\sigma) = F^\star(\rho) \). This protocol leads to the optimal fully entangled fraction and saturates the upper bound.}
    \label{Fig:1}
\end{figure}

 {\it Repeater-based teleportation protocol.--}
 In the realm of long-range quantum communication, quantum repeater-based protocols stand out for their effectiveness in mitigating uncontrollable noise \cite{Azuma2023}. This approach entails dividing the communication line into multiple short-range segments, denoted as $S_i$, where instead of direct communication from the sender (Alice) to a receiver (Bob), quantum information is relayed through intermediary nodes ${N_i}$ along these segments. Let's assume there are $(n+1)$ segments, where $\{S_1,S_i,S_{n+1}\}^{n}_{i=2}$ are associated to $A-N_1$, $\{N_{i-1}-N_{i}\}_{i=2}^{n}$, and $N_n-B$ parties, respectively, as depicted in Fig \ref{Fig:1}(a).  If individual segments $S_i$ share some arbitrary state $\rho_i$, what is the optimal fidelity achievable between the end-to-end parties (AB)? While general studies are quite nontrivial, even for two-qubit noises in the given repeater scenario, the upper bound of OFEF can be expressed as follows:
\begin{align}
F^{\star}_{AB} \leq \min \{F_S, F_C\}. \label{min}
\end{align}
The bound $F_S = \min_i \{F^\star (\rho_i)\}$ arises from $F^{\star}$ being an entanglement measure, where $F^\star (\rho_i)$ denotes the OFEF of the state $\rho_i$. Meanwhile, the bound $F_C = (1+\Pi^{n+1}_{i=1}\mathcal{C}_i)/2$ is obtained using Eq.$~$\eqref{OFEF-N-C}, where $\mathcal{C}_i$ is the concurrence of the state $\rho_i$. Interestingly, these two bounds are typically independent of each other, however, in some special cases, they boil down to being equal \cite{ghosal2024optimal}. It's crucial to note that having separable states in any segment makes $F^\star_{AB}=1/2$, while $F^\star_{AB}=1$ requires maximal entanglement across every segment. However, when a single segment hosts a fixed noisy state $\rho$, the end-to-end OFEF becomes $F^\star_{AB}\leq F^\star(\rho)$. The upper bound can always be achievable by the E-SWAP protocol utilizing maximal resources in free-segments and nodes. Considering concurrence as a figure of merit, previous studies suggest that E-SWAP is essential to achieve $\mathcal{E}_{AB} = \mathcal{E}(\rho)$ between Alice and Bob \cite{Gour2004, Perseguers2008}. However, in this work, we focus on OFEF, an independent entanglement quantifier with significant operational relevance in quantum information processing. Our goal is to find an efficient LOCC protocol that consumes less entanglement than the E-SWAP protocol while still achieving the same fidelity advantage, i.e., $F^\star_{AB} = F^\star(\rho)$ (see Fig \ref{Fig:1}(b)). We begin with the following lemma, which establishes that for certain noisy states, no entanglement consumption advantage exists beyond the E-SWAP protocol. Detailed calculations are provided in Appendix A \cite{supp}.

\begin{Lemma} \label{Lemma1}
   In a single-node scenario if segment $S_1$ shares a two-qubit noisy state $\rho_1$ with the constraint $F^\star(\rho_{1})=(1+\mathcal{C}_{1})/2$, then entanglement swapping is considered to be a necessary protocol to establish $F_{AB}^\star=F^\star(\rho_{1})$.  
\end{Lemma}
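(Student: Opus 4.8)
The plan is to set up the single-node repeater with two segments $S_1$ and $S_2$, where $S_1$ carries the fixed noisy two-qubit state $\rho_1$ satisfying $F^\star(\rho_1) = (1+\mathcal{C}_1)/2$, and $S_2$ carries an arbitrary two-qubit state $\rho_2$ with concurrence $\mathcal{C}_2$, shared between the intermediary node $N_1$ and Bob. The most general LOCC strategy allowed is: $N_1$ performs a joint measurement on its two halves (one from $S_1$, one from $S_2$), broadcasts the outcome, and Alice and Bob apply outcome-dependent local channels; this is captured by the statement that the resulting AB state $\sigma$ is obtained from $\rho_1 \otimes \rho_2$ by an LOCC map across the $A:N_1:B$ split that is in particular LOCC across $A:B$ after the node is traced out. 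First I would invoke Eq.~\eqref{min}: since $F^\star$ is an entanglement monotone and $\mathcal{C}$ is multiplicative under the swap construction, $F^\star(\sigma) \le \min\{F^\star(\rho_1), F^\star(\rho_2), (1+\mathcal{C}_1\mathcal{C}_2)/2\}$. The goal $F^\star(\sigma) = F^\star(\rho_1)$ therefore forces $(1+\mathcal{C}_1\mathcal{C}_2)/2 \ge F^\star(\rho_1) = (1+\mathcal{C}_1)/2$, i.e. $\mathcal{C}_1\mathcal{C}_2 \ge \mathcal{C}_1$, hence $\mathcal{C}_2 \ge 1$, so $\mathcal{C}_2 = 1$ and $\rho_2$ must be maximally entangled (a pure Bell state up to local unitaries).

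Next I would argue that the node operation must be (equivalent to) a Bell-basis measurement followed by local Pauli corrections — i.e. genuine entanglement swapping. Once $\rho_2$ is a maximally entangled pure state, the channel $A \to B$ induced by the node's action and Bob's correction is, for each measurement outcome, a composition: the $N_1$-side of $\rho_2$ is maximally entangled with Bob, so any CPTP map applied by the node on its $S_1$-half, followed by using the $\rho_2$-link, is equivalent to teleporting that processed half to Bob. The key point is that to retain $F^\star(\sigma) = (1+\mathcal{C}_1)/2$ — saturating the concurrence upper bound of Eq.~\eqref{OFEF-N-C} — the effective map $A\!:\!N_1 \to A\!:\!B$ must preserve the concurrence of $\rho_1$, and concurrence-preserving LOCC maps on two-qubit states that also preserve $F^\star$ at the bound are precisely local-unitary equivalences. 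I would make this rigorous by showing that any outcome with nonzero probability must yield $\sigma_k$ with $\mathcal{C}(\sigma_k) = \mathcal{C}_1$ and $F^\star(\sigma_k)$ at its bound; averaging and using concavity/convexity properties of $F^\star$ and linearity of the relevant quantities then pins down each branch to be $\rho_1$ up to local unitaries, which is exactly what the standard entanglement-swapping protocol delivers.

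The main obstacle will be the second step: ruling out ``clever'' node strategies that are not literally a Bell measurement but still achieve the bound — for instance, probabilistic protocols where low-probability branches are discarded, or protocols that first locally filter $\rho_1$ at Alice's side before the swap. For the filtering concern, I would note that local filtering on a two-qubit state can only increase $F^\star$ at the cost of decreasing success probability, but since $\rho_1$ already sits at the concurrence bound $F^\star(\rho_1) = (1+\mathcal{C}_1)/2$ and concurrence cannot increase under LOCC, no deterministic gain is possible; and a probabilistic protocol that achieves higher fidelity only conditionally does not achieve $F^\star_{AB} = F^\star(\rho_1)$ in the (deterministic) sense of Eq.~\eqref{OFEF}. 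For the ``clever measurement'' concern, I would decompose an arbitrary node POVM into rank-one elements and show each induces, via the maximally entangled $\rho_2$, a local operation on $\rho_1$'s $N_1$-half composed with a teleportation isometry to Bob, so the whole family is subsumed by ``Alice and Bob apply local operations to a teleported copy of $\rho_1$'' — which is the content of E-SWAP being necessary. I expect to push the detailed case analysis into Appendix A as the paper indicates, keeping the main-text proof at the level of the monotonicity argument forcing $\mathcal{C}_2 = 1$ plus the structural claim that saturation of the bound leaves no room for anything but the standard swap.
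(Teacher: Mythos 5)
Your first paragraph is essentially the paper's proof: the paper also combines the Gour--Wootters-type result that no trace-preserving LOCC across $A\!-\!N_1\!-\!B$ can produce an $AB$ state with concurrence exceeding $\mathcal{C}_1\mathcal{C}_2$ with the Verstraete--Verschelde bound $F^\star \leq (1+\mathcal{C})/2$, to get $F^\star_{AB} \leq (1+\mathcal{C}_1\mathcal{C}_2)/2 \leq (1+\mathcal{C}_1)/2 = F^\star(\rho_1)$, with equality possible only if $\mathcal{C}_2 = 1$. (The paper additionally recalls the canonical form and partial-transpose characterization of states saturating $F^\star=(1+\mathcal{C})/2$, but the operative step is exactly your monotonicity argument; as in your write-up, it tacitly uses $\mathcal{C}_1>0$, i.e.\ $\rho_1$ entangled.) The paper stops there: ``entanglement swapping is necessary'' is interpreted as ``the free segment must carry a maximally entangled state,'' which is what contrasts with Theorem~1, where strictly non-maximal resources suffice.

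Your second step --- that the node operation must literally be a Bell-basis measurement with Pauli corrections --- is not part of the paper's proof, and as sketched it has a gap: saturation of $\sum_k p_k F^\star(\sigma_k) = (1+\mathcal{C}_1)/2$ together with $F^\star(\sigma_k)\leq(1+\mathcal{C}(\sigma_k))/2$ and $\sum_k p_k\,\mathcal{C}(\sigma_k)\leq \mathcal{C}_1\mathcal{C}_2$ only forces each branch to saturate the Verstraete bound and the \emph{average} concurrence to equal $\mathcal{C}_1$; it does not pin down each $\sigma_k$ to be locally unitarily equivalent to $\rho_1$, and the claim that ``concurrence-preserving LOCC maps preserving $F^\star$ at the bound are precisely local-unitary equivalences'' is asserted rather than proved. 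Since the lemma, as the paper proves and uses it, only requires the $\mathcal{C}_2=1$ conclusion, your first paragraph already suffices; if you want the stronger structural statement about the node measurement, it needs a genuinely new argument.
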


Interestingly, as demonstrated in the subsequent theorem, there exists a nontrivial set of noisy entangled states for which non-maximally entangled states and non-maximal entangled measurements suffice to achieve the upper bound for $F^\star$. Consider a two-dimensional subspace $\mathcal{S}$ in $\mathcal{H}_A \otimes \mathcal{H}_B$ for a two-qubit composite system, spanned by a product state $\mathrm{\ket{P}}:=\ket{\psi}\ket{\phi}$ and its orthogonal entangled state $\ket{\zeta(\delta)}:=\sqrt{\delta}\ket{\psi}\ket{\phi^{\perp}}+\sqrt{1-\delta}\ket{\psi^{\perp}}\ket{\phi}$, where $\delta \in \left[1/2,1\right)$ is its Schmidt coefficient. We define a noisy state $\rho 
 (p,\delta)=p\mathrm{P}+(1-p)\zeta (\delta) \in \mathcal{S}$, where $\mathrm{P}:=\mathrm{\ket{P}}\mathrm{\bra{P}},~ \zeta (\delta):=\ket{\zeta (\delta)}\bra{\zeta (\delta)},$ and $p \in \left[0,1\right]$.
Now, let's introduce an arbitrary two-qubit pure NMES, $\ket{\psi(\alpha)}:= \{\sqrt{\alpha}, \sqrt{1-\alpha}\}$ where $\alpha$ is the Schmidt coefficient, will be utilized in the free segments. Additionally, we define a fine-grained two-qubit projective entangled measurement $M (\beta)\equiv \{ \ket{\phi_i}\bra{\phi_i}\}^4_{i=1}$, where $\{\sqrt{\beta}, \sqrt{1-\beta}\}$ are the Schmidt coefficients of all $\ket{\phi_i}$. Without loss of generality consider that $\alpha, \beta \in [1/2,1)$.
\begin{theorem}\label{theorem1}
    In a single-node scenario, suppose noisy segment $S_1$ has a mixed state $\rho (p,\delta)$. Meanwhile, the free segment $S_2$ shares an arbitrary pure state $\ket{\psi(\alpha)}$, and node $N_1$ performs a measurement $M(\beta)$. We observe that within a certain range of the parameters $\lbrace p, \delta,\alpha,\beta\rbrace$, it is indeed possible to establish $F^\star_{AB}=F^\star \left[\rho(p,\delta)\right]$ between the sender and receiver. 
\end{theorem}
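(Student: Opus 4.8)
The plan is to compute $F^\star[\rho(p,\delta)]$ together with the trace-preserving LOCC that attains it, to write the end-to-end state of the proposed protocol (conditioned on the node's outcome) as an explicit function of $(\alpha,\beta)$, and to identify the region of $\{p,\delta,\alpha,\beta\}$ in which the protocol implements that optimal LOCC. For the first part I would exploit that $\rho(p,\delta)$ lives in $\mathcal S=\mathrm{span}\{\ket{P},\ket{\zeta(\delta)}\}$: fixing Schmidt bases so $\ket{P}$ and $\ket{\zeta(\delta)}$ become computational-basis vectors, one checks that the family $\{\rho(p,\delta)\}$ is stable under local filtering (a product vector stays product), so every filtered state is again of the form $\rho(p',\delta')$ up to local unitaries, and its bare fully entangled fraction $F(\rho(p',\delta'))$ is an elementary one-parameter maximisation over maximally entangled vectors. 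Feeding this into the known closed form for the two-qubit OFEF (equivalently, the Verstraete--Verschelde Lorentz normal form) yields $g(p,\delta):=F^\star[\rho(p,\delta)]$ together with an optimal trace-preserving LOCC of a fixed shape --- a filtering measurement on the node-side qubit, local-unitary corrections, and a prescribed disposition of the low-fidelity branch; one should verify that this shape is genuinely optimal (two-sided filtering does not help, and the failure branch cannot be improved).

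Next I would compute the swap. Put $\rho_1=\rho(p,\delta)$ on $A$--$a$ with $a$ at $N_1$ and $\ket{\psi(\alpha)}$ on $b$--$B$ with $b$ at $N_1$, all in the computational basis. Writing $\ket{\psi(\alpha)}=(I\otimes D_\alpha)\ket{\Omega}_{bB}$ with $D_x=\mathrm{diag}(\sqrt{x},\sqrt{1-x})$ and $\ket{\Omega}=\ket{00}+\ket{11}$, and expressing each measurement vector $\ket{\phi_i}$ as a Pauli-dressed $\ket{\Omega}$ filtered on the $a$-qubit by $D_\beta$ or its reflection (the reflection is forced: four orthonormal vectors cannot share the same Schmidt orientation when $\beta\neq\tfrac12$), I would use the standard ricochet identity for $\ket{\Omega}$ --- which threads the $a$-register of $\rho_1$ onto $B$ with no transposition --- to obtain, for outcome $i$ and after the natural Pauli correction by Bob, $\sigma^{(i)}_{AB}\propto(I_A\otimes\Gamma_i)\,\rho_1\,(I_A\otimes\Gamma_i)^{\dagger}$ with $\Gamma_i$ a diagonal matrix built from $\alpha,\beta$ (e.g.\ $\Gamma=\mathrm{diag}(\sqrt{\alpha\beta},\sqrt{(1-\alpha)(1-\beta)})$, up to the obvious substitutions across the four outcomes). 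Crucially, because the node's measurement is complete, this four-branch process is itself trace-preserving: there is no discarded branch, so each $\sigma^{(i)}_{AB}$ is a genuine full-rank-filtered copy of $\rho_1$, not a post-selection.

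The matching step is then to show that for $(\alpha,\beta)$ in a suitable window the concatenation ``swap, then outcome-dependent local polishing'' reproduces exactly the optimal LOCC of the first part, giving $F^\star_{AB}=g(p,\delta)$; the converse $F^\star_{AB}\le\min\{g(p,\delta),\,F^\star(\ketbra{\psi(\alpha)})\}=g(p,\delta)$ follows from the monotone bound~\eqref{min} once $\alpha$ is close enough to $\tfrac12$ that $\ket{\psi(\alpha)}$ is not the bottleneck. Intersecting the two constraints on $(\alpha,\beta)$ with $\alpha,\beta\in[1/2,1)$ produces the advertised region, which is nonempty with $\alpha,\beta$ strictly inside $(\tfrac12,1)$ precisely when the optimal LOCC for $\rho(p,\delta)$ involves nontrivial filtering --- i.e.\ away from the case $F^\star=(1+\mathcal C_1)/2$ already covered by Lemma~\ref{Lemma1} --- and which widens as $\rho(p,\delta)$ is made noisier, exactly the announced resource saving.

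The main obstacle I anticipate is precisely this last step. A non-maximal projective entangled measurement necessarily yields two ``reflected'' families of outcomes, so the induced filters $\Gamma_i$ come in two orientations, and one must check that all of them, together with the free local operations available afterwards, can be polished --- as trace-preserving LOCC, with their failure branches correctly accounted for --- up to the single optimal LOCC identified above; a reflected filter that lowered the bare FEF would not by itself be fatal (since it is invertible as a stochastic operation), but closing the argument as a \emph{deterministic} protocol is where the rigid structure of $\rho(p,\delta)$ (product noise plus one orthogonal entangled vector, $\delta\ge\tfrac12$) and the interplay among $\alpha$, $\beta$ and the optimal filter strength must be exploited, and it is what carves out the explicit parameter region. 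Bookkeeping the $Z$-type corrections so that the $\rho(p,\delta)$ form is preserved along the way is the accompanying technical nuisance, and certifying honestly the shape and value of the optimal OFEF protocol for $\rho(p,\delta)$ (rather than settling for a lower bound) is the secondary hurdle.
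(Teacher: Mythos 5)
Your skeleton parallels the paper's actual route more than you may expect: the paper also starts from the closed-form two-qubit OFEF of the family (Eq.~(\ref{a1})), also uses the fact that each post-measurement state of the swap is again a rank-two mixture of a product vector with an orthogonal entangled vector (your ``filtered member of the same family'' observation, which is correct), and also gets the converse from monotonicity as in Eq.~(\ref{min}). But there is a genuine gap: the entire content of the theorem is the step you explicitly defer as ``the main obstacle''. One must evaluate $F^{\star}(\rho_{AB,\phi_i})$ for all four outcomes with the piecewise formula, identify the regime in which \emph{every} branch falls into the low-entanglement case of that formula --- the four inequalities (\ref{a18})--(\ref{a21}), equivalently the explicit window (\ref{a25}) --- and then verify the algebraic identity $\sum_i p_i\,F^{\star}(\rho_{AB,\phi_i})=\tfrac12\left(1+\delta(1-\delta)(1-p)^2/p\right)=F^{\star}\!\left[\rho(p,\delta)\right]$ as in Eq.~(\ref{a26}). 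Without exhibiting that region and checking that averaged identity, no ``certain range of $\{p,\delta,\alpha,\beta\}$'' has been produced and no saturation has been proved; your proposal stops where the proof begins in earnest.

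Beyond incompleteness, the mechanism you propose for closing the argument --- show that the swap plus outcome-dependent ``polishing'' reproduces exactly the optimal LOCC for $\rho(p,\delta)$ --- is not how the saturation works and cannot work branchwise. For $\alpha,\beta\neq\tfrac12$ the four conditional states are inequivalent, strictly noisier members of the family, and their individual OFEFs coincide with $F^{\star}[\rho(p,\delta)]$ only in the E-SWAP case; once the probability-weighted average equals $F^{\star}[\rho(p,\delta)]$, some branches necessarily sit strictly below it, and no subsequent local processing (which cannot increase OFEF) can lift those branches back up. What saturates the monotonicity bound is an averaging identity, valid precisely when each branch's entangled weight is small enough relative to its product-noise weight that its OFEF takes the form $\tfrac12\left(1+\mathcal{C}_i^{2}/(4q_i)\right)$, with $\mathcal{C}_i$ and $q_i$ the concurrence and product-state weight of the $i$-th conditional state; this also tacitly requires $\rho(p,\delta)$ itself to lie in the $\mathcal{C}\leq 2p$ branch of Eq.~(\ref{a1}), consistent with Lemma~\ref{Lemma1} excluding the other branch. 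Chasing a branchwise match with ``the'' optimal LOCC would have you searching for a deterministic protocol that does not exist; the correct closing move is the explicit case analysis and averaged identity above, which is exactly what carves out the advertised parameter region.
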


 A detailed proof is deferred to Appendix B \cite{supp}. It's worth highlighting that when $\alpha=\beta=1/2$, this leads to the standard E-SWAP protocol, and then any arbitrary state shared in $S_1$ can be readily prepared between the sender and receiver.  However, Theorem \ref {theorem1} suggests that for different choices of $\alpha$ and $\beta$, the exact state may not be preparable and would inherently be noisier than $\rho (p,\delta)$. Nevertheless, it remains feasible to achieve equal fidelity for some specific range of $\lbrace p,\delta,\alpha,\beta\rbrace$. For instance, if we choose non-maximal resources ($\alpha= 0.75, \beta=0.60$) for $\delta=0.5 ~ (\delta = 0.7)$, we get $F^\star_{AB}=F^\star \left[\rho(p,\delta)\right]>1/2$ for $0.51\leq p< 1 (0.49 \leq p < 1)$. This eventually implies that the proposed protocol consumes a lesser amount of resources than the standard protocol.
\par
Given our understanding of the necessity of quantum repeater scenarios, it's vital to establish the equivalent result in an $n$-node scenario. In the subsequent corollary, we will demonstrate that the above studies can be readily extended when one of the end segments ($S_1$ or $S_{n+1}$) is associated with the noisy state, which effectively boils down to a {\it single}-node scenario.

\begin{corollary}\label{corollary}
Consider the free segments $\{S_i\}^{n+1}_{i=2}$, where $S_i$ shares the state $\ket{\psi_i}\equiv \{\sqrt{\alpha_i}, \sqrt{1-\alpha_i}\}$ with $\alpha_i \geq 1/2 ~,~ \forall i$. It is always possible to find noisy states $\rho (p, \delta)=p \mathrm{P}+(1-p)\zeta (\delta)$ shared in the first segment $S_{1}$, with non-vanishing region of $\lbrace p, \delta \rbrace$, such that $F^{\star}_{AB}=F^{\star}\left[\rho (p,\delta)\right]$ can be achieved between the sender and receiver.
\end{corollary}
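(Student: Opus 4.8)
The plan is to collapse the whole chain of free segments into a single effective free segment, thereby reducing the $n$-node configuration to the single-node setting of Theorem~\ref{theorem1}, from which the claimed $\{p,\delta\}$ region follows at once. First I would perform, at each intermediate free node $N_2,\dots,N_n$, a maximally entangled (Bell-basis) measurement and broadcast the outcomes. Iterating the standard entanglement-swapping identity for pure states, every outcome string $\vec{k}$ collapses the product chain $S_2,\dots,S_{n+1}$ into a single two-qubit \emph{pure} state between $N_1$ and $B$, with larger Schmidt coefficient $\lambda_{\vec{k}}$. Since each $\alpha_i\in[1/2,1)$ is genuinely entangled, every outcome leaves $N_1$ and $B$ entangled, so $\lambda_{\vec{k}}\in[1/2,1)$ for all $\vec{k}$; a short majorization check moreover shows that the ``aligned'' ($\Phi^{+}$-type) outcome is the least entangled of all, giving $\alpha_{\mathrm{eff}}:=\max_{\vec{k}}\lambda_{\vec{k}}$ with $\alpha_{\mathrm{eff}}/(1-\alpha_{\mathrm{eff}})=\prod_{i=2}^{n+1}\alpha_i/(1-\alpha_i)\ge 1$, and hence $\alpha_{\mathrm{eff}}\in[1/2,1)$.

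The next step is to wash out the outcome dependence. Because $\lambda_{\vec{k}}\le\alpha_{\mathrm{eff}}$ for every $\vec{k}$, each $\ket{\psi(\lambda_{\vec{k}})}$ is at least as entangled as $\ket{\psi(\alpha_{\mathrm{eff}})}$, so by Nielsen's majorization theorem there is a \emph{deterministic} LOCC conversion $\ket{\psi(\lambda_{\vec{k}})}\to\ket{\psi(\alpha_{\mathrm{eff}})}$, which I would implement as an outcome-conditioned local operation at $N_1$ (or $B$). After this step, $N_1$ and $B$ deterministically share the single fixed NMES $\ket{\psi(\alpha_{\mathrm{eff}})}$ with $\alpha_{\mathrm{eff}}\in[1/2,1)$, while $A$ and $N_1$ still hold the untouched state $\rho(p,\delta)$ in $S_1$; all these operations act on qubits disjoint from $S_1$ and hence commute with the remainder of the protocol.

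At this point the configuration is \emph{exactly} the single-node scenario of Theorem~\ref{theorem1}, with effective free-segment parameter $\alpha=\alpha_{\mathrm{eff}}$, a quantity fixed once the data $\{\alpha_i\}$ are given. I would then let $N_1$ perform $M(\beta)$ with $\beta$ chosen as in Theorem~\ref{theorem1} for this value of $\alpha$, and apply the associated outcome-dependent local corrections at $A$ and $B$; this yields a two-qubit state $\sigma_{AB}$ with $F^{\star}(\sigma_{AB})=F^{\star}[\rho(p,\delta)]$ for every $(p,\delta)$ in the non-empty region supplied by that theorem. Since $F^{\star}_{AB}$ can never exceed the OFEF of any individual segment, $F^{\star}_{AB}\le F^{\star}[\rho(p,\delta)]$ on that region, which automatically lies where $\rho(p,\delta)$ is the bottleneck because $\alpha_{\mathrm{eff}}\ge\alpha_i$ forces $F^{\star}[\ket{\psi(\alpha_{\mathrm{eff}})}]\le F^{\star}[\ket{\psi_i}]$ (cf. Eq.~\eqref{min}); hence the protocol forces equality on a non-vanishing region, proving the corollary. (If some $\alpha_i=1$ the free chain is disconnected and $F^{\star}_{AB}=1/2$, so the statement then only asks for noisy states with $F^{\star}[\rho(p,\delta)]=1/2$, which exist near $p=1$; one may thus assume $\alpha_i\in[1/2,1)$.)

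The only genuinely delicate point is the deterministic reduction of the free chain. Entanglement swapping of non-maximal pure states is intrinsically outcome-dependent, so the argument must establish that (i) every swap outcome yields a state no less entangled than the common target $\ket{\psi(\alpha_{\mathrm{eff}})}$, which is what legitimizes the uniform Nielsen downgrading, and (ii) the induced $\alpha_{\mathrm{eff}}$ falls inside the range for which Theorem~\ref{theorem1} already delivers a non-vanishing $\{p,\delta\}$ region. Granting Theorem~\ref{theorem1}, item (ii) is immediate, while item (i) reduces to the elementary inequality $\alpha_1\alpha_2/[\alpha_1\alpha_2+(1-\alpha_1)(1-\alpha_2)]\ge\alpha_1(1-\alpha_2)/[\alpha_1(1-\alpha_2)+(1-\alpha_1)\alpha_2]$ valid for $\alpha_1,\alpha_2\ge 1/2$, together with an induction along the chain; everything else is bookkeeping of the classical communication and the local corrections.
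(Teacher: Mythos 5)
Your proposal is correct, and it follows the same high-level strategy as the paper (collapse the free chain $S_2,\dots,S_{n+1}$ into one effective pure NMES between $N_1$ and $B$, then invoke the single-node result of Theorem~\ref{theorem1}), but the collapse step is genuinely different. The paper uses the deterministic RPBES protocol of Gour \emph{et al.} to concentrate the chain into a \emph{fixed} pure state whose concurrence is the product $\prod_{i}\mathcal{C}(\psi_i)$ — the optimal deterministic value — giving the effective Schmidt coefficient $\alpha'_n=\bigl(1+\sqrt{1-\prod_i\mathcal{C}^2(\psi_i)}\bigr)/2$ and the condition of Eq.~(\ref{b4}). You instead perform Bell swaps at the intermediate nodes and then use Nielsen majorization to downgrade every outcome branch to the least-entangled (aligned) branch, which yields $\alpha_{\mathrm{eff}}$ with $\alpha_{\mathrm{eff}}/(1-\alpha_{\mathrm{eff}})=\prod_i\alpha_i/(1-\alpha_i)$; your worst-branch argument is sound (for any outcome string the Schmidt ratio is $\prod_i[\alpha_i/(1-\alpha_i)]^{\pm1}$, maximized by the aligned choice), so the deterministic downgrading is legitimate. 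Since $\alpha_{\mathrm{eff}}<1$ whenever all $\alpha_i<1$, the resulting constraint $\alpha_{\mathrm{eff}}<p^2/[p^2+\delta(1-\delta)(1-p)^2]$ still carves out a non-vanishing $\{p,\delta\}$ region, so the corollary as stated follows. What your route loses is quantitative optimality: one checks $\alpha_{\mathrm{eff}}\geq\alpha'_n$ (strictly, unless all $\alpha_i=1/2$), e.g.\ for two segments with $\alpha_i=0.75$ your effective concurrence is $0.6$ versus the paper's $0.75$, so your admissible $\{p,\delta\}$ region is strictly smaller; this does not affect the existence claim, but the paper's larger region is what feeds the subsequent $R_v$ bounds and Observations~\ref{obs1}–\ref{obs2}, which would be weakened under your reduction. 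Two cosmetic points: the inequality you quote at the end should be stated with $\max\{\alpha_1(1-\alpha_2),(1-\alpha_1)\alpha_2\}$ in the numerator (the orientation depends on whether $\alpha_1\gtrless\alpha_2$, though the conclusion is unchanged by symmetry), and the parenthetical about $\alpha_i=1$ is not quite right as phrased, since $F^{\star}[\rho(p,\delta)]=1/2$ only at the single point $p=1$; this corner is anyway excluded because the paper takes $\alpha_i\in[1/2,1)$.
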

From this, we observe an intriguing trend: for a constant NMES $(\alpha_i = \alpha > 1/2)$ shared across all free segments, the region of noise parameters $\{p, \delta\},$ where $F^{\star}_{AB} = F^{\star}\left[\rho(p, \delta)\right]$, shrinks as the number of nodes increases. In the limit of many nodes, the noisy state $\rho$ approaches a separable state (see proof of Corollary~\ref{corollary} in Appendix C). However, irrespective of the number of nodes, can one still achieve end-to-end teleportation fidelity identical to the noisy state $\rho(p, \delta)$ for some fixed noise parameters $\{p, \delta\}$? The E-SWAP protocol offers a straightforward solution, consuming 1-ebit of entanglement per free segment to meet this fidelity requirement for any noisy state. Interestingly, as highlighted in Corollary \ref{corollary}, for specific ranges of $\{p, \delta\}$, the same fidelity can be achieved using strictly NMES in the free segments, thereby significantly reducing entanglement consumption. To quantify the resource savings in our protocol for an \( n \)-node setup (with \( n \) free segments), we define 
$R_v = n(1 - \mathcal{C})$,  
where \( 1 - \mathcal{C} \) denotes the entanglement saved per free segment when using a NMES with concurrence \( \mathcal{C} \) instead of a MES with concurrence \( 1 \). For an intuitive picture, we refer the reader to  Appendix I \cite{supp}. Since concurrence is equivalent to the entanglement of formation for arbitrary two-qubit states, it serves as an ideal metric to quantify entanglement consumption. Notably, our findings reveal that the saved resource $R_v$ is strictly positive for all $n$. Two key observations emerge from this analysis (detailed proofs in Appendix D \cite{supp}):

\begin{observation} \label{obs1}
    As the number of nodes $n\rightarrow \infty$, the saved resource $R_v$ tends toward a constant value for a given noisy state $\rho (p, \delta)$, as shown in Fig \ref{Fig:2a}.
\end{observation}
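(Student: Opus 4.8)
The plan is to turn Corollary~\ref{corollary} into an explicit one-parameter family of end-to-end states and then analyze the boundary of the admissible region asymptotically. First I would fix the $n$-node configuration used in the corollary: every free segment $\{S_i\}_{i=2}^{n+1}$ carries the same NMES $\ket{\psi(\alpha)}$ and every node $\{N_i\}_{i=1}^{n}$ applies the same fine-grained measurement $M(\beta)$, while $S_1$ holds $\rho(p,\delta)$. Iterating the single-node construction of Theorem~\ref{theorem1} then yields an end-to-end two-qubit state $\sigma_n$ whose fully entangled fraction, after the optimal local corrections, is a closed-form function of $\{p,\delta,\alpha,\beta,n\}$. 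The requirement $F^\star(\sigma_n)=F^\star[\rho(p,\delta)]$ becomes a single scalar constraint; optimizing over $\beta$ leaves a constraint on $\alpha$ alone, whose solution for fixed $(p,\delta)$ is an interval $[1/2,\alpha_n^\star]$. The object to control is $R_v(n)=n\bigl(1-\mathcal{C}(\alpha_n^\star)\bigr)$ with $\mathcal{C}(\alpha)=2\sqrt{\alpha(1-\alpha)}$, i.e.\ the saving obtained by using, in each free segment, the least-entangled NMES still compatible with the target fidelity.

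Second, I would extract the near-maximal asymptotics of $\alpha_n^\star$. Writing $\alpha=\tfrac12+\epsilon$, each free-segment NMES has $\mathcal{C}(\epsilon)=\sqrt{1-4\epsilon^2}=1-2\epsilon^2+O(\epsilon^4)$, so the entanglement relayed across the $n$ swaps is degraded, to leading order, by the product $\mathcal{C}(\epsilon)^n=\exp\!\bigl(-2n\epsilon^2+O(n\epsilon^4)\bigr)$. Keeping $F^\star(\sigma_n)$ pinned at the fixed value $F^\star[\rho(p,\delta)]>1/2$ forces this factor to stay above an $n$-independent threshold determined by $(p,\delta)$ (and by the optimal, also $n$-independent, choice of $\beta$), which balances only if $n(\epsilon_n^\star)^2\to c(p,\delta)$ for some strictly positive constant $c(p,\delta)$; equivalently $\alpha_n^\star=\tfrac12+\sqrt{c(p,\delta)/n}+o(n^{-1/2})$. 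This leading balance I would verify directly from the scalar constraint rather than from the heuristic, checking that the $O(\epsilon^4)$ terms, the deviation $\beta-\tfrac12$, and the LOCC optimization implicit in $F^\star$ all contribute at lower order and so do not move the limit.

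Substituting back,
\begin{align}
R_v(n)=n\bigl(1-\mathcal{C}(\alpha_n^\star)\bigr)=n\bigl(2(\epsilon_n^\star)^2+O((\epsilon_n^\star)^4)\bigr)=2\,n(\epsilon_n^\star)^2+o(1)\;\xrightarrow[n\to\infty]{}\;2\,c(p,\delta),
\end{align}
a constant fixed by the noisy state $\rho(p,\delta)$ alone, which is exactly the assertion; the monotonic decrease of $\alpha_n^\star$ in $n$ (adding a further NMES swap can only tighten the constraint) is a useful consistency check and accounts for the shape of the curve in Fig.~\ref{Fig:2a}.

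The main obstacle is the second step: pinning down the precise $1/\sqrt{n}$ scaling of the boundary $\alpha_n^\star$. This hinges on showing that the leading-order degradation of $F^\star(\sigma_n)$ under iterated swapping is governed by the product of free-segment concurrences, so that the noise contributed by $\rho(p,\delta)$ and by $M(\beta)$ enters only through $n$-independent prefactors, and on confirming that the target $F^\star[\rho(p,\delta)]$ --- itself a nontrivial piecewise function of $(p,\delta)$ --- is a strictly positive constant for the chosen noise parameters. The accompanying bookkeeping, in particular verifying that a single value of $\beta$ is simultaneously optimal at every node so that the problem genuinely reduces to one scalar constraint, is the delicate part underlying this estimate.
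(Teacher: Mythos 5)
Your core asymptotic mechanism is the right one and lands on the paper's constant, but your reduction to the $n$-node constraint differs from the paper's and creates difficulties the paper never faces. The paper does not iterate the $M(\beta)$ construction at every node: in Appendix D it simply invokes Corollary~\ref{corollary}, whose proof composes the $n$ identical free segments into a single effective NMES via the deterministic RPBES protocol (concurrence multiplies exactly, $\mathcal{C}(\Psi_n)=\mathcal{C}^n$) and performs one Bell measurement ($\beta=1/2$) at $N_1$. This makes your ``product of free-segment concurrences must exceed an $n$-independent threshold'' exact rather than a leading-order heuristic: the admissibility condition is $\frac{1+\sqrt{1-\mathcal{C}^{2n}}}{2}\leq\frac{p^2}{p^2+\delta(1-\delta)(1-p)^2}$, i.e.\ $\mathcal{C}^{2n}\geq 1-K^2$ with $K=\frac{p^2-\delta(1-\delta)(1-p)^2}{p^2+\delta(1-\delta)(1-p)^2}$, and it involves no $\beta$ at all — so the obstacle you flag (a single $\beta$ simultaneously optimal at every node, and a closed form for $F^\star(\sigma_n)$ under iterated non-maximal measurements) simply does not arise; your iterated-$M(\beta)$ protocol would in fact have to be shown to reproduce the product law branch by branch, which is nontrivial and unnecessary. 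From the exact constraint the paper avoids the $\epsilon$-expansion altogether: $R_v\leq n\bigl(1-(1-K^2)^{1/(2n)}\bigr)\rightarrow -\tfrac12\log(1-K^2)$ as $n\to\infty$. Your expansion $\alpha_n^\star=\tfrac12+\epsilon_n^\star$, $n(\epsilon_n^\star)^2\to c(p,\delta)$, $R_v\to 2c(p,\delta)$ gives the same answer, since $2c=-\tfrac12\log(1-K^2)=-\tfrac12\log\bigl(\tfrac{4\,\delta(1-\delta)(1-p)^2p^2}{\{\delta(1-\delta)(1-p)^2+p^2\}^2}\bigr)$, which is exactly the paper's limiting value $-\tfrac12\log\bigl(\tfrac{\delta(1-\delta)(1-p)^2p^2}{\{\delta(1-\delta)(1-p)^2+p^2\}^2}\bigr)-\log 2$. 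So: correct limit and essentially the same asymptotic balance, but you should base the constraint on Corollary~\ref{corollary} (RPBES composition plus a single Bell measurement) instead of an iterated measurement protocol whose analysis you have not carried out.
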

\begin{observation} \label{obs2}
    For a given number of nodes $n$, as the noisy state $\rho(p, \delta)$ tends towards a separable state, $R_v$ increases accordingly, as illustrated in Fig \ref{Fig:2b}. 
\end{observation}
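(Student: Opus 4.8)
\emph{Proof strategy for Observation~\ref{obs2}.} The plan is to turn the achievability criterion of Corollary~\ref{corollary} into a closed-form lower bound on the entanglement that the free segments must carry, and then analyze how that bound reacts to noise. Specializing the $n$-node analysis to a uniform free-segment resource $\ket{\psi(\alpha)}$ of concurrence $\mathcal{C}=2\sqrt{\alpha(1-\alpha)}$ and a suitably tuned node measurement $M(\beta)$, the upper bound $F^{\star}_{AB}\le\min\{F_{S},F_{C}\}$ of Eq.~\eqref{min} is reached exactly when its $F_{C}$ component does not fall below the target $T(p,\delta):=F^{\star}[\rho(p,\delta)]$, i.e.\ when $\tfrac12\bigl(1+\mathcal{C}^{n}\mathcal{C}_{1}\bigr)\ge T(p,\delta)$, where $\mathcal{C}_{1}=\mathcal{C}[\rho(p,\delta)]=2(1-p)\sqrt{\delta(1-\delta)}$ (the $F_{S}$ component is then automatic, since $\mathcal{C}\ge\mathcal{C}^{n}\mathcal{C}_{1}\ge 2T-1$). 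Solving for $\mathcal{C}$, the most economical admissible free-segment state has concurrence $\mathcal{C}^{\star}_{n}(p,\delta)=x(p,\delta)^{1/n}$, with $x(p,\delta):=\bigl(2T(p,\delta)-1\bigr)/\mathcal{C}_{1}(p,\delta)\in(0,1]$ (membership following from $T\le(1+\mathcal{C}_{1})/2$ by Eq.~\eqref{OFEF-N-C} and $T>1/2$ throughout the admissible range, Appendix~B). Hence the maximal saving is $R_{v}(n;p,\delta)=n\bigl(1-x(p,\delta)^{1/n}\bigr)$; since $t\mapsto n\bigl(1-t^{1/n}\bigr)$ is strictly decreasing on $(0,1]$ and $n$ is fixed in Observation~\ref{obs2}, it remains only to show that $x(p,\delta)$ decreases as $\rho(p,\delta)$ is driven toward separability. (The same expression also yields Observation~\ref{obs1} at once: $n\bigl(1-x^{1/n}\bigr)\to\ln(1/x)$ as $n\to\infty$, a constant fixed by $\rho$.)

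To control $x(p,\delta)$ I would insert the closed form of $T(p,\delta)$ computed in Appendix~B. Since $\rho(p,\delta)$ is separable only on the boundary $\mathcal{C}_{1}=0$ (i.e.\ $p\to 1$ at fixed $\delta$, or $\delta\to 1$ at fixed $p$), it suffices to differentiate $x=(2T-1)/\mathcal{C}_{1}$ and verify that $\partial_{p}x<0$ and $\partial_{\delta}x<0$ on the admissible range, so that moving either parameter toward its separable value lowers $x$; by the strict monotonicity of $t\mapsto n(1-t^{1/n})$ this gives $\partial R_{v}/\partial(\text{noise})>0$. The underlying mechanism is that $F^{\star}[\rho(p,\delta)]$ relaxes to the classical value $1/2$ at least as fast as $\mathcal{C}_{1}$ vanishes---any fidelity excess over $1/2$ has to be distilled by a local filtering whose success weight itself scales with the residual entanglement---so in fact $x(p,\delta)\to 0$ and $R_{v}\to n$ in the separable limit, though for the Observation only the monotonicity of $x$ is needed. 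Read the other way, $\mathcal{C}^{\star}_{n}=x^{1/n}\to 1$ as $n\to\infty$ at fixed $\rho$ reproduces the earlier statement that, for a fixed NMES, the feasible region in $\{p,\delta\}$ shrinks toward separability as the chain lengthens.

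The main obstacle is not this calculus but the structural input behind it (Appendices~B and~C): one must show the protocol actually reaches $F^{\star}_{AB}=T(p,\delta)$ over the whole range of $\alpha$ with $F_{C}\ge T$, not merely at the E-SWAP point $\alpha=\beta=\tfrac12$, so that $\mathcal{C}^{\star}_{n}$ is indeed given by the formula above and not by a stricter condition. This is where the geometry of the subspace $\mathcal{S}$---the product vector $\ket{P}$ together with the orthogonal entangled vector $\ket{\zeta(\delta)}$---matters: the ``error'' inflicted by the non-maximal free-segment states and the non-maximal node measurement lands along directions already carrying noise in $\rho(p,\delta)$, so the entangled basis $M(\beta)$ and the endpoint filtering can be retuned with $(\alpha,n)$ to keep the optimal fully entangled fraction pinned exactly at $T(p,\delta)$, while the $n-1$ intermediate swaps compose multiplicatively into the factor $\mathcal{C}^{n}$. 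Granting this, the remainder is the elementary analysis of $x(p,\delta)$ together with the two limits $\mathcal{C}_{1}\to 0$ and $n\to\infty$ sketched above.
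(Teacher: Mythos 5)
Your overall architecture --- extract the minimal admissible free-segment concurrence from an achievability condition, set $R_v=n(1-\mathcal{C}^{\min})$, and then show that this threshold decreases as $\rho(p,\delta)$ approaches separability --- mirrors the paper's Appendix D. The genuine gap is the achievability criterion you feed into it. You take attainability of $F^\star_{AB}=T(p,\delta)$ to hold exactly when the concurrence component of the upper bound \eqref{min} satisfies $\tfrac12(1+\mathcal{C}^n\mathcal{C}_1)\ge T$, i.e. $\mathcal{C}^n\ge x:=(2T-1)/\mathcal{C}_1=\mathcal{C}_1/(4p)$. That condition is only necessary; nothing in the paper (nor in your sketch) shows the protocol reaches the target throughout that region --- you explicitly defer this (``granting this''), but that is exactly where the proof content sits. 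The sufficient condition actually delivered by Theorem \ref{theorem1}/Corollary \ref{corollary} (Eqs.~\eqref{a25} and \eqref{b4}) is $\frac{1+\sqrt{1-\mathcal{C}^{2n}}}{2}\le\frac{p^2}{p^2+K}$ with $K=\delta(1-\delta)(1-p)^2$, i.e. $\mathcal{C}^n\ge 2p\sqrt{K}/(p^2+K)$, which in the admissible regime $K<p^2$ exceeds your threshold $x=\sqrt{K}/(2p)$ by the factor $4p^2/(p^2+K)>2$. Consequently your closed forms are quantitatively wrong relative to what is proved: $R_v=n\bigl(1-x^{1/n}\bigr)$ and the limit $\ln(1/x)$ overstate the saving by $\ln\frac{4p^2}{p^2+K}\ge\ln 2$ compared with the paper's bound $R_v\le n-n\bigl(1-\bigl(\tfrac{p^2-K}{p^2+K}\bigr)^2\bigr)^{1/2n}$ and its $n\to\infty$ limit. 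Moreover, Lemma \ref{Lemma1} and the two-noisy-segment example (Appendix G) warn precisely that saturation of the concurrence-product bound cannot be presumed when non-maximal resources are used, so the ``granted'' step is not a routine extension of Appendices B and C.

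The qualitative conclusion of Observation \ref{obs2} does survive once the correct threshold is used: writing $u=\sqrt{K}/p$, the minimal admissible concurrence obeys $\mathcal{C}^n\ge 2u/(1+u^2)$, which decreases monotonically to $0$ as the state is driven toward separability ($p\to1$ at fixed $\delta$, or $\delta\to1$ at fixed $p$), so the bound on $R_v$ increases toward $n$ at fixed $n$; this is exactly how the paper argues, via the Appendix D bound together with Fig.~\ref{Fig:2b}. So your final calculus step is fine in spirit, but it must be run on the condition \eqref{b4} coming from Corollary \ref{corollary}, not on the unproven assumption that the $F_C$ component of \eqref{min} is attainable whenever it is not violated.
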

\begin{figure}
     \centering
     \begin{subfigure}[b]{0.5\linewidth}
         \centering
         \includegraphics[width=0.95\linewidth]{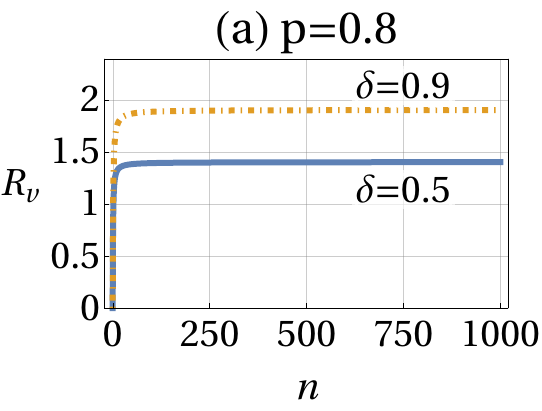}
         \captionlistentry{}
        \label{Fig:2a}
     \end{subfigure}%
     \begin{subfigure}[b]{0.5\linewidth}
        \centering
        \includegraphics[width=0.95\linewidth]{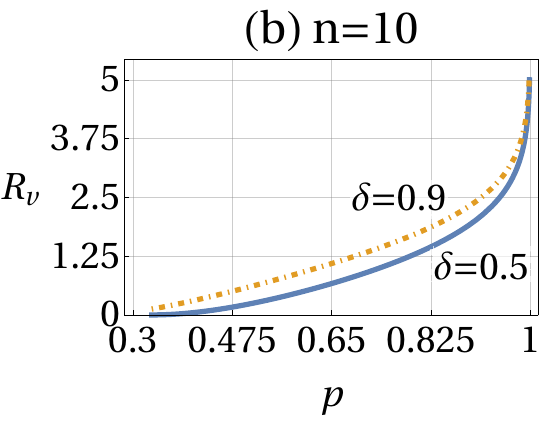}
        \captionlistentry{}
         \label{Fig:2b}
     \end{subfigure}
        \caption{(\textbf{a}) {\it $R_v$ vs. $n$ :} The saved resource remains positive for all \(n\) and approach saturation as the number of nodes increases. (\textbf{b})  {\it $R_v$ vs. $p$ :} Higher noise levels in a given repeater scenario result in greater resource savings under our protocol.}
\end{figure}
So far, we have explored scenarios where noise is localized at one of the end segments. However,  consider the segments $\{S_i\}^{n+1}_{i=1}$ share a composition of states $\{\{\psi(\alpha_i)\}^{i=m-1}_{i=1}, \rho_m, \{\psi(\alpha_i)\}_{i=m+1}^{i=n+1}\}$, where the $m^{th}$ segment is noisy and shares a fixed noisy state $\rho_m$, while the other segments are considered free, each with an arbitrary pure entangled state $\ket{\psi(\alpha_i)}$, with $\alpha_i\geq 1/2$. We demonstrate that this scenario effectively reduces to a {\it two}-node scenario for any value of $m\in \{2,\cdots,n\}$. By setting the $m^{th}$ segment as a reference, the left segments $\{S_i\}^{m-1}_{i=1}$ and the right segments $\{S_i\}_{i=m+1}^{i=n+1}$ can be composed into new pair of segments $S_1$ and $S_3$, respectively, while segment $S_2$ will be the noisy one in the {\it two}-node scenario. Precisely, the segments $\{S_i\}^3_{i=1}$ will have a composition of $\{\psi(\alpha^\prime_l),\rho_2,\psi(\alpha^\prime_r)\}$, where $\psi(\alpha^\prime_{l(r)})$ represents the effective pure entangled state at segment $S_1(S_3)$ and $\rho_2=\rho_m$. In the subsequent theorem, we will derive the condition $F^\star_{AB}=F^\star (\rho_2)$, with a detailed proof deferred to Appendix E \cite{supp}.

\begin{theorem}\label{theorem2}
 Let's assume a two-node scenario, where the intermediary segment $S_2$ shares a noisy state $\rho(p,\delta) = p ~\mathrm{P}+(1-p)\zeta (\delta) \in \mathcal{S}$, while the free segments ($i\in \{1,3\}$) share pure entangled states $\psi (\alpha^\prime_l)$ and $\psi (\alpha^\prime_r)$ respectively. We establish a nontrivial LOCC protocol that achieves $F^\star_{AB}=F^\star \left[\rho(p, \delta)\right]$ when the condition $\alpha^\prime_l \alpha^\prime_r/(1-\alpha^\prime_l)(1-\alpha^\prime_r) \leq p^2/\delta(1-\delta)(1-p)^2$ is satisfied.
\end{theorem}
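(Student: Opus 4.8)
The plan is to reduce the two-node problem to an effective single-node problem by first performing the entanglement swapping on one of the two free segments, and then invoking the machinery of Theorem~\ref{theorem1}. Concretely, I would first treat the left free segment $S_1$ together with the noisy segment $S_2$: Alice-node$_1$ share $\psi(\alpha'_l)$ and node$_1$-node$_2$ share $\rho(p,\delta)$. Performing a suitable fine-grained entangled measurement $M(\beta_1)$ at node$_1$ produces, conditioned on the outcome and up to local unitaries, an effective two-qubit state $\tilde{\rho}$ shared between Alice and node$_2$. The key algebraic observation I would establish is that $\tilde{\rho}$ again lies in a two-dimensional subspace of the same structural type $\mathcal{S}$ — spanned by a product state and an orthogonal entangled state with some Schmidt coefficient $\tilde{\delta}$ — and is of the form $\tilde{p}\,\tilde{\mathrm{P}} + (1-\tilde{p})\tilde{\zeta}(\tilde\delta)$, with $\tilde p, \tilde\delta$ explicit functions of $p,\delta,\alpha'_l$. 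This closure property is what makes the recursion work, and I would expect the new parameters to satisfy something like $\tilde p / \sqrt{\tilde\delta(1-\tilde\delta)} = (\alpha'_l/(1-\alpha'_l))^{1/2}\cdot p/\sqrt{\delta(1-\delta)}$ — i.e. the product of the relevant ratios is what propagates.

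Next I would apply Theorem~\ref{theorem1} (single-node version) to the pair consisting of the effective noisy state $\tilde\rho$ on the $A$–node$_2$ side and the remaining free segment $S_3$ sharing $\psi(\alpha'_r)$, with node$_2$ performing measurement $M(\beta_2)$. Theorem~\ref{theorem1} guarantees $F^\star_{AB} = F^\star(\tilde\rho)$ within a parameter range; the remaining task is to (i) compute $F^\star(\tilde\rho)$ and show it equals $F^\star[\rho(p,\delta)]$ — which should follow because $F^\star$ restricted to this family of states depends only on the combination $p^2/\delta(1-\delta)(1-p)^2$, an invariant preserved by the swap step up to the factor contributed by $\alpha'_l$ — and (ii) track the admissibility condition through the reduction. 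Chaining the single-node conditions from both applications and eliminating the intermediate measurement parameters $\beta_1,\beta_2$ should collapse to the stated inequality $\alpha'_l\alpha'_r/(1-\alpha'_l)(1-\alpha'_r) \leq p^2/\delta(1-\delta)(1-p)^2$.

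An alternative, and perhaps cleaner, route is to run both node measurements simultaneously: write the global state as a tensor product $\psi(\alpha'_l)\otimes\rho(p,\delta)\otimes\psi(\alpha'_r)$, apply $M(\beta_1)\otimes M(\beta_2)$ at the two nodes, and directly compute the resulting $4\times 4$ Alice–Bob state $\sigma$ conditioned on a fixed pair of outcomes. One then maximizes $F^\star(\sigma)$ over the accessible local corrections and over $\beta_1,\beta_2$, and checks the maximum equals $(1 + \mathcal{C}(\rho(p,\delta)))/2 = F^\star[\rho(p,\delta)]$ (using that $\rho(p,\delta)$ saturates the concurrence bound in Eq.~\eqref{OFEF-N-C}, which one verifies for this family). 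The feasibility of the required correction operations is exactly what imposes the inequality.

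The main obstacle I anticipate is step (i)–(ii) of the first approach — establishing the closure of the state family $\mathcal{S}$ under the entanglement-swapping map and correctly bookkeeping how $F^\star$, the concurrence, and the admissibility window transform. In particular, one must be careful that the effective state $\tilde\rho$ remains within the regime $\tilde\delta\in[1/2,1)$, $\tilde p\in[0,1]$ where Theorem~\ref{theorem1} applies, possibly after absorbing local unitaries/filters, and that the two single-node conditions are simultaneously satisfiable rather than mutually exclusive — it is the interplay of these two windows that must be shown to be exactly the quadratic inequality in the statement, rather than something strictly stronger.
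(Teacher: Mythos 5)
Your first route is, in architecture, the same proof the paper gives: after the RPBES collapse to the three-segment configuration $\psi(\alpha'_l)\otimes\rho(p,\delta)\otimes\psi(\alpha'_r)$, the paper performs Bell measurements sequentially at the two nodes, verifies exactly the ``closure'' property you anticipate (each of the sixteen conditional Alice--Bob states is again a mixture of a pure entangled state with an orthogonal product state, up to local unitaries), reuses the single-node analysis of Theorem~\ref{theorem1} outcome by outcome, and finds that the per-outcome admissibility constraints are four inequalities of the Appendix-B type in $(\alpha'_l,\alpha'_r,p,\delta)$ whose binding member is the stated quadratic inequality; the probability-weighted average of the conditional OFEFs then telescopes to $F^\star[\rho(p,\delta)]$. (The paper fixes both node measurements to $\beta=\tfrac12$ rather than optimizing over fine-grained $M(\beta_1),M(\beta_2)$; that is a minor difference.)

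However, two specific claims in your step (i) would fail if executed literally. First, no single effective state $\tilde\rho$ produced by the first swap satisfies $F^\star(\tilde\rho)=F^\star[\rho(p,\delta)]$: the conditional states carry outcome-dependent parameters $(\tilde p_i,\tilde\delta_i)$ and outcome-dependent OFEFs, and only the average over the node-$1$ outcomes reproduces $F^\star[\rho(p,\delta)]$ --- that identity is nothing but Theorem~\ref{theorem1} applied at the first node, with $\psi(\alpha'_l)$ as the free segment and node~$2$ playing the role of Bob. Second, the invariance you invoke is not the right one: in the relevant regime ($\mathcal{C}\le 2p$) one has $F^\star[\rho(p,\delta)]=\tfrac12\bigl(1+\delta(1-\delta)(1-p)^2/p\bigr)$, which is not a function of $p^2/\delta(1-\delta)(1-p)^2$ alone, and the swap step multiplies the ratio $p^2/[\delta(1-\delta)(1-p)^2]$ by an outcome-dependent factor --- $(1-\alpha'_l)/\alpha'_l$ for the outcomes whose entangled part lives in the span of $\{\ket{00},\ket{11}\}$, and $\alpha'_l/(1-\alpha'_l)$ for those living in the span of $\{\ket{01},\ket{10}\}$ --- so it is the worst-case outcomes that generate the final condition $\alpha'_l\alpha'_r/[(1-\alpha'_l)(1-\alpha'_r)]\le p^2/[\delta(1-\delta)(1-p)^2]$. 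The repair is exactly the paper's bookkeeping: keep your two-fold application of Theorem~\ref{theorem1}, but state step (i) as the averaging identity over node-$1$ outcomes and track the admissibility window separately for each of the four (and then sixteen) conditional states, checking that all windows intersect precisely in the stated inequality rather than in something strictly stronger.
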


In a quantum repeater scenario, the position of a noisy segment exhibiting the narrowest (or broadest) range of $\{p, \delta\}$, as outlined in Theorem \ref{theorem2}, is explicitly influenced by the distribution of $\{\alpha_i\}$ across the free segments. A detailed discussion in Appendix F explores how the noise's location within the repeater chain impacts $R_v$ and governs resource optimization in quantum repeater architectures. However, in the presence of noise across multiple segments, it remains uncertain whether we could achieve \(F^\star_{AB} = \min_{i}\{F^\star(\rho_i)\}\). Interestingly, we found an example where a noisy state \(\rho\), when shared in a single segment, allows \(F^\star_{AB} = F^\star(\rho)\). However, if two segments are affected by this same noisy state, \(F^\star_{AB} < F^\star(\rho)\) (see Appendix G \cite{supp}).

 \begin{figure}[h!]
     \centering
         \centering
         \includegraphics[height=130px, width=180px]{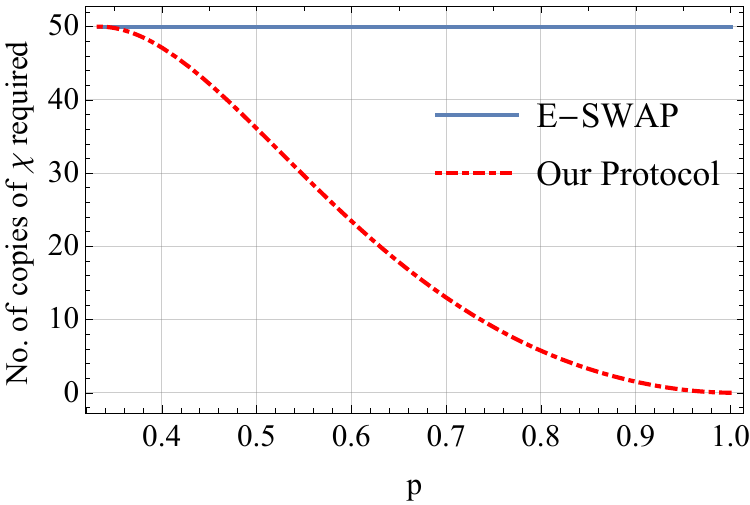}
         \caption{{\it Noise vs. Copies}: When the free segment contains a noisy state $\chi$, the traditional approach suggests distilling $\chi^{\otimes j}$ into a maximally entangled state $(\phi^+)^{\otimes k}$ asymptotically as $j \rightarrow \infty$, and then distributing the resulting state $\rho(p, \delta)$ between the end parties using E-SWAP. However, our analysis (Theorem \ref{theorem1}) shows that distilling a desired non-maximally entangled state $(\chi^{\otimes j}\rightarrow\psi^{\otimes k^\prime})$ is sufficient to achieve the same OFEF. Importantly, in the asymptotic limit, the number of copies required to distill a non-maximally entangled state is always lower than for a maximally entangled state, i.e., $j/k^\prime < j/k$. Furthermore, as illustrated in the above plot, as the noise p in the noisy segment increases, the number of required copies decreases for our protocol, in contrast to the traditional  E-SWAP protocol. For detailed calculations, we refer the reader to Appendix I \cite{supp}}
         \label{f11}
\end{figure}

{\it Noise in free segment.--} The analysis so far has primarily focused on using pure non-maximally entangled states in the free segments. Here, we extend this investigation to study how the OFEF changes when the resources in the free segments are replaced with noisy, mixed entangled states (denoted as $\chi$). Specifically, we limit our study to the one-node scenario, where the noisy segment is characterized by a mixed state $\rho(p, \delta)$. To assess the impact of noise in the free segments, we conduct a detailed analysis in Appendix I \cite{supp} under the following two scenarios: (i) {\bf Asymptotic copies of $\chi$:} This scenario explores the full potential of the noisy state in an idealized setting, where infinitely many copies of $\chi$ are available. The analysis demonstrates how fewer copies can achieve the desired teleportation fidelity compared to traditional approaches (see Fig. \ref{f11}). (ii) {\bf Single-copies of $\chi$:} This scenario explores the practical significance of utilizing a single instance of a noisy state. To ensure a realistic analysis, we incorporate practical noise models, including white noise, photon-loss noise, and measurement noise, demonstrating that the average teleportation fidelity remains robust under these conditions. These investigations offer valuable insights into the theoretical limits and practical feasibility of entanglement distribution across distant parties.

{\it Summary \& Outlook.--} In a quantum repeater scenario, where multiple segments share non-maximally entangled resources, effective long-range quantum communication depends on the optimal distribution of entanglement between the desired parties. Here, we uncover a remarkable feature of non-maximally entangled states: their potential in quantum tasks such as teleportation in a repeater scenario—a capability often overshadowed by their performance in conventional entanglement measures like Concurrence, singlet conversion probability, and worst-case entanglement distribution. Notably, this improvement also leads to a significant reduction in entanglement consumption, which is a critical advantage for long-distance communication protocols that rely on entanglement as a key resource. These findings underscore the importance of effectively distributing and utilizing entanglement, which forms the backbone of many quantum communication protocols where teleportation plays a vital role. This includes applications in long-distance quantum cryptography \cite{Bennett1984, Ekert1991, Gisin2002,Yin2020,Hu2023}, position-based cryptography \cite{Buhrman2014}, the establishment of a quantum internet \cite{Kimble2008, Wehner2018}, distributed quantum computation \cite{Cirac1999,CALEFFI2024110672,Main2025}, measurement-based quantum computation \cite{Gottesman1999,Leung2001,Nielsen2003,Robert2003,Leung2004,Jozsa2006}, blind quantum computation\cite{Broadbent2009}, and many other emerging areas \cite{Sangouard2011, Azuma2023, Curty2019, Li2021, Scarani2009, Klobus2012, Azuma2012}. In the supplementary material, we provide an overview of the broader implications of our findings in practical noisy setups (See Appendix J \cite{supp} ). Beyond its immediate results, this work raises important questions about the operational role of entanglement in quantum systems. It paves the way for further exploration into whether similar advantages can be realized using other entanglement measures, such as coherent information or private capacity, which define entanglement’s ability to transmit quantum information or generate secure keys.

A notable strength of this work lies in its focus on noisy states that are both theoretically interesting and experimentally relevant. These states, arising from physical processes such as atomic state decay or photon loss in photon-number encoding, represent dominant noise mechanisms in many practical quantum setups (see Appendix H  \cite{supp}). However, extending this analysis to other classes of noisy states and entanglement measures remains an open and exciting direction for future research. Moreover, our investigation demonstrates a noise-robust approach by accounting for imperfections in free segments, achieving substantial teleportation fidelity even under significant impurity (see Appendix I  \cite{supp}). This provides a solid foundation for designing practical and reliable quantum communication protocols. In summary, this study bridges foundational insights with operational significance, offering a robust framework for advancing quantum communication technologies under realistic noise conditions. It not only highlights the untapped potential of non-maximally entangled states but also lays the groundwork for further exploration of entanglement's role in distributed quantum systems, advancing the quest for efficient and noise-resilient quantum communication networks. \nocite {Horodecki1995,Horodecki199654,Horodecki199621,PhysRevA.78.032321,Nielsen2010,riera2021entanglement,BennettPure1996,Bennett1996}

\begin{acknowledgements}
 {\bf Acknowledgements:} We acknowledge Dr. Manik Banik, Prof. Antonio Acin, Dr. Ananda G. Maity, Dr. Some Sankar Bhattacharya, and Dr. Amit Mukherjee for many fruitful discussions and suggestions. M.A. acknowledges funding from the European Union’s
Horizon 2023 research and innovation programme under the
Marie Skłodowska-Curie grant agreement No. 101153001
as part of the project ‘Harnessing Quantum Resources in
Network Communication and Their Thermodynamic Underpinnings (QURES). The views and opinions expressed by are,
however the sole responsibility of the author(s) only.
\end{acknowledgements}
\bibliography{Repeater}

 \onecolumngrid
\appendix
\section{Proof of Lemma 1} \label{AppendixA}
\begin{proof}
Let the segment $S_1$ shares a noisy two-qubit state $\rho_{1}$ with concurrence $\mathcal{C}_1$ and {\it FEF} 
\begin{align}
    F(\rho_1)=\dfrac{1+\mathcal{C}_1}{2}, \label{l1}
\end{align}
respectively. Clearly, Eq.$~$(\ref{l1}) implies that the optimal FEF value of $\rho_1$, {\it i.e.}, $F^\star(\rho_1)$ is equal with $F(\rho_1)$. Any two-qubit entangled state satisfying Eq.$~$(\ref{l1}) must hold the property \cite{Verstraete2002}
\begin{align}
    \rho_1^{\Gamma} \ket{\phi} =  \lambda_{min} \ket{\phi}, \label{l2}
\end{align}
where $\Gamma$ is the partial transposition in computational basis, and $\ket{\phi}$ is a maximally entangled state which is also the eigenvector corresponding to the smallest eigenvalue $\lambda_{min}$. Note that if $\rho_1$ is entangled then $\lambda_{min}$ must be negative and can also be expressed as $\lambda_{min}=-\frac{N(\rho_1)}{2}$, where $N(\rho_1)$ is the negativity of $\rho_1$ \cite{Verstraete2002}.  However, the negativity of any two-qubit state $\rho_1$ satisfying Eq.$~$(\ref{l2}) is equal with its concurrence $C(\rho_1)=\mathcal{C}_1$ \cite{Verstraete2002, Ghosal2020}. Hence, without any loss of generality one can express $\lambda_{min}=-\frac{\mathcal{C}_1}{2}$. Under local unitary operation $U\otimes V$, any two-qubit state satisfying Eq.$~$(\ref{l1}) can be transformed to an unique canonical state \cite{Badzia2000, Ghosal2020} with its Hilbert-Schmidt form,

\begin{align}
    \tilde{\rho}_1 &= (U \otimes V) ~\rho_1 ~(U^{\dagger} \otimes V^{\dagger}) \nonumber \\
    &= \dfrac{1}{4}\left[ \mathbb{I}\otimes \mathbb{I} + (\mathbf{r \cdot \sigma}) \otimes \mathbb{I} - \mathbb{I} \otimes (\mathbf{r\cdot \sigma})- \sum_{i=1}^3 |\tau_i| ~(\sigma_i \otimes \sigma_i)\right], \label{a3}
\end{align}
where $\mathbb{I}$ is identity operation on a qubit state and $\lbrace \sigma_i \rbrace$ are the Pauli matrices. The matrix representation in Eq.$~$(\ref{a3}) is the Hilbert-Schmidt representation \cite{Horodecki1995, Horodecki199654, Horodecki199621}, where $\textbf{r}=\lbrace r_1, ~r_2,~r_3\rbrace$ is the local vector of $\tilde{\rho}_1$ such that $r_i=Tr\left[ \tilde{\rho}_1~(\sigma_i \otimes \mathbb{I})\right]=Tr\left[ \tilde{\rho}_1~(\mathbb{I} \otimes \sigma_i)\right]$ for $i=1,2,3$. The parameter $\tau_i$ in Eq.$~$(\ref{a3}) is the correlation matrix element such that $\tau_i=\left[ \tilde{\rho}_1~(\sigma_i \otimes \sigma_i)\right]$ for $i=1,2,3$.

It is now trivial to show that $F(\rho_1)=F(\tilde{\rho}_1)= F^\star(\rho_1)$ holds because of the local unitary equivalence between $\tilde{\rho}_1$ and $\rho_1$. One can always show that Eq.$~$(\ref{l1}) holds for any two-qubit state {\it iff} 
\begin{align}
    \tilde{\rho}_1^{\Gamma} \ket{\Phi_0}= -\dfrac{\mathcal{C}_1}{2} \ket{\Phi_0} 
\end{align}
is satisfied (see {\bf Lemma 4} in Ref. \cite{Ghosal2020}), where $\sqrt{2}~\ket{\Phi_0} =\ket{00} +\ket{11}$. 

Now let us suppose the free segment $S_2$ shares a two-qubit state $\rho_2$ with concurrence value $\mathcal{C}_2$. From Ref. \cite{Gour2004} one can always argue that no three-party trace-preserving LOCC between $A-N_1-B$ can prepare a two-qubit state between Alice-Bob with concurrence greater than $\mathcal{C}_1\mathcal{C}_2$. Hence, from Ref. \cite{Gour2004} and \cite{Verstraete2002} one can conclude that the optimal FEF between Alice-Bob is upper bounded as 
\begin{align}
    F^\star_{AB}\leq \dfrac{1+\mathcal{C}_1 \mathcal{C}_2}{2} \leq \dfrac{1+\mathcal{C}_1}{2}. \label{l5}
\end{align}
Hence, from Eq.$~$(\ref{l5}) it is clear that $F^\star_{AB}< F^\star(\rho_1)$ if $\mathcal{C}_2<1$. Therefore, the equality in Eq.$~$(\ref{l5}) will hold {\it iff} we consider $\mathcal{C}_2=1$ which is a maximally entangled state shared in $S_2$. This completes the proof of {\bf Lemma 1}. 
\end{proof}

\section{Proof of Theorem 1} \label{AppendixB}
\begin{proof}   
       Let segment $S_1$ shares a rank two mixed state $\rho_{1} \in \mathcal{S}$, where $\mathcal{S}$ is a two-dimensional subspace in 
 two-qubit space spanned by a product state $\mathrm{\ket{P}}$ and its orthogonal entangled state $\ket{\zeta (\delta)}$. Without loss of generality we could take $\rho_1=\rho(p,\delta)= p \mathrm{P}+(1-p)\zeta (\delta) $, where $\zeta (\delta):=\ket{\zeta (\delta)}\bra{\zeta (\delta)}$ with $\ket{\zeta (\delta)}=\sqrt{\delta}\ket{00}+\sqrt{1-\delta}\ket{11}$, and $\mathrm{P}:=\mathrm{\ket{P}\bra{P}}=\ket{01}\bra{01}$.
   The optimal fully entangled fraction (OFEF) of the preshared state $\rho_{1} $ can be computed by solving a semi-definite program \cite{PhysRevA.78.032321, Somshubhro2012, Verstraete2003} as
   \begin{align}
        F^\star(\rho_{1})&=\dfrac{1}{2}(1+\mathcal{C}(\rho_{1})-p), \quad ~~ \textit{if} ~~\mathcal{C}(\rho_{1}) > 2p, \nonumber \\
       &=\dfrac{1}{2}\left( 1+\dfrac{C(\rho_1)^2}{4p}\right), \quad ~~ \textit{if} ~~ \mathcal{C}(\rho_{1})\leq2p, \label{a1} 
   \end{align}
   where $\mathcal{C}(\rho_{1})=2\sqrt{\delta~(1-\delta)}(1-p)$ is the concurrence of $\rho_{1}$. Furthermore from Eq.$~$(\ref{a1}) one can easily conclude that 
   \begin{align}
       F^\star(\rho_{1})\leq \dfrac{1+\mathcal{C}(\rho_{1})}{2}. 
   \end{align}
We now assume a scenario, where the free segment $S_2$ has a preshared pure two-qubit entangled state, say, $\ket{\psi_2}=\sqrt{\alpha}\ket{00}+\sqrt{1-\alpha}\ket{11}$, such that $\frac{1}{2}\leq \alpha <1$. Since we want to establish end-to-end OFEF $F_{AB}^\star = F^\star (\rho_1)$, prerequisite entanglement in the free segment $S_2$ has to maintain the following constraint
\begin{align}
    F^\star(\rho_{1})\leq  F^\star(\psi_2)=\dfrac{1+\mathcal{C}(\psi_2)}{2}. \nonumber
\end{align}
 The joint state $(A-N_1-B)$ shared by two segments $S_1$ and $S_2$ can thus be written as:
\begin{align}
\eta_{AN_1B}=\eta_{12}&=\rho_{1}\otimes\ket{\psi_2}\bra{\psi_2}\nonumber\\
    &= \Big(p\ket{01}\bra{01}+(1-p)\ket{\zeta (\delta)}\bra{\zeta (\delta)}\Big)\otimes\ket{\psi_2}\bra{\psi_2}.
\end{align} 
Now we will show that under some specific range of state parameters of $\rho_{1}$ and $\ket{\psi_2}$, there exist deterministic LOCC protocols for which one obtains 
\begin{align}
        F^\star_{AB}=F^\star(\rho_{1}).
   \end{align}  

\paragraph*{{\bf PVM is performed at node $N_1$}:}
Now let us assume that the protocol starts at node $N_1$ by performing a projective measurement (PVM) in the following basis: 
\begin{align}
    & \ket{\phi_0}= \sqrt{\beta} \ket{00}+ \sqrt{(1-\beta)} \ket{11}, \quad \ket{\phi_1}= \sqrt{(1-\beta)} \ket{00}-\sqrt{\beta} \ket{11}, \nonumber \\
    & \ket{\phi_2}= \sqrt{\beta} \ket{01}+ \sqrt{(1-\beta)} \ket{10}, \quad \ket{\phi_3}= \sqrt{(1-\beta)} \ket{01}-\sqrt{\beta} \ket{10}, \label{a4}
\end{align}
where $ \frac{1}{2}\leq \beta <1$. Post measurement state between Alice and Bob, corresponding to the projector $\{\ket{\phi_i}\bra{\phi_i}\}$, is defined as,
\begin{equation}
    \rho_{AB,\phi_i}=\dfrac{1}{p_i}~Tr_{N_1}\left((\ket{\phi_i}_{N_1}\bra{\phi_i}\otimes \mathbb{I}_{AB})\eta_{12} \right),
\end{equation}
where $p_i=Tr(\rho_{AB,\phi_i})$ is the probability of outcome $i$. The explicit expressions of  the post-measurement states for each of the measurement outcomes can thus be evaluated as
\begin{align}
    & \rho_{AB,\phi_0}= \dfrac{\big\{\alpha\beta\delta+(1-\alpha)(1-\beta)(1-\delta)\big\}(1-p)}{(1-\alpha)(1-\beta)-(1-\alpha-\beta)\delta(1-p)}\ket{\kappa_0}\bra{\kappa_0}+\dfrac{(1-\alpha)(1-\beta)p}{(1-\alpha)(1-\beta)-(1-\alpha-\beta)\delta(1-p)}\ket{01}\bra{01}, \label{a6} \\\nonumber\\ 
    & \rho_{AB,\phi_1}= \dfrac{\big\{\alpha(1-\beta)\delta+(1-\alpha)\beta(1-\delta)\big\}(1-p)}{(1-\alpha)\beta+(\alpha-\beta)\delta(1-p)}\ket{\kappa_1}\bra{\kappa_1}+\dfrac{(1-\alpha)\beta p}{(1-\alpha)\beta+(\alpha-\beta)\delta(1-p)}\ket{01}\bra{01},\label{a7} \\\nonumber\\ 
   & \rho_{AB,\phi_2}= \dfrac{\big\{(1-\alpha)\beta\delta+\alpha(1-\beta)(1-\delta)\big\}(1-p)}{\alpha(1-\beta)-(\alpha-\beta)\delta(1-p)}\ket{\kappa_2}\bra{\kappa_2}+\dfrac{\alpha(1-\beta)p}{\alpha(1-\beta)-(\alpha-\beta)\delta(1-p)}\ket{00}\bra{00},\label{a8}  \\\nonumber\\
    & \rho_{AB,\phi_3}= \dfrac{\big\{(1-\alpha)(1-\beta)\delta+\alpha\beta(1-\delta)\big\}(1-p)}{\alpha\beta+(1-\alpha-\beta)\delta(1-p)}\ket{\kappa_3}\bra{\kappa_3}+\dfrac{\alpha\beta p}{\alpha\beta+(1-\alpha-\beta)\delta(1-p)}\ket{00}\bra{00}\label{a9}, 
\end{align}
where $\ket{\kappa_i}$s are given by
\begin{align}
    &\ket{\kappa_0}=\dfrac{\sqrt{\alpha\beta\delta}\ket{00}+\sqrt{(1-\alpha)(1-\beta)(1-\delta)}\ket{11}}{\sqrt{\alpha\beta\delta+(1-\alpha)(1-\beta)(1-\delta)}},\\ ~\nonumber\\
    &\ket{\kappa_1}=\dfrac{\sqrt{\alpha(1-\beta)\delta}\ket{00}-\sqrt{(1-\alpha)\beta(1-\delta)}\ket{11}}{\sqrt{\alpha(1-\beta)\delta+(1-\alpha)\beta(1-\delta)}},\\~\nonumber\\
    &\ket{\kappa_2}=\dfrac{\sqrt{(1-\alpha)\beta\delta}\ket{01}+\sqrt{\alpha(1-\beta)(1-\delta)}\ket{10}}{\sqrt{(1-\alpha)\beta\delta+\alpha(1-\beta)(1-\delta)}},\\~\nonumber\\
    &\ket{\kappa_3}=\dfrac{\sqrt{(1-\alpha)(1-\beta)\delta}\ket{01}-\sqrt{\alpha\beta(1-\delta)}\ket{10}}{\sqrt{(1-\alpha)(1-\beta)\delta+\alpha\beta(1-\delta)}}.
\end{align}
Now from the knowledge of $\lbrace \ket{\kappa_i}\rbrace$, one can figure out OFEF of every $\rho_{AB,\phi_i}$ as 
\begin{align}
    F^{\star}(\rho_{AB,\phi_0})&=\frac{(1-\alpha) \alpha \delta p (1-p)}{2\big\{1-\alpha-\delta(1-p)\big\}\big\{(1-\alpha)(1-\beta)-(1-\alpha-\beta)\delta(1-p)\big\} }\nonumber\\
    &\qquad \qquad \qquad \qquad \qquad+\sqrt{\frac{(1-\alpha) \alpha (1-\beta) \beta (1-\delta) \delta (1-p)^2}{\big\{(1-\alpha)(1-\beta)-(1-\alpha-\beta)\delta(1-p)\big\}^2}}+\frac{
   (1-\alpha-\delta)(1-p)}{2 \big\{1-\alpha-\delta(1-p)\big\}},\nonumber\\
   & \qquad \qquad \qquad \qquad \qquad \qquad \qquad \qquad \qquad \qquad \qquad \qquad \qquad  \text{if,} \quad \alpha\beta\delta(1-\delta)(1-p)^2> (1-\alpha)(1-\beta)p^2,\nonumber \\~\nonumber\\
    F^{\star}(\rho_{AB,\phi_1})&=\frac{\big\{(1-\alpha)\beta+(\alpha-\beta)\delta\big\}(1-p)}{2\big\{(1-\alpha)\beta+(\alpha-\beta)\delta(1-p)\big\}}+\sqrt{\frac{(1-\alpha)\alpha(1-\beta) \beta (1-\delta) \delta (1-p)^2}{\big\{(1-\alpha)\beta+(\alpha-\beta)\delta(1-p)\big\}^2}}, \nonumber\\
    &\qquad \qquad \qquad \qquad \qquad \qquad \qquad \qquad \qquad \qquad \qquad \qquad \qquad \text{if,} \quad \alpha(1-\beta)\delta(1-\delta)(1-p)^2> (1-\alpha)\beta p^2, \nonumber \\
    &\text{and,} \\~\nonumber\\
    F^{\star}(\rho_{AB,\phi_0}) &=\frac{\big\{1-\delta(1-p)\big\}\big\{(1-\alpha)(1-\beta)p +\alpha\beta\delta(1-p)\big\}}{2p\big\{(1-\alpha)(1-\beta)-(1-\alpha-\beta)\delta(1-p)\big\}} , ~~~~~~~~~~~~~~~~ \text{if,} \quad \alpha\beta\delta(1-\delta)(1-p)^2\leq (1-\alpha)(1-\beta)p^2, \nonumber \\~\nonumber\\
    F^{\star}(\rho_{AB,\phi_1})&=\frac{\big\{1-\delta(1-p)\big\}\big\{(1-\alpha)\beta p +\alpha(1-\beta)\delta(1-p)\big\}}{2p\big\{(1-\alpha)\beta+(\alpha-\beta)\delta(1-p)\big\}}, ~~~~~~~~~~~~~~~~~~~~~~~\text{if,} \quad \alpha(1-\beta)\delta(1-\delta)(1-p)^2\leq (1-\alpha)\beta p^2,\label{a15}
    \end{align}
    whereas, 
    \begin{align}
     F^{\star}(\rho_{AB,\phi_2})&=\frac{1}{2}-\frac{(1-\beta)p}{2\big\{(1-\beta)-(1-2\beta)\delta(1-p)\big\}}+\sqrt{\frac{(1-\beta) \beta (1-\delta) \delta (1-p)^2}{\big\{(1-\beta)-(1-2\beta)\delta(1-p)\big\}^2}}, \nonumber\\
    &\qquad \qquad \qquad \qquad \qquad \qquad \qquad \qquad \qquad \qquad \qquad \qquad \qquad \text{if,} \quad (1-\alpha)\beta\delta(1-\delta)(1-p)^2> \alpha(1-\beta) p^2, \nonumber \\~\nonumber\\
    F^{\star}(\rho_{AB,\phi_3})&=\sqrt{\frac{(1-\alpha) \alpha (1-\beta) \beta (1-\delta) \delta (1-p)^2}{\big\{\alpha\beta+(1-\alpha-\beta)\delta(1-p)\big\}^2}}+\frac{\big\{\alpha\beta -(1-\alpha-\beta)\delta\big\}(1-p)}{2\big\{\alpha\beta+(1-\alpha-\beta)\delta(1-p)\big\}}, \nonumber\\
    &\qquad \qquad \qquad \qquad \qquad \qquad \qquad \qquad \qquad \qquad \qquad \qquad \qquad \text{if,} \quad (1-\alpha)(1-\beta)\delta(1-\delta)(1-p)^2> \alpha\beta p^2, \nonumber \\
    &\text{and,} \\~\nonumber\\
     F^{\star}(\rho_{AB,\phi_2}) &=\frac{\big\{1-\delta(1-p)\big\}\big\{\alpha(1-\beta)p+(1-\alpha)\beta\delta(1-p)\big\}}{2p\big\{\alpha(1-\beta)-(\alpha-\beta)\delta(1-p)\big\}}, ~~~~~~~~~~~~~~~~ \text{if,} \quad (1-\alpha)\beta\delta(1-\delta)(1-p)^2\leq \alpha(1-\beta) p^2, \nonumber \\~\nonumber\\
     F^{\star}(\rho_{AB,\phi_3})&=\frac{\big\{1-\delta(1-p)\big\}\big\{\alpha\beta p+(1-\alpha)(1-\beta)\delta(1-p)\big\}}{2p\big\{\alpha\beta+(1-\alpha-\beta)\delta(1-p)\big\}}, ~~~~~~~~~~~~~~~~ \text{if,} \quad (1-\alpha)(1-\beta)\delta(1-\delta)(1-p)^2\leq \alpha\beta p^2. \label{a17}
\end{align}

Now let us consider a particular domain where all four inequalities, 
\begin{align}
    & \alpha\beta\delta(1-\delta)(1-p)^2\leq (1-\alpha)(1-\beta)p^2,\label{a18} \\
    & \alpha(1-\beta)\delta(1-\delta)(1-p)^2\leq (1-\alpha)\beta p^2, \label{a19}\\
    & (1-\alpha)\beta\delta(1-\delta)(1-p)^2\leq \alpha(1-\beta) p^2,\label{a20} \\
    & (1-\alpha)(1-\beta)\delta(1-\delta)(1-p)^2\leq \alpha\beta p^2, \label{a21} 
\end{align}
are simultaneously satisfied. From these conditions, one can argue that the following inequalities should be satisfied, 
\begin{align}
   &0<p\leq\frac{1}{3},~~ \frac{1}{2}\Bigg(1+\sqrt{1-\frac{4p^2}{(1-p)^2}}\Bigg)<\delta<1,~~ \frac{1}{2}\leq \alpha<\frac{p^2}{p^2+\delta(1-\delta)(1-p)^2},~~ \frac{1}{2}\leq \beta\leq\frac{(1-\alpha)p^2}{(1-\alpha)p^2+\alpha \delta (1-\delta)(1-p)^2},\nonumber\\
   &\qquad \qquad \qquad \qquad \qquad \qquad \qquad \qquad \qquad \qquad \qquad \qquad  \text{or,}\nonumber\\
    &\qquad \qquad   \frac{1}{3}<p<1,~~~~ \frac{1}{2}\leq \delta<1,~~~~ \frac{1}{2}\leq \alpha<\frac{p^2}{p^2+\delta(1-\delta)(1-p)^2},~~~~ \frac{1}{2}\leq \beta\leq\frac{(1-\alpha)p^2}{(1-\alpha)p^2+\alpha \delta (1-\delta)(1-p)^2}.\label{a25}
\end{align}
It can be checked easily that this set of inequalities arises from Eqs.$~$(\ref{a18} -- \ref{a21}). Hence Eqs.$~$(\ref{a18} -- \ref{a21}) are the minimal constraints required for OFEF. So while all these conditions hold, from Eqs. (\ref{a15}) and (\ref{a17}) one can easily find out that the average OFEF, {\it i.e.,} 
\begin{align}
    F^\star_{AB}&= \sum_{i=0}^3 p_i ~F^\star(\rho_{AB,\phi_i}) \nonumber \\
&=\dfrac{1}{2}\left( 1+\delta(1-\delta)\dfrac{(1-p)^2}{p}\right) =  F^\star(\rho_{1})=F^\star(\rho(p,\delta)), \label{a26}
\end{align}
which ends the proof.

A major point to be noted here is that the established OFEF between Alice and Bob in Eq.$~$(\ref{a26}) is hitting the LOCC upper bound `$\min \lbrace F^\star(\rho_{1}), F^\star(\psi_{2})\rbrace$' which means Eq.$~$(\ref{a25}) must obey the inequality condition, $F^\star(\rho_{1})\leq F^\star(\psi_{2})$. This means that the following inequality must hold, {\it i.e.,} 
\begin{align}
    \dfrac{1}{2}\left( 1+\delta(1-\delta)\dfrac{(1-p)^2}{p}\right) \leq F^\star(\psi_2)=\dfrac{1+2\sqrt{\alpha~(1-\alpha)}}{2}, \label{a27}
\end{align}
if Eq.$~$(\ref{a25}) is satisfied. From Eq.$~$(\ref{a27}) one can find a domain of $\alpha$ as \begin{align}
   \frac{1}{2}\leq \alpha<\frac{p^2}{p^2+\delta(1-\delta)(1-p)^2} <\dfrac{1}{2}\left[ 1+ \sqrt{1-\delta^2(1-\delta)^2\dfrac{(1-p)^4}{p^2}}\right], \nonumber
\end{align}
which is consistent with Eq.$~$(\ref{a25}). 
\end{proof}

\section{Proof of Corollary 1}\label{AppendixC}
\begin{proof}
Let's consider a quantum repeater set-up, composed of $n$ nodes with $n+1$ segments between Alice and Bob. The segments $\{S_i\}^{n+1}_{i=2}$, each shares the state $\ket{\psi_i}=\sqrt{\alpha_i}\ket{00}+\sqrt{1-\alpha_i}\ket{11}$, having concurrence $\mathcal{C}(\psi_i)=2\sqrt{\alpha_i(1-\alpha_i)}$ and the segment $S_{1}$ shares a noisy state $\rho_{1}=\rho(p,\delta)= p \mathrm{P}+(1-p)\zeta (\delta)$, where $\zeta (\delta):=\ket{\zeta (\delta)}\bra{\zeta (\delta)}$ with $\ket{\zeta (\delta)}=\sqrt{\delta}\ket{00}+\sqrt{1-\delta}\ket{11}$ and $\mathrm{P}=\ket{01}\bra{01}$. The joint state shared by $n+1$ segments can thus be written as:
\begin{align}
    \eta_{AN_1\cdots N_{n}B}=\eta&=\rho_{1}\otimes\ket{\psi_2}\bra{\psi_2}\otimes\ket{\psi_3}\bra{\psi_3}\otimes.......\otimes\ket{\psi_{n+1}}\bra{\psi_{n+1}}\nonumber\\
    &=\Big(p \ket{01}\bra{01}+(1-p)\ket{\zeta (\delta)}\bra{\zeta (\delta)}\Big)\otimes \Big(\bigotimes_{i=2}^{n+1}\ket{\psi_i}\bra{\psi_i}\Big).
\end{align} 

It is known that given two pure non-maximally entangled states with concurrences, say $\mathcal{C}_1 \text{and} ~\mathcal{C}_2$, we can deterministically prepare another non-maximally entangled state with concurrence $\mathcal{C}_1\mathcal{C}_2$ through the RPBES protocol \cite{Gour2004}. The state prepared will have the form
\begin{equation}
     \ket{\tau} = \sqrt{\tau} \ket{00} + \sqrt{1-\tau} \ket{11},
\end{equation}
where $\tau=\dfrac{1+\sqrt{1-\mathcal{C}_1^2\mathcal{C}_2^2}}{2}$ is the Schmidt coefficients such that $\tau\geq \frac{1}{2}$ and the concurrence of $\ket{\tau}$ is $\mathcal{C}(\tau)=\mathcal{C}_1\mathcal{C}_2\leq 1$.\\ 

First, we apply this RPBES protocol sequentially at nodes $N_2, N_3,.......N_{n}$ for our given chain. The resultant state between node $N_1$ and Bob is
\begin{equation}
    \ket{\Psi_n}=\sqrt{\alpha'_n}\ket{00}+\sqrt{1-\alpha'_n}\ket{11},
\end{equation}
where $\alpha'_n=\dfrac{1+\sqrt{1-\mathcal{C}^{2}(\Psi_n)}}{2}$, and $\mathcal{C}(\Psi_n)=\prod_{i=2}^{n+1}\mathcal{C}(\psi_i)$. Now we are in a single node scenario where we have $\rho_{1}$ and $\ket{\psi}_{N_1B}\bra{\psi}=\ket{\Psi_n}\bra{\Psi_n}$. If we do only a Bell measurement at node $N_1$ ($\beta=\frac{1}{2}$) then from Appendix \ref{AppendixB} we know that to obtain optimal fidelity the noise parameters $p~\text{and}~\delta$ of the noisy state $\rho_{1}$, and the Schmidt coefficient of the state $\ket{\Psi_n}$ have to satisfy the following inequalities:
\begin{align}
   &0<p\leq\frac{1}{3},~~~~~ \frac{1}{2}\Bigg(1+\sqrt{1-\frac{4p^2}{(1-p)^2}}\Bigg)<\delta<1,~~~~~ \frac{1}{2}\leq \alpha'_n\leq\frac{p^2}{p^2+\delta(1-\delta)(1-p)^2},\nonumber\\
   &\qquad \qquad \qquad \qquad \qquad \qquad \qquad \qquad  \text{or,}\nonumber\\
    &\qquad \qquad   \frac{1}{3}<p<1,~~~~~ \frac{1}{2}\leq \delta<1,~~~~~ \frac{1}{2}\leq \alpha'_n\leq\frac{p^2}{p^2+\delta(1-\delta)(1-p)^2}.\label{b3}
\end{align}
Thus from the third inequality we have 
\begin{equation}\label{b4}
    \dfrac{1+\sqrt{1-\mathcal{C}^2(\Psi_n)}}{2}\leq\frac{p^2}{p^2+\delta(1-\delta)(1-p)^2}.
\end{equation}
Hence we see that for non-maximally entangled states with given concurrences distributed among $n$ segments $\{S_i\}^{n+1}_{i=2}$, there exists a certain range of the noise parameters $p~\text{and}~\delta$ of the noisy state $\rho_1$ in segment $S_1$ for which $F^\star_{AB}=F^\star(\rho_1)=F^\star(\rho(p,\delta)).$
\end{proof}
\section{Resource saved in the proposed LOCC protocol and comparison with E-SWAP}\label{AppendixD}

    If all the segments $\{S_i\}^{n+1}_{i=2}$ share the same non-maximally entangled state, \textit{i.e.,} $\alpha_i=\alpha$ for all $i\in[2,n+1]$ where $i$ is a natural number then $\mathcal{C}(\Psi_n)=\mathcal{C}^{n}$, where $\mathcal{C}=2\sqrt{\alpha(1-\alpha)}$. Substituting it in inequality (\ref{b4}) we have
\begin{equation}
    \dfrac{1+\sqrt{1-\mathcal{C}^{2n}}}{2}\leq\frac{p^2}{p^2+\delta(1-\delta)(1-p)^2}.
\end{equation}
With some basic manipulations, we arrive at the upper bound on the number of nodes $n$ for the given values of the noise parameters $\{p,\delta\}$ and the concurrence $\mathcal{C}$ of the states in segments $\{S_i\}^{n+1}_{i=2}$ as
\begin{equation}
    n\leq\dfrac{\Bigg|\log\bigg(1-\Big(\frac{p^2-\delta(1-\delta)(1-p)^2}{p^2+\delta(1-\delta)(1-p)^2}\Big)^2\bigg)\Bigg|}{2|\log(\mathcal{C})|}.
\end{equation}
Let us define a quantity saved resource: 
\begin{equation}\label{d3}
    R_v=n(1-\mathcal{C}).
\end{equation}
We know that one-{\it ebit} resource (a maximally entangled state with concurrence $\mathcal{C}=1$) is required per free segment in the E-SWAP protocol. Our proposed protocol, which utilizes non-maximally entangled state with concurrence $\mathcal{C}<1$, will suffice to satisfy the condition $F^\star_{AB}=F^\star(\rho(p,\delta))$. Thus, $R_{v}$ quantifies how much less resource is consumed in our proposed protocol compared to the E-SWAP protocol to achieve optimal fidelity in an $n$-node repeater scenario. In terms of the noise parameter and number of nodes, it is upper-bounded as
\begin{equation}
   R_v\leq n-n \left(1-\Bigg(\frac{p^2- \delta(1-\delta)(1-p)^2}{p^2+ \delta(1-\delta)
   (1-p)^2}\Bigg)^2\right)^{\frac{1}{2n}}.
\end{equation}

As the number of nodes increases the variance in resource consumption ($R_v$) saturates. In fact, for $n\rightarrow\infty$, we have
\begin{equation}
    R_v\rightarrow-\frac{1}{2} \log \left(\frac{\delta(1-\delta) (1-p)^2 p^2}{\left\{\delta(1-\delta)(1-p)^2 + p^2\right\}^2}\right)-\log (2).
\end{equation}

\section{Proof of Theorem 2}\label{AppendixE}
\begin{proof}
We consider a repeater scenario of $n+1$ number of segments ($n$ number of nodes) between Alice and Bob such that an intermediate segment, $S_m$, shares a noisy state of the form $\rho_{m}= \rho(p,\delta)=p \mathrm{P}+(1-p)\zeta (\delta)$, where $\zeta (\delta):=\ket{\zeta (\delta)}\bra{\zeta (\delta)}$ with $\ket{\zeta (\delta)}=\sqrt{\delta}\ket{00}+\sqrt{1-\delta}\ket{11}$, and $\mathrm{P}=\ket{01}\bra{01}$. All the other segments $\{S_i\}^{n+1}_{i=1}$, except for $i=m$, each shares a pure state $\ket{\psi_i}=\sqrt{\alpha_i}\ket{00}+\sqrt{1-\alpha_i}\ket{11}$ with concurrence $\mathcal{C}(\psi_i)$. So the initial state of the composite network can be expressed as 
\begin{align}
    \rho = \ket{\eta}_{1(m-1)}\bra{\eta} \otimes \rho_{m}\otimes \ket{\eta'}_{(m+1)(n+1)}\bra{\eta'}, \nonumber 
\end{align}
where we define
\begin{align}
    & \ket{\eta}_{1(m-1)}=  \bigotimes_{i=1}^{m-1}\ket{\psi_i}, \nonumber \\
    \text{and,}~~~&\ket{\eta'}_{(m+1)(n+1)}=  \bigotimes_{i=m+1}^{n+1} \ket{\psi_i}. \nonumber 
\end{align} 
The strategy is simple. We first apply the RPBES protocol \cite{Gour2004} to the first $m-1$ segments $\{S_i\}^{m-1}_{i=1}$. The resultant state will become
\begin{equation}
    \ket{\Psi_l}=\sqrt{\alpha'_l}\ket{00}+\sqrt{1-\alpha'_l}\ket{11},
\end{equation}
where $\alpha'_l=\dfrac{1+\sqrt{1-\mathcal{C}^{2}(\Psi_l)}}{2}$ and $\mathcal{C}(\Psi_l)=\prod_{i=1}^{m-1}\mathcal{C}(\psi_i)$. Similarly, we apply the RPBES protocol on the remaining segments $\{S_i\}^{n+1}_{i=m+1}$ resulting in the state 
\begin{equation}
    \ket{\Psi_{r}}=\sqrt{\alpha'_{r}}\ket{00}+\sqrt{1-\alpha'_{r}}\ket{11},
\end{equation}
where $\alpha'_r=\dfrac{1+\sqrt{1-\mathcal{C}^{2}(\Psi_r)}}{2}$ and $\mathcal{C}(\Psi_r)=\prod_{i=m+1}^{n+1}\mathcal{C}(\psi_i)$. 

So, we are now effectively reduced to a two-node scenario $\{N_{m-1}, N_m\}$ with three segments $S_1 (A-N_{m-1}), S_2 (N_{m-1}-N_m),~\text{and}~ S_3 (N_m-B)$, where the total state can be written as 
\begin{align}
    \rho = \ket{\Psi_l}\bra{\Psi_l} \otimes \rho_{2}\otimes \ket{\Psi_r}\bra{\Psi_r}, \nonumber 
\end{align}
where segment $S_1$ shares a non maximally entangled state $\ket{\Psi_l}$, segment $S_2$ shares the noisy state $\rho_2=\rho_m$, and segment $S_3$ shares another non maximally entangled state $\ket{\Psi_r}$. Let us first consider node $N_m$. We implement a measurement in the Bell basis there. Following the calculations in Appendix \ref{AppendixB}, we will end up with the four post-measurement states $\rho_{N_{m-1}B,\phi_i}$, where $i\in \{1,2,3,4\}$, as in Eqs. (\ref{a6}) to (\ref{a9}) with $\beta=\frac{1}{2}$.

Now, implement a Bell measurement at node $N_{m-1}$. Corresponding to each outcome, {\it i.e.,} $\rho_{N_{m-1}B,\phi_i}$,  of the earlier Bell measurement at node $N_m$, there will be four more outcomes. So, in total, there will be sixteen states. The detailed calculations are as follows:

For the state $\rho_{N_{m-1}B,\phi_0}$ (as in Eq.$~$(\ref{a6}) with $\beta=\frac{1}{2}$), the outcomes corresponding to the second Bell measurement are given by
\begin{align}
    \rho_{AB,\phi_0,\phi_0}=&\frac{\big\{\alpha'_r \alpha'_l \delta+(1-\alpha'_r)(1-\alpha'_l)(1-\delta)\big\}(1-p)}{(1 - \alpha'_r)(1 - \alpha'_l) - (1 - \alpha'_r) (1 - 2 \alpha'_l) p - (1 - \alpha'_r - \alpha'_l)\delta(1-p)}\ket{\zeta_{0,0}}\bra{\zeta_{0,0}}\nonumber\\
    &+\frac{(1-\alpha'_r)\alpha'_l p}{(1 - \alpha'_r)(1 - \alpha'_l) - (1 - \alpha'_r) (1 - 2 \alpha'_l) p - (1 - \alpha'_r - \alpha'_l)\delta(1-p)}\ket{01}\bra{01},\label{c3}\\~\nonumber\\
    \rho_{AB,\phi_0,\phi_1}=&\frac{\big\{\alpha'_r \alpha'_l \delta+(1-\alpha'_r)(1-\alpha'_l)(1-\delta)\big\}(1-p)}{(1 - \alpha'_r)(1 - \alpha'_l) - (1 - \alpha'_r) (1 - 2 \alpha'_l) p - (1 - \alpha'_r - \alpha'_l)\delta(1-p)}\ket{\zeta_{0,1}}\bra{\zeta_{0,1}}\nonumber\\
    &+\frac{(1-\alpha'_r)\alpha'_l p}{(1 - \alpha'_r)(1 - \alpha'_l) - (1 - \alpha'_r) (1 - 2 \alpha'_l) p - (1 - \alpha'_r - \alpha'_l)\delta(1-p)}\ket{01}\bra{01},\label{c4}\\~\nonumber\\
     \rho_{AB,\phi_0,\phi_2}=&\frac{\big\{\alpha'_r(1-\alpha'_l)\delta+(1-\alpha'_r)\alpha'_l(1-\delta)\big\}(1-p)}{(1-\alpha'_r)\alpha'_l+(1-\alpha'_r)(1-2\alpha'_l)p-(\alpha'_l-\alpha'_r)\delta(1-p)}\ket{\zeta_{0,2}}\bra{\zeta_{0,2}}\nonumber\\
    &+\frac{(1-\alpha'_r)(1-\alpha'_l)p}{(1-\alpha'_r)\alpha'_l+(1-\alpha'_r)(1-2\alpha'_l)p-(\alpha'_l-\alpha'_r)\delta(1-p)}\ket{11}\bra{11},\label{c5}\\~\nonumber\\
     \rho_{AB,\phi_0,\phi_3}=&\frac{\big\{\alpha'_r(1-\alpha'_l)\delta+(1-\alpha'_r)\alpha'_l(1-\delta)\big\}(1-p)}{(1-\alpha'_r)\alpha'_l+(1-\alpha'_r)(1-2\alpha'_l)p-(\alpha'_l-\alpha'_r)\delta(1-p)}\ket{\zeta_{0,3}}\bra{\zeta_{0,3}}\nonumber\\
    &+\frac{(1-\alpha'_r)(1-\alpha'_l)p}{(1-\alpha'_r)\alpha'_l+(1-\alpha'_r)(1-2\alpha'_l)p-(\alpha'_l-\alpha'_r)\delta(1-p)}\ket{11}\bra{11},\label{c6}
\end{align}
where,
\begin{align}
    \ket{\zeta_{0,0}}=\frac{\sqrt{\alpha'_r \alpha'_l \delta}\ket{00}+\sqrt{(1-\alpha'_r)(1-\alpha'_l)(1-\delta)}\ket{11}}{\sqrt{\alpha'_r \alpha'_l \delta+(1-\alpha'_r)(1-\alpha'_l)(1-\delta)}},\\~\nonumber\\
    \ket{\zeta_{0,1}}=\frac{\sqrt{\alpha'_r \alpha'_l \delta}\ket{00}-\sqrt{(1-\alpha'_r)(1-\alpha'_l)(1-\delta)}\ket{11}}{\sqrt{\alpha'_r \alpha'_l \delta+(1-\alpha'_r)(1-\alpha'_l)(1-\delta)}},\\~\nonumber\\
    \ket{\zeta_{0,2}}=\frac{\sqrt{(1-\alpha'_r)\alpha'_l(1-\delta)}\ket{01}+\sqrt{\alpha'_r(1-\alpha'_l)\delta}\ket{10}}{\sqrt{(1-\alpha'_r)\alpha'_l(1-\delta)+\alpha'_r(1-\alpha'_l)\delta}},\\~\nonumber\\
    \ket{\zeta_{0,3}}=\frac{\sqrt{(1-\alpha'_r)\alpha'_l(1-\delta)}\ket{01}-\sqrt{\alpha'_r(1-\alpha'_l)\delta}\ket{10}}{\sqrt{(1-\alpha'_r)\alpha'_l(1-\delta)+\alpha'_r(1-\alpha'_l)\delta}}.
\end{align}
 Note that the states in Eqs.$~$(\ref{c3} -- \ref{c6}) are of the same form as in Eqs.$~$(\ref{a6} -- \ref{a9}). Following the similar calculations as in Appendix \ref{AppendixB}, for finite resource consumption we get 
\begin{equation}\label{c11}
    \alpha'_r(1-\alpha'_l)\delta(1-\delta)(1-p)^2\leq(1-\alpha'_r)\alpha'_l p^2,
\end{equation}
and,
\begin{equation}\label{c12}
     \alpha'_r \alpha'_l \delta(1-\delta)(1-p)^2\leq(1-\alpha'_r)(1-\alpha'_l)p^2.
\end{equation}

For the state $\rho_{N_{m-1}B,\phi_1}$ (as in Eq.$~$(\ref{a7}) with $\beta=\frac{1}{2}$), the outcomes corresponding to the second Bell measurement are given by
\begin{align}
    \rho_{AB,\phi_1,\phi_0}=&\frac{\big\{\alpha'_r \alpha'_l \delta+(1-\alpha'_r)(1-\alpha'_l)(1-\delta)\big\}(1-p)}{(1 - \alpha'_r)(1 - \alpha'_l)- (1 - \alpha'_r) (1 - 2 \alpha'_l) p - (1 - \alpha'_r - \alpha'_l) \delta(1- p)}\ket{\zeta_{1,0}}\bra{\zeta_{1,0}}\nonumber\\
    &+\frac{(1-\alpha'_r)\alpha'_l p}{(1 - \alpha'_r)(1 - \alpha'_l)- (1 - \alpha'_r) (1 - 2 \alpha'_l) p - (1 - \alpha'_r - \alpha'_l) \delta(1- p)}\ket{01}\bra{01},\label{c13}\\~\nonumber\\
    \rho_{AB,\phi_1,\phi_1}=&\frac{\big\{\alpha'_r \alpha'_l \delta+(1-\alpha'_r)(1-\alpha'_l)(1-\delta)\big\}(1-p)}{(1 - \alpha'_r)(1 - \alpha'_l)- (1 - \alpha'_r) (1 - 2 \alpha'_l) p - (1 - \alpha'_r - \alpha'_l) \delta(1- p)}\ket{\zeta_{1,1}}\bra{\zeta_{1,1}}\nonumber\\
    &+\frac{(1-\alpha'_r)\alpha'_l p}{(1 - \alpha'_r)(1 - \alpha'_l)- (1 - \alpha'_r) (1 - 2 \alpha'_l) p - (1 - \alpha'_r - \alpha'_l) \delta(1- p)}\ket{01}\bra{01},\label{c14}\\~\nonumber\\
     \rho_{AB,\phi_1,\phi_2}=&\frac{\big\{\alpha'_r(1-\alpha'_l)\delta+(1-\alpha'_r)\alpha'_l(1-\delta)\big\}(1-p)}{(1-\alpha'_r)\alpha'_l+(1-\alpha'_r)(1-2\alpha'_l)p-(\alpha'_l-\alpha'_r)\delta(1-p)}\ket{\zeta_{1,2}}\bra{\zeta_{1,2}}\nonumber\\
    &+\frac{(1-\alpha'_r)(1-\alpha'_l)p}{(1-\alpha'_r)\alpha'_l+(1-\alpha'_r)(1-2\alpha'_l)p-(\alpha'_l-\alpha'_r)\delta(1-p)}\ket{11}\bra{11},\label{c15}\\~\nonumber\\
     \rho_{AB,\phi_1,\phi_3}=&\frac{\big\{\alpha'_r(1-\alpha'_l)\delta+(1-\alpha'_r)\alpha'_l(1-\delta)\big\}(1-p)}{(1-\alpha'_r)\alpha'_l+(1-\alpha'_r)(1-2\alpha'_l)p-(\alpha'_l-\alpha'_r)\delta(1-p)}\ket{\zeta_{1,3}}\bra{\zeta_{1,3}}\nonumber\\
    &+\frac{(1-\alpha'_r)(1-\alpha'_l)p}{(1-\alpha'_r)\alpha'_l+(1-\alpha'_r)(1-2\alpha'_l)p-(\alpha'_l-\alpha'_r)\delta(1-p)}\ket{11}\bra{11},\label{c16}
\end{align}
where,
\begin{align}
    \ket{\zeta_{1,0}}=\frac{\sqrt{\alpha'_r \alpha'_l \delta}\ket{00}-\sqrt{(1-\alpha'_r)(1-\alpha'_l)(1-\delta)}\ket{11}}{\sqrt{\alpha'_r \alpha'_l \delta+(1-\alpha'_r)(1-\alpha'_l)(1-\delta)}},\\~\nonumber\\
    \ket{\zeta_{1,1}}=\frac{\sqrt{\alpha'_r \alpha'_l \delta}\ket{00}+\sqrt{(1-\alpha'_r)(1-\alpha'_l)(1-\delta)}\ket{11}}{\sqrt{\alpha'_r \alpha'_l \delta+(1-\alpha'_r)(1-\alpha'_l)(1-\delta)}},\\~\nonumber\\
    \ket{\zeta_{1,2}}=\frac{\sqrt{(1-\alpha'_r)\alpha'_l(1-\delta)}\ket{01}-\sqrt{\alpha'_r(1-\alpha'_l)\delta}\ket{10}}{\sqrt{(1-\alpha'_r)\alpha'_l(1-\delta)+\alpha'_r(1-\alpha'_l)\delta}},\\~\nonumber\\
    \ket{\zeta_{1,3}}=\frac{\sqrt{(1-\alpha'_r)\alpha'_l(1-\delta)}\ket{01}+\sqrt{\alpha'_r(1-\alpha'_l)\delta}\ket{10}}{\sqrt{(1-\alpha'_r)\alpha'_l(1-\delta)+\alpha'_r(1-\alpha'_l)\delta}}.
\end{align}
We retrieve Eqs.$~$(\ref{c11}) and (\ref{c12}) for finite resource consumption.

 For the state $\rho_{N_{m-1}B,\phi_2}$ (as in Eq.$~$(\ref{a8}) with $\beta=\frac{1}{2}$), the outcomes corresponding to the second Bell measurement are given by
\begin{align}
    \rho_{AB,\phi_2,\phi_0}=&\frac{\big\{(1-\alpha'_r)\alpha'_l \delta+\alpha'_r(1-\alpha'_l)(1-\delta)\big\}(1-p)}{\alpha'_r(1-\alpha'_l)-\alpha'_r(1-2\alpha'_l)p-(\alpha'_r -\alpha'_l)\delta(1-p)}\ket{\zeta_{2,0}}\bra{\zeta_{2,0}}\nonumber\\
    &+\frac{\alpha'_r \alpha'_l p}{\alpha'_r(1-\alpha'_l)-\alpha'_r(1-2\alpha'_l)p-(\alpha'_r -\alpha'_l)\delta(1-p)}\ket{00}\bra{00},\label{c21}\\~\nonumber\\
    \rho_{AB,\phi_2,\phi_1}=&\frac{\big\{(1-\alpha'_r)\alpha'_l \delta+\alpha'_r(1-\alpha'_l)(1-\delta)\big\}(1-p)}{\alpha'_r(1-\alpha'_l)-\alpha'_r(1-2\alpha'_l)p-(\alpha'_r -\alpha'_l)\delta(1-p)}\ket{\zeta_{2,1}}\bra{\zeta_{2,1}}\nonumber\\
    &+\frac{\alpha'_r \alpha'_l p}{\alpha'_r(1-\alpha'_l)-\alpha'_r(1-2\alpha'_l)p-(\alpha'_r -\alpha'_l)\delta(1-p)}\ket{00}\bra{00},\label{c22}\\~\nonumber\\
     \rho_{AB,\phi_2,\phi_2}=&\frac{\big\{(1-\alpha'_r)(1-\alpha'_l)\delta+\alpha'_r \alpha'_l (1-\delta)\big\}(1-p)}{\alpha'_r\alpha'_l+\alpha'_r(1-2\alpha'_l)p+(1-\alpha'_r -\alpha'_l)\delta(1-p)}\ket{\zeta_{2,2}}\bra{\zeta_{2,2}}\nonumber\\
    &+\frac{\alpha'_r(1-\alpha'_l)p}{\alpha'_r\alpha'_l+\alpha'_r(1-2\alpha'_l)p+(1-\alpha'_r -\alpha'_l)\delta(1-p)}\ket{10}\bra{10},\label{c23}\\~\nonumber\\
     \rho_{AB,\phi_2,\phi_3}=&\frac{\big\{(1-\alpha'_r)(1-\alpha'_l)\delta+\alpha'_r \alpha'_l (1-\delta)\big\}(1-p)}{\alpha'_r\alpha'_l+\alpha'_r(1-2\alpha'_l)p+(1-\alpha'_r -\alpha'_l)\delta(1-p)}\ket{\zeta_{2,3}}\bra{\zeta_{2,3}}\nonumber\\
    &+\frac{\alpha'_r(1-\alpha'_l)p}{\alpha'_r\alpha'_l+\alpha'_r(1-2\alpha'_l)p+(1-\alpha'_r -\alpha'_l)\delta(1-p)}\ket{10}\bra{10},\label{c24}
\end{align}
where,
\begin{align}
    \ket{\zeta_{2,0}}=\frac{\sqrt{(1-\alpha'_r)\alpha'_l \delta}\ket{01}+\sqrt{\alpha'_r(1-\alpha'_l)(1-\delta)}\ket{10}}{\sqrt{(1-\alpha'_r)\alpha'_l \delta+\alpha'_r(1-\alpha'_l)(1-\delta)}},\\~\nonumber\\
    \ket{\zeta_{2,1}}=\frac{\sqrt{(1-\alpha'_r)\alpha'_l \delta}\ket{01}-\sqrt{\alpha'_r(1-\alpha'_l)(1-\delta)}\ket{10}}{\sqrt{(1-\alpha'_r)\alpha'_l \delta+\alpha'_r(1-\alpha'_l)(1-\delta)}},\\~\nonumber\\
    \ket{\zeta_{2,2}}=\frac{\sqrt{\alpha'_r \alpha'_l(1-\delta)}\ket{00}+\sqrt{(1-\alpha'_r)(1-\alpha'_l)\delta}\ket{11}}{\sqrt{\alpha'_r \alpha'_l (1-\delta)+(1-\alpha'_r)(1-\alpha'_l)\delta}},\\~\nonumber\\
    \ket{\zeta_{2,3}}=\frac{\sqrt{\alpha'_r \alpha'_l(1-\delta)}\ket{00}-\sqrt{(1-\alpha'_r)(1-\alpha'_l)\delta}\ket{11}}{\sqrt{\alpha'_r \alpha'_l (1-\delta)+(1-\alpha'_r)(1-\alpha'_l)\delta}}.
\end{align}
 Following the same steps as before, for finite resource consumption, we get 
\begin{equation}\label{c29}
     (1-\alpha'_r)(1-\alpha'_l)\delta(1-\delta)(1-p)^2\leq \alpha'_r \alpha'_l p^2,
\end{equation}
and,
\begin{equation}\label{c30}
     (1-\alpha'_r)\alpha'_l \delta(1-\delta)(1-p)^2\leq \alpha'_r(1-\alpha'_l) p^2.
\end{equation}

 Finally, for the state $\rho_{N_{m-1}B,\phi_3}$ (as in Eq.$~$(\ref{a9}) with $\beta=\frac{1}{2}$), the outcomes corresponding to the second Bell measurement are given by
\begin{align}
    \rho_{AB,\phi_3,\phi_0}=&\frac{\big\{(1-\alpha'_r)\alpha'_l \delta+\alpha'_r(1-\alpha'_l)(1-\delta)\big\}(1-p)}{\alpha'_r(1-\alpha'_l)-\alpha'_r(1-2\alpha'_l)p-(\alpha'_r -\alpha'_l)\delta(1-p)}\ket{\zeta_{3,0}}\bra{\zeta_{3,0}}\nonumber\\
    &+\frac{\alpha'_r \alpha'_l p}{\alpha'_r(1-\alpha'_l)-\alpha'_r(1-2\alpha'_l)p-(\alpha'_r -\alpha'_l)\delta(1-p)}\ket{00}\bra{00},\label{c31}\\~\nonumber\\
    \rho_{AB,\phi_3,\phi_1}=&\frac{\big\{(1-\alpha'_r)\alpha'_l \delta+\alpha'_r(1-\alpha'_l)(1-\delta)\big\}(1-p)}{\alpha'_r(1-\alpha'_l)-\alpha'_r(1-2\alpha'_l)p-(\alpha'_r -\alpha'_l)\delta(1-p)}\ket{\zeta_{3,1}}\bra{\zeta_{3,1}}\nonumber\\
    &+\frac{\alpha'_r \alpha'_l p}{\alpha'_r(1-\alpha'_l)-\alpha'_r(1-2\alpha'_l)p-(\alpha'_r -\alpha'_l)\delta(1-p)}\ket{00}\bra{00},\label{c32}\\~\nonumber\\
     \rho_{AB,\phi_3,\phi_2}=&\frac{\big\{(1-\alpha'_r)(1-\alpha'_l)\delta+\alpha'_r \alpha'_l (1-\delta)\big\}(1-p)}{\alpha'_r\alpha'_l+\alpha'_r(1-2\alpha'_l)p+(1-\alpha'_r -\alpha'_l)\delta(1-p)}\ket{\zeta_{3,2}}\bra{\zeta_{3,2}}\nonumber\\
    &+\frac{\alpha'_r(1-\alpha'_l)p}{\alpha'_r\alpha'_l+\alpha'_r(1-2\alpha'_l)p+(1-\alpha'_r -\alpha'_l)\delta(1-p)}\ket{10}\bra{10},\label{c33}\\~\nonumber\\
     \rho_{AB,\phi_3,\phi_3}=&\frac{\big\{(1-\alpha'_r)(1-\alpha'_l)\delta+\alpha'_r \alpha'_l (1-\delta)\big\}(1-p)}{\alpha'_r\alpha'_l+\alpha'_r(1-2\alpha'_l)p+(1-\alpha'_r -\alpha'_l)\delta(1-p)}\ket{\zeta_{3,3}}\bra{\zeta_{3,3}}\nonumber\\
    &+\frac{\alpha'_r(1-\alpha'_l)p}{\alpha'_r\alpha'_l+\alpha'_r(1-2\alpha'_l)p+(1-\alpha'_r -\alpha'_l)\delta(1-p)}\ket{10}\bra{10},\label{c34}
\end{align}
where,
\begin{align}
    \ket{\zeta_{3,0}}=\frac{\sqrt{(1-\alpha'_r)\alpha'_l \delta}\ket{01}-\sqrt{\alpha'_r(1-\alpha'_l)(1-\delta)}\ket{10}}{\sqrt{(1-\alpha'_r)\alpha'_l \delta+\alpha'_r(1-\alpha'_l)(1-\delta)}},\\~\nonumber\\
    \ket{\zeta_{3,1}}=\frac{\sqrt{(1-\alpha'_r)\alpha'_l \delta}\ket{01}+\sqrt{\alpha'_r(1-\alpha'_l)(1-\delta)}\ket{10}}{\sqrt{(1-\alpha'_r)\alpha'_l \delta+\alpha'_r(1-\alpha'_l)(1-\delta)}},\\~\nonumber\\
    \ket{\zeta_{3,2}}=\frac{\sqrt{\alpha'_r \alpha'_l(1-\delta)}\ket{00}-\sqrt{(1-\alpha'_r)(1-\alpha'_l)\delta}\ket{11}}{\sqrt{\alpha'_r \alpha'_l (1-\delta)+(1-\alpha'_r)(1-\alpha'_l)\delta}},\\~\nonumber\\
    \ket{\zeta_{3,3}}=\frac{\sqrt{\alpha'_r \alpha'_l(1-\delta)}\ket{00}+\sqrt{(1-\alpha'_r)(1-\alpha'_l)\delta}\ket{11}}{\sqrt{\alpha'_r \alpha'_l (1-\delta)+(1-\alpha'_r)(1-\alpha'_l)\delta}}.
\end{align}
Again, for finite resource consumption, we retrieve Eqs.$~$(\ref{c29}) and (\ref{c30}).

These four ({\it i.e.,} Eqs.$~$(\ref{c11} -- \ref{c12}) and Eqs.$~$(\ref{c29} -- \ref{c30})) are exactly the constraints obtained in Eqs.$~$(\ref{a18} -- \ref{a21}). So from Appendix \ref{AppendixB}, it follows that the minimal condition that has to be satisfied for finite resource consumption is 
\begin{align}
    \alpha'_r\alpha'_l\delta(1-\delta)(1-p)^2&\leq (1-\alpha'_l)(1-\alpha'_r)p^2\nonumber\\
    \Rightarrow~~~~~~~\frac{\alpha'_l\alpha'_r}{(1-\alpha'_l)(1-\alpha'_r)}&\leq \frac{p^2}{\delta(1-\delta)(1-p)^2},\label{c39}
\end{align}
which concludes the proof.
\end{proof}

\section{Distribution of Noise in Noisy Segments and Resource Consumption}  

In an \(n\)-node quantum repeater scenario, the position of a noisy segment exhibiting the narrowest (or broadest) range of \(\{p, \delta\}\), as determined by Theorem 2 for a given noisy state, explicitly depends on the distribution of \(\{\alpha_i\}\) in the free segments. The following observation outlines how the distribution of noise in the noisy segments impacts resource consumption.

\begin{observation} \label{obs3}
In an $n$-node scenario, assuming free segments share identical non-maximal pure entangled states the variance in resource consumption $R_v$ experiences a non-decreasing trend as the noisy segment migrates from the midpoint towards either end of the repeater scenario.
\end{observation}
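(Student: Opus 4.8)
The plan is to turn the achievability criterion of Theorem~\ref{theorem2} into a single scalar inequality in the position $m$ of the noisy segment, and then to show that the least free-segment concurrence compatible with it is largest when $m$ sits at the centre of the chain. Put the $n+1$ segments so that $S_m$ is noisy, carrying $\rho(p,\delta)$, with all other $n$ free segments carrying the same $\ket{\psi(\alpha)}$ of concurrence $\mathcal{C}=2\sqrt{\alpha(1-\alpha)}\in(0,1)$. Collapsing the $m-1$ left segments and the $n+1-m$ right segments by RPBES produces pure states with $\alpha'_l=\tfrac12(1+\sqrt{1-\mathcal{C}^{2(m-1)}})$ and $\alpha'_r=\tfrac12(1+\sqrt{1-\mathcal{C}^{2(n+1-m)}})$, so by Theorem~\ref{theorem2} one has $F^\star_{AB}=F^\star[\rho(p,\delta)]$ iff
\begin{align}
g(m-1)\,g(n+1-m)\ \le\ K, \qquad K:=\frac{p^2}{\delta(1-\delta)(1-p)^2}, \qquad g(k):=\frac{1+\sqrt{1-\mathcal{C}^{2k}}}{1-\sqrt{1-\mathcal{C}^{2k}}}.\nonumber
\end{align}
(For $m=1$ or $m=n+1$ this reduces, via $g(0)=1$, to Corollary~\ref{corollary}.) I would work in the regime $K\ge1$, i.e.\ $\mathcal{C}(\rho)\le2p$, where this protocol is the operative one; then for each $m$ there is a unique $\mathcal{C}_{\min}(m)\in(0,1]$ saturating the inequality, and $R_v(m)=n(1-\mathcal{C}_{\min}(m))$ is the quantity to analyse. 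The claim becomes: $\mathcal{C}_{\min}(m)$ is non-increasing (so $R_v(m)$ non-decreasing) as $m$ moves from the midpoint $m\simeq(n+2)/2$ towards $1$ or $n+1$.

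Two monotonicity facts do the work. First, for each fixed index $k\ge1$ the map $\mathcal{C}\mapsto g(k)$ is strictly decreasing on $(0,1)$, since $\sqrt{1-\mathcal{C}^{2k}}$ decreases in $\mathcal{C}$ and $s\mapsto(1+s)/(1-s)$ increases; hence $h_{\mathcal{C}}(m):=g(m-1)g(n+1-m)$ is strictly decreasing in $\mathcal{C}$ for every fixed $m$ (at least one factor carries an index $\ge1$ once $n\ge1$). Second, and this is the crux, at fixed $\mathcal{C}$ the map $m\mapsto h_{\mathcal{C}}(m)$ is ``tent-shaped'': maximal at the most balanced split $m-1\simeq n+1-m$ and non-increasing as $m$ departs towards either end. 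I would obtain this from concavity of $k\mapsto\ln g(k)$: writing $s(k)=\sqrt{1-\mathcal{C}^{2k}}$, a short computation gives $\tfrac{d}{dk}\ln g(k)=\tfrac{2}{1-s(k)^2}\,s'(k)=\tfrac{2|\ln\mathcal{C}|}{s(k)}$, which is strictly decreasing in $k$ because $s(k)$ increases; together with $\ln g(0)=0$ and continuity this yields discrete concavity of $j\mapsto\ln g(j)$ on $\{0,1,2,\dots\}$. Since $\ln h_{\mathcal{C}}(m)=\ln g(m-1)+\ln g(n+1-m)$ has the form $\phi(a)+\phi(b)$ with $a+b=n$ fixed and $\phi$ concave, the elementary ``balancing raises a concave sum'' inequality $\phi(a+1)+\phi(b-1)\ge\phi(a)+\phi(b)$ for $a\le b-1$ gives precisely the tent profile.

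It then remains to combine the two. Take positions $m<m'$ in the same half with $m'$ no farther from the midpoint than $m$, so passing $m\to m'$ moves the noisy segment towards the centre; the tent property gives $h_{\mathcal{C}}(m')\ge h_{\mathcal{C}}(m)$ for every $\mathcal{C}$. Evaluating at $\mathcal{C}=\mathcal{C}_{\min}(m)$ gives $h_{\mathcal{C}_{\min}(m)}(m')\ge h_{\mathcal{C}_{\min}(m)}(m)=K=h_{\mathcal{C}_{\min}(m')}(m')$, and since $h_{\cdot}(m')$ is strictly decreasing in its concurrence argument this forces $\mathcal{C}_{\min}(m)\le\mathcal{C}_{\min}(m')$, i.e.\ $R_v(m)\ge R_v(m')$. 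Hence $R_v$ only decreases towards the centre; read in reverse, it is non-decreasing as the noisy segment migrates from the midpoint towards either end, and the reflection symmetry $m\leftrightarrow n+2-m$ (which leaves $h_{\mathcal{C}}$ invariant) handles both directions together. The single delicate ingredient is the concavity of $\ln g$, which the explicit derivative $2|\ln\mathcal{C}|/s(k)$ settles cleanly; the only bookkeeping subtlety is the boundary index $k=0$ in the discrete-concavity step, covered by the limiting bound $\ln g(1)\ge\tfrac12\ln g(2)$ that follows from concavity of $\ln g$ on $(0,\infty)$ with $\ln g(0^+)=0$. Everything else is routine monotonicity.
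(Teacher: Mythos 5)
Your proposal is correct, and it goes further than the paper's own argument. The paper's proof of this observation consists of exactly your first step — specializing the criterion of Theorem~\ref{theorem2} (Eq.~(\ref{c39})) to identical free segments, giving the inequality $\frac{\bigl(1+\sqrt{1-\mathcal{C}^{2(m-1)}}\bigr)\bigl(1+\sqrt{1-\mathcal{C}^{2(n-m+1)}}\bigr)}{\bigl(1-\sqrt{1-\mathcal{C}^{2(m-1)}}\bigr)\bigl(1-\sqrt{1-\mathcal{C}^{2(n-m+1)}}\bigr)}\leq\frac{p^{2}}{\delta(1-\delta)(1-p)^2}$ — after which the trend in $R_v$ is established numerically: the minimal admissible concurrence is computed for each position $m$ at fixed $(n,p,\delta)$ and the resulting $R_v(m)$ is plotted (Fig.~\ref{Fig:3}). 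You instead prove the monotonicity analytically: the identity $\frac{d}{dk}\ln g(k)=2|\ln\mathcal{C}|/s(k)$ (which checks out, since $1-s^2=\mathcal{C}^{2k}$) gives concavity of $\ln g$ in $k$, the balancing inequality for concave increments gives the tent profile of $g(m-1)g(n+1-m)$ at fixed $\mathcal{C}$, and combining this with strict monotonicity of $g$ in $\mathcal{C}$ yields $\mathcal{C}_{\min}(m)\le\mathcal{C}_{\min}(m')$ whenever $m'$ is nearer the centre, hence $R_v(m)\ge R_v(m')$; this is valid for all $n$ and all $(p,\delta)$ with $p^2>\delta(1-\delta)(1-p)^2$, not just the plotted instance, so your route buys a genuine proof where the paper offers a numerical illustration. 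Two small caveats: Theorem~\ref{theorem2} is stated as a sufficient condition for that particular protocol (the "iff" in your reduction is an overstatement, though it is exactly how the paper itself defines the admissible region and hence $R_v$), and your balanced point $m=(n+2)/2$ — the symmetry point of the inequality under $m\leftrightarrow n+2-m$ — differs by one unit of labeling from the positions quoted in the caption of Fig.~\ref{Fig:3}; neither affects the validity of your argument.
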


\begin{proof}
For better visualisation we rewrite  Eq.$~$(\ref{c39}) in terms of concurrence as
\begin{equation}\label{c40}
   \frac{\Big(1+\sqrt{1-\mathcal{C}^{2}(\Psi_l)}\Big)\Big(1+\sqrt{1-\mathcal{C}^{2}(\Psi_r)}\Big)}{\Big(1-\sqrt{1-\mathcal{C}^{2}(\Psi_l)}\Big)\Big(1-\sqrt{1-\mathcal{C}^{2}(\Psi_r)}\Big)}\leq\frac{p^{2}}{\delta(1-\delta)(1-p)^2}.
\end{equation}

If all the consecutive $\{S_i\}^{n+1}_{i=1}$, except for $i=m$, share the same pure state $\ket{\psi_i}=\ket{\psi}=\sqrt{\alpha}\ket{00}+\sqrt{1-\alpha}\ket{11}$, \textit{i.e.,} $\alpha_i=\alpha$ for $i\in[1,n+1]\setminus \{m\}$, where $i$ is a natural number, with concurrence $\mathcal{C}=2\sqrt{\alpha(1-\alpha)}$ then Eq.$~$(\ref{c40}) can be written as
\begin{equation}\label{f2}
   \frac{\Big(1+\sqrt{1-\mathcal{C}^{2(m-1)}}\Big)\Big(1+\sqrt{1-\mathcal{C}^{2(n-m+1)}}\Big)}{\Big(1-\sqrt{1-\mathcal{C}^{2(m-1)}}\Big)\Big(1-\sqrt{1-\mathcal{C}^{2(n-m+1)}}\Big)}\leq\frac{p^{2}}{\delta(1-\delta)(1-p)^2}.
\end{equation}

\end{proof}

\begin{figure}[h!]
     \centering
         \centering
         \includegraphics[height=150px, width=200px]{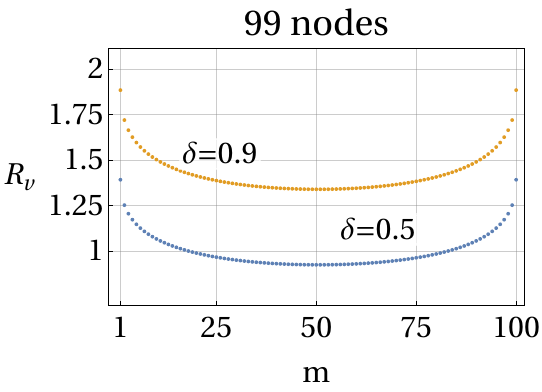}
         \caption{The saved resource \(R_v\) as a function of the noisy node position (\(m\)) for \(n = 99\) and the noisy state \(\rho(p, \delta)\) with parameters \(p\) and \(\delta\). When all free segments share identical states (i.e., \(\alpha_i = \alpha\)), the minimum value of \(R_v\) occurs at \(m = n/2\) for even \(n\), and at \(m \in \{(n-1)/2, (n+1)/2\}\) for odd \(n\). Moreover, as the noisy segment moves closer to either end of the chain, \(R_v\) consistently increases.}
         \label{Fig:3}
\end{figure}
\section{Two Noisy segments}\label{G}
  
       Let us consider a single node scenario $(A-N_1-B)$ with two segments $S_1$ and $S_2$. Segment $S_1$ shares a rank two mixed state $\rho_1=\rho(p,\delta)= p \mathrm{P}+(1-p)\zeta (\delta) $, where $\zeta (\delta):=\ket{\zeta (\delta)}\bra{\zeta (\delta)}$ with $\ket{\zeta (\delta)}=\sqrt{\delta}\ket{00}+\sqrt{1-\delta}\ket{11}$, and $\mathrm{P}=\ket{01}\bra{01}$. Let us assume that segment $S_2$ also has the same state shared, \textit{i.e.,} $\rho_2=\rho(p,\delta)=\rho_1$. Using Eq.$~$(4) of the main text we have
       \begin{equation}
           F^{\star}_{AB} \leq \min \{F^{\star}(\rho(p,\delta)), F_C\},\label{g1}
       \end{equation}
       where, $F_C=\frac{1+C^2(\rho(p,\delta))}{2}$.

       If $F_C<F^{\star}(\rho(p,\delta))$, then according to Eq.$~$(\ref{g1}), it is impossible to maintain the same fidelity between $A$ and $B$ as achieved in segment $S_1$ between $A$ and $N_1$. Let us now consider Bell measurement ($\beta=\frac{1}{2}$) at node $N_1$. Eq.$~$(\ref{g1}) determines the maximum fidelity of the final state between Alice and Bob. Using Eq.$~$(\ref{a1}) with $\delta=0.95$, we can determine the range of the noise parameter where it is impossible to maintain the same fidelity between Alice and Bob as in segment $S_1$ (or segment $S_2$) and it can be calculated from the previously mentioned inequality as
       \begin{align}
           0\leq p <0.25.\label{g2}
       \end{align}

       In the single node scenario where $\rho_1=\rho(p,\delta)$, but segment $S_2$ has a pure two-qubit entangled state $\ket{\psi_2}=\sqrt{\alpha}\ket{00}+\sqrt{1-\alpha}\ket{11}$, using Eq.$~$(\ref{a18}) with $\delta=0.95~\text{and}~\alpha=0.55$, we find that for 
       \begin{align}
           0.19\leq p <1,\label{g3}
       \end{align}
       we can establish $F^{\star}_{AB}=F^{\star}(\rho_1)$.

       Thus, from Eqs.$~$(\ref{g2}) and (\ref{g3}), we can see that for
       \begin{align}
           0.19\leq p <0.25,\label{g4}
       \end{align}
       it is impossible to establish the same fidelity between Alice and Bob as in the initial state if both nodes are noisy. This is in contrast to the case where one of the nodes is not noisy, allowing us to maintain the same fidelity.

\section{Generation of states in noisy segments}
As established by Lemma 1 of the main text, to achieve the same teleportation fidelity between the sender and receiver (using non-maximal resources) as that of the state in the noisy segment, this state cannot have any arbitrary noise. Identifying states that enable the same end-to-end teleportation fidelity in a quantum repeater scenario is generally challenging. However, in Theorem 1, we present an explicit protocol for a particular class of rank-2 noisy states that makes this feasible. This noise model is not merely a theoretical construct—it is well-grounded in experimental settings.  These kinds of states are prepared when one qubit of the non-maximally entangled states of the form $\sqrt{\omega}\ket{01}+\sqrt{1-\omega}\ket{10}$ passes through an Amplitude-Damping Channel (ADC) \cite{Nielsen2010} with Kraus operators $K_0= \ket{0}\bra{0}+\sqrt{1-p}\ket{1}\bra{1}$ and $K_1= \sqrt{p} \ket{0}\bra{1}$ with the channel parameter $p\in (0,1)$. The resultant rank-2 state is given as
\begin{equation}
\rho(p,\omega)= (1-p\omega) \ket{\Omega}\bra{\Omega}+ p(1-\omega) \ket{00}\bra{00},\label{h1}
\end{equation}
where 
\[
\ket{\Omega}= \left(  \frac{\sqrt{\omega}}{\sqrt{ (1-p\omega)}}\ket{01}+ \frac{\sqrt{(1-\omega)(1-p)}}{\sqrt{ (1-p\omega)}}\ket{10}\right).
\]
These are of the same form as the states in the noisy segment considered in Theorem 1. These results are due to the decay of an electronically excited state of atoms if qubits are built out of atomic levels, or loss of a photon whenever qubits are encoded in the photons \cite{riera2021entanglement}. Thus, while studying the noise robustness of our study later,  whenever we use this kind of noise model, we will refer to it as photon loss noise.
\section{Noise Robustness in Quantum Repeater Scenarios}\label{AppendixH}
The investigation so far considered the use of pure non-maximally entangled states in the {\bf free segments}. Here, we extend the analysis to explore how the average fidelity changes when these resources are replaced with noisy, mixed entangled states. The updated setup consists of two segments: a noisy segment characterized by a noisy state $\rho(p, \delta)$ and a free segment supplied with a noisy state $\sigma$. Together, these states facilitate the establishment of a quantum communication line between the end-to-end parties. To assess the impact of noise, we perform a robust analysis by considering two scenarios for $\sigma$: $(i)$ an asymptotic number of copies to determine its extreme potential, and $(ii)$ a single copy to address practical relevance. This study offers insights into both theoretical limits and realistic implementations. Importantly, the noisy segment is consistently analyzed using a single-copy scenario, while cases $(i)$ and $(ii)$ are applied exclusively to the resource in the free segment, ensuring a focused and meaningful exploration of its role in the communication setup. To recall the scenario: we consider a single-node setup where a node acts as a repeater between Alice and Bob. Without loss of generality, let the first segment between Alice and the node be a free segment characterized by the noisy state \(\sigma\), while the second segment, shared between the node and Bob, contains a single copy of a fixed noisy state \(\rho\).
\subsection{Asymptotic Analysis: Higher Noise in $\rho$ $\rightarrow$ Fewer Copies of $\chi$ are Sufficient}
In the free segment, multiple copies of the entangled resource $\chi$ can be used to improve the communication channel. A traditional approach involves distilling $\chi$ into maximally entangled states $\ket{\phi^+}$ to enable entanglement swapping and establish the same entangled state $\rho(p, \delta)$ between the end parties. However, achieving optimal teleportation fidelity may be more efficient by distilling a specific non-maximally entangled state $\ket{\psi}$ instead of a maximally entangled state. This distillation process can be expressed as: 
\begin{equation} \left(\chi\right)^{\otimes j} \rightarrow \left(\ket{\phi^+}\bra{\phi^+}\right)^{\otimes k} \leftrightarrow \left(\ket{\psi}\bra{\psi}\right)^{\otimes k^{\prime}}.
\end{equation}

Here, the first transformation, which distills maximally entangled states from $\chi$, is not necessarily reversible for general states $\chi$, even in the asymptotic limit \cite{Horodecki2009}. On the other hand, the second transformation, converting maximally entangled states to non-maximally entangled states, is reversible in the asymptotic limit \cite{BennettPure1996}. The rate at which maximally entangled states are distilled from $\chi$ is given by: \begin{equation} R_{\phi^+}(\chi) = \lim_{j \rightarrow \infty} \frac{k}{j}. \end{equation} To determine the distillation rate for non-maximally entangled states, we define the following quantity: \begin{equation} R_{\psi}(\chi) = \lim_{j \rightarrow \infty} \frac{k^{\prime}}{j} = \lim_{j \rightarrow \infty} \left( \frac{k}{j} \times \frac{k^{\prime}}{k} \right) = \frac{R_{\phi^+}(\chi)}{S(\rho_{\psi})}, \end{equation} where $\rho_{\psi} := \Tr_B\left(\ket{\psi}_{AB}\bra{\psi}\right)$ is the reduced state of one subsystem. For a non-maximally entangled state $\ket{\psi}$, the von Neumann entropy $S(\rho_{\psi}) < 1$, which implies that $R_{\psi}(\chi) > R_{\phi^+}(\chi)$. Therefore, in the asymptotic limit, preparing one copy of $\ket{\psi}$ requires $\frac{1}{R_{\psi}(\chi)}$ copies of $\chi$, which is fewer than the $\frac{1}{R_{\phi^+}(\chi)}$ copies needed to prepare a maximally entangled state. In general, finding the optimal distillation rate ($R_{\phi^+}$) for a mixed state is challenging due to the complexity of determining the optimal LOCC protocol over its asymptotic copies. However, for pure states, the protocol is known \cite{BennettPure1996} and the von Neumann entropy of the marginal state quantifies the rate. This simplifies our analysis, as for any given distillation rate $D_{\phi^+}(\chi)$ with some LOCC protocol $\Lambda^D_{LOCC}$, the distillation rate of a non-maximally entangled state is given by: 
\begin{equation}
D_{\psi}(\chi) = \frac{D_{\phi^+}(\chi)}{S(\rho_{\psi})}. \end{equation}
For our analysis, we consider $\chi=F\ketbra{\Psi^-}+\frac{(1-F)}{3}\Big(\ketbra{\Psi^+}+\ketbra{\Phi^+}+\ketbra{\Phi^-}\Big)$ as a Werner state and use the value of distillation $ D_{\Phi^+}(\chi)=1+F \log_2 (F)+(1-F) \log_2 \left(\frac{1-F}{3}\right)$ using the hashing protocol \cite{Bennett1996}. Suppose we follow the traditional protocol to establish optimal teleportation by distilling maximally entangled states in the free segment. In that case, the number of copies of $\chi$ required in the process is given by $\frac{1}{D_{\phi^+}(\chi)}$, which does not depend on the noisy state $\rho(p,\delta)=p\ket{01}\bra{01}+(1-p)\ket{\zeta (\delta)}\bra{\zeta (\delta)}$ in the noisy segment. However, our protocol suggests that a non-maximally entangled state $\ket{\psi}=\sqrt{\alpha}\ket{00}+\sqrt{1-\alpha}\ket{11}$ can sufficiently establish the same average teleportation fidelity with the noisy state $\rho(p,\delta)$. The upper bound on this value of $\alpha$ can be achieved from Appendix \ref{AppendixB}. For $\beta=\delta=\frac{1}{2}$ its exact expression can be calculated from Eq. (\ref{a25}) as
 \begin{equation}
            \alpha=\frac{4 p^2}{5 p^2-2 p+1}.
        \end{equation}
        Also von Neumann entropy of $\ket{\psi}$ is given by
        \begin{align}
            S(\rho_{\psi})=-\alpha \log_2 (\alpha)-(1-\alpha) \log_2 (1-\alpha).\label{I6}
        \end{align}
        Using these, the number of copies of $\chi$ required can be given as
        \begin{align}
         \frac{1}{D_{\psi}(\chi)}=  \frac{S(\rho_{\psi})}{D_{\phi^+}(\chi)}=\frac{4 p^2 \log_2 \left(\frac{4 p^2}{5 p^2-2 p+1}\right)+(p-1)^2 \log_2
   \left(\frac{(p-1)^2}{5 p^2-2 p+1}\right)}{(5 p^2-2 p+1) \left(-F \log_2 (1-F)+\log_2 \left(-\frac{2}{3} (F-1)\right)+F \log_2 (3 F)\right)}.
        \end{align}
        
        In Figure~\ref{f11}, we explore how this number decreases as the noise in the noisy segment increases for a fixed value of $F=0.8161$.

\begin{figure}[h!]
    \centering
       \includegraphics[height=150px, width=220px]{NoOfCopiesRequiredMaxvsNonMax.pdf}
         \caption{{\bf Noise vs. Copies :} This figure illustrates how the number of Werner states required as resources in the free segment decreases as the noise in the noisy segment increases. Unlike in traditional entanglement swapping protocols, where the number of required copies remains independent of noise, our protocol shows that higher noise in the noisy segment allows for a reduction in the number of copies needed in the free segment to achieve the same level of teleportation fidelity.}
        \label{f11}
\end{figure}

We can intuitively understand how noise in the free segments reduces the number of required copies. It is noteworthy that the conventional E-SWAP protocol does not have this flexibility and remains independent of noise. This observation motivates us to investigate how effectively our protocol conserves resources, as defined in Eq.~(\ref{d3}). Understanding this will provide further insight into the advantages of our protocol in quantum repeater scenarios.

Furthermore, suppose we have $n$ free segments in a linear network. As per discussed in Appendix \ref{AppendixD}, all free segments share the same non-maximally entangled state with a Schmidt coefficient $\alpha$. If we aim to distill this state in each free segment from an entangled resource $\chi$, which in this section is taken to be a Werner state, the number of copies of $\chi$ required in the $n$ node setup is given by
\begin{equation}
N=\frac{n}{D_{\psi}(\chi)} = \frac{nS(\rho_{\psi})}{D_{\phi^+}(\chi)}.\label{I8} \end{equation}
Furthermore, the Schmidt coefficient $\alpha$ of a non-maximally entangled state is related to its concurrence $\mathcal{C}(\alpha)$ by
$\alpha = \frac{1+\sqrt{1-\mathcal{C}^2}}{2}.$ Substituting this in Eq. (\ref{I6}) and then using Eq. (\ref{d3}) we get
\begin{align}
            S(\rho_{\psi})=-\Bigg(\frac{1+\sqrt{1-(1-\frac{R_v}{n})^2}}{2}\Bigg) \log_2 \Bigg(\frac{1+\sqrt{1-(1-\frac{R_v}{n})^2}}{2}\Bigg)-\Bigg(\frac{1-\sqrt{1-(1-\frac{R_v}{n})^2}}{2}\Bigg) \log_2 \Bigg(\frac{1-\sqrt{1-(1-\frac{R_v}{n})^2}}{2}\Bigg).\label{I9}
        \end{align}
 We use this in Eq. (\ref{I8}) and get

\begin{align}
  N=  \frac{n}{D_{\psi}(\chi)}= \frac{nS(\rho_{\psi})}{D_{\phi^+}(\chi)}=\frac{n\Bigg(-\Bigg(\frac{1+\sqrt{1-(1-\frac{R_v}{n})^2}}{2}\Bigg) \log_2 \Bigg(\frac{1+\sqrt{1-(1-\frac{R_v}{n})^2}}{2}\Bigg)-\Bigg(\frac{1-\sqrt{1-(1-\frac{R_v}{n})^2}}{2}\Bigg) \log_2 \Bigg(\frac{1-\sqrt{1-(1-\frac{R_v}{n})^2}}{2}\Bigg)\Bigg)}{\Big(1+F \log_2 (F)+(1-F) \log_2 \left(\frac{1-F}{3}\right)\Big)}
\end{align}
For a fixed fidelity value $F=0.8161$, we plot Figure \ref{f111}. The plot clearly illustrates the inverse proportionality between the number of required copies and the saved resource in our protocol. As we understand, the number of required copies per segment decreases due to noise, leading to a reduction in the total resource required compared to the ESWAP protocol. Consequently, an amount of resource equal to $(1 - C(\alpha))$ is saved at each free segment, and the total saved resource over $n$ free segments is given by  
\[
R_v = n(1 - C(\alpha)).
\]  
Figure \ref{f111} provides an intuitive visualization of the significance of $R_v$.

\begin{figure}[h!]
    \centering
       \includegraphics[height=150px, width=220px]{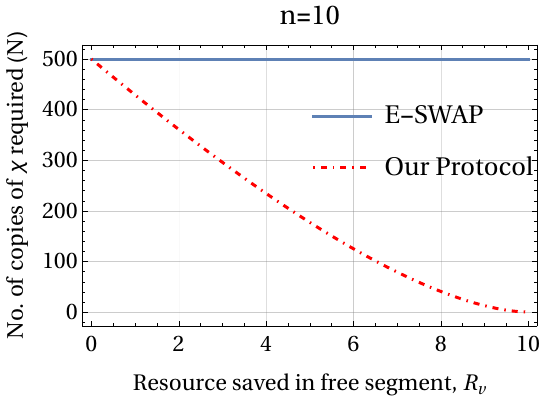}
         \caption{{\bf Resource Saved vs. Number of Copies :} This figure illustrates how the number of Werner states required as resources in the free segment decreases due to noise. Consequently, the amount of resource saved, \( R_v \), in terms of entanglement, increases in the free segment. The total number of segments considered here is 10. This plot provides a more clear understanding of the nature of the resource savings achieved.}
        \label{f111}
\end{figure}
\subsection{Single Shot Analysis}
Motivated by practical relevance, we perform a single-shot analysis considering specific noisy models for the free segment, such as white noise and photon loss noise. Additionally, we account for noise in the measurements at the nodes and examine how the fidelity changes when each type of noise is included separately, as well as when they are combined in the setup.
\subsection*{\textbf{Case 1:} White noise in the free segment}

We have demonstrated that non-maximally entangled states ($\ket{\psi}$) can effectively achieve optimal teleportation fidelity. Next, we will showcase the robustness of our findings in the presence of noise. First, we consider the scenario in which white noise is added to the non-maximally entangled state in the free segment. As we are looking at the change in fidelity against the noise injected in the free segment we choose the non-maximally entangled state to be
\begin{align}
            \chi_1=(1-q)\ketbra{\Phi'}+q\frac{\mathbb{I}}{4},
        \end{align}
        where $\ket{\Phi'}=\sqrt{\frac{3}{4}}\ket{00}+\sqrt{\frac{1}{4}}\ket{11}$.
This can be thought of as a scenario where with probability $(1-q)$ we have an entangled state, while with probability $q$ white noise is added. For $q=0$, this also boils down to the scenario in Appendix \ref{AppendixB}.

The state in the noisy segment is chosen to be
\begin{align}
            \rho(p)=(1-p)\ketbra{\Phi^{+}}+p\ketbra{01},
        \end{align}
  
         where $\ket{\Phi^{+}}=\frac{\ket{00}+\ket{11}}{\sqrt{2}}$.

 From Appendix \ref{AppendixB} this scenario corresponds to $\delta=\frac{1}{2}$ and $\alpha=\frac{3}{4}$. In addition, we only perform Bell measurement at the node between Alice and Bob; thus, we can choose $\beta=\frac{1}{2}$. Using these in Eqs. (\ref{a25}) and (\ref{a26}) the value of the optimal fidelity is
         \begin{align}
             F^{*}_{AB}=\frac{(p+1)^2}{8 p},\label{i4}
         \end{align}
         with the noise parameter p of the state in the noisy segment constrained to the range $\frac{\sqrt{3}}{2+\sqrt{3}}\leq p<1$.

In the case of a noisy state in the free segment, the protocol discussed earlier might not give us the optimal fidelity. However, if you restrict the protocol we followed for $q=0$ the optimal fidelity for the noisy free segment will be lower bounded by the following quantity
   
        \begin{align}
            F^{noisy}_{AB}=\frac{2 (p-1)^2 (2 p-1) (1-q)^3+2 (p-1) (p (p+5)-2) (1-q)^2+(p+1) (p (5 p+16)-5) (1-q)+(p+1)^2 (p+3)}{8 ((p-1) (1-q)+p+1) (5 p (1-q)+p+q)}
        \end{align}
  The range of $p$ and $q$ for which this expression is valid is given by
        \begin{align}
             \frac{1}{2}\leq (1-q)\leq 1~\&\&~\frac{\sqrt{3} (1-q)-q}{\sqrt{3} (1-q)-q+2}<p<\frac{-2 (1-q)^2-2 (1-q)+1}{2 (1-q)^2-4 (1-q)-1}
        \end{align}
       Both of these reduce to Eq. (\ref{i4}) when $q=0$. 
       
       We are interested in the percentage change in fidelity which can be mathematically written as
       \begin{align}
            F\%=\frac{F^{*}_{AB}-F^{noisy}_{AB}}{F^{*}_{AB}}\times100.
        \end{align}

      In Table \ref{table}, we have tabulated the changes in fidelity under various noise conditions. Additionally, we illustrate this variation in Figure \ref{f5} by plotting the fidelity against both the percentage increase in noise injected into the free segment and the noise parameter of the state in the noisy segment.
     
 \begin{table}
    \centering
     \begin{tabular}{ |p{1cm}||p{1cm}|p{1cm}|p{1cm}|p{1cm}|  }
 \hline
 \multicolumn{5}{|c|}{Percentage change in Fidelity} \\
 \hline
  \hline
 \multicolumn{1}{|@{}l||}{\backslashbox[1cm][l]{ \qquad p}{Noise \%}}& 2\% & 4\%& 6\%& 8\%\\
 \hline
 \hline
 \centering 0.70   &  0.27\%    &0.54\%&   0.80\% & 1.06\% \\
  \hline
 \centering0.75& 0.20\%    &0.40\%&   0.60\% & 0.80\% \\
  \hline
\centering 0.80 & 0.14\%    &0.29\%&   0.43\% & 0.58\% \\

  \hline
 \centering0.85 & 0.11\%    &0.20\%&   0.30\% & 0.39\% \\
  \hline
\end{tabular}
   \caption{{\bf Robustness of teleportation fidelity against noise in free and noisy segments.} When the free segment has a noisy state $\chi = (1-q) \ket{\psi}\bra{\psi} + q\frac{\mathbb{I}}{4}$, where $\ket{\psi}$ achieves the OFEF for the noisy segment $\rho(p) = (1-p)\ket{\phi^+}\bra{\phi^+} + p\ketbra{01}$, the end-to-end teleportation fidelity decreases due to the white noise factor $q$. This reduction, quantified by the percentage change $F\%=\frac{F^{*}_{AB}-F^{noisy}_{AB}}{F^{*}_{AB}}\times 100$, remains robust under increasing $q$. Interestingly, as the noise in the noisy segment ($p$) increases, $F\%$ decreases, indicating a mitigating effect of $p$ on fidelity loss.}
    \label{table}
\end{table}
We can also compare the scenario in which same amount of white noise is added to the maximally entangled state and a non-maximally entangled state in the free segment as earlier. For $q=0$ in the range given by Eq. (\ref{a26}) the difference between the fidelity of both these cases will be zero again courtesy Appendix \ref{AppendixB}. But in the case of noise injection with $q\neq 0$, there will be a finite difference between the fidelities.

Here, the state in the noisy segment is given by
         \begin{align}
            \rho(p)=(1-p)\ketbra{\Phi}+p\ketbra{01},
        \end{align}
        with  $\ket{\Phi}=\sqrt{\frac{3}{5}}\ket{00}+\sqrt{\frac{1}{5}}\ket{11}$ \textit{i.e} $\delta=\frac{3}{5}$.

 First, we consider the state in the free segment to be 
   \begin{align}
            \chi_{1}=(1-q)\ketbra{\Phi^{+}}+q\frac{\mathbb{I}}{4}.
        \end{align}
        
         The fidelity that we achieve between Alice and Bob will again not be optimal and is given as
       \begin{align}
            F^{ME}_{AB}=
            \frac{2 p^2 (2-q) (2 (1-q)+1)+p ((28-13 (1-q)) (1-q)+11)-9 q (1-q)+12}{10 (p (8 (1-q)+2)-3 (1-q)+3)}.\label{i25}
        \end{align}
The range of p and q for which this expression is valid is given by
        \begin{align}
            \frac{1}{2}\leq (1-q)\leq 1~\&\&~\frac{2 \sqrt{6} (1-q)+3 (1-q)-3}{2 \sqrt{6} (1-q)+8 (1-q)+2}<p<\frac{9 (1-q)^2+6 (1-q)-3}{4 (1-q)^2+6 (1-q)+2}
        \end{align}

        Next, we consider a case in which white noise is added to a non-maximally entangled state in the free segment, \textit{i.e.,}
        \begin{align}
            \chi_{2}=(1-q)\ketbra{\Phi'}+q\frac{\mathbb{I}}{4}.
        \end{align}
         The fidelity that we achieve between Alice and Bob is given in this case as
      \begin{align}
            F^{NME}_{AB}=&
            \frac{3 (p-1)^2 (11 p-6) (1-q)^3+5 (p-1) (p (5 p+24)-9) (1-q)^2}{10 (3 (p-1) (1-q)+2 p+3) (p (13 (1-q)+2)-3 (1-q)+3)}\nonumber\\
            &\qquad\qquad\qquad\qquad\qquad +\frac{(2 p+3) (p (14 p+57)-21) (1-q)+(p+4) (2 p+3)^2}{10 (3 (p-1) (1-q)+2 p+3) (p (13 (1-q)+2)-3 (1-q)+3)}.\label{i28}
        \end{align}

 The range of p and q for which this expression is valid is given as
       \begin{align}
           \frac{1}{2}\leq (1-q)\leq 1~\&\&~ \frac{3 \sqrt{2} (1-q)+3 (1-q)-3}{3 \sqrt{2} (1-q)+3 (1-q)+2}<p<\frac{-6 (1-q)^2-6 (1-q)+3}{4 (1-q)^2-11 (1-q)-2}
        \end{align}
 We again calculate the percentage change in the fidelity due to noise introduction as
        \begin{align}
            F\%=\frac{F^{ME}_{AB}-F^{NME}_{AB}}{F^{ME}_{AB}}\times100.\label{i22}
        \end{align}

 The range where both of these expressions are valid is given as
        \begin{align}
             &\left(\frac{1}{2}\leq (1-q)\leq \frac{3-3 \sqrt{2}+2 \sqrt{6}}{3+3 \sqrt{2}}~\&\&~ \frac{2 \sqrt{6} (1-q)+3 (1-q)-3}{2 \sqrt{6} (1-q)+8 (1-q)+2}<p<\frac{-6 (1-q)^2-6 (1-q)+3}{4 (1-q)^2-11 (1-q)-2}\right)\nonumber\\
            &||
   \left(\frac{3-3 \sqrt{2}+2 \sqrt{6}}{3+3 \sqrt{2}}<(1-q)\leq 1~\&\&~ \frac{3 \sqrt{2} (1-q)+3 (1-q)-3}{3 \sqrt{2} (1-q)+3 (1-q)+2}<p<\frac{-6 (1-q)^2-6 (1-q)+3}{4 (1-q)^2-11 (1-q)-2}\right)\label{i21}
        \end{align}

 Both the expressions in Eqs. (\ref{i25}) and (\ref{i28}) reduce to the optimal value
        \begin{align}
            F^*_{AB}=\frac{(2 p+3) (3 p+2)}{50 p},\label{h23}
        \end{align} 
        and the range becomes
        \begin{equation}
            \frac{3}{7} \left(5 \sqrt{2}-6\right)\leq p<1,\label{h24}
        \end{equation}
which are the same as the Eqs. (\ref{a25}) and (\ref{a26}) with $\delta=\frac{3}{5}, \alpha=\frac{3}{4}$ and $\beta=\frac{1}{2}$.

     We plot the quantity in Eq. (\ref{i22}) with both the percentage increase in the noise injected in the free segment and the noise parameter of the state in the noisy segment in Figure \ref{f6}.
\begin{figure}[h!]
         \begin{subfigure}{.5\textwidth}
              \centering
         \includegraphics[height=200px, width=250px]{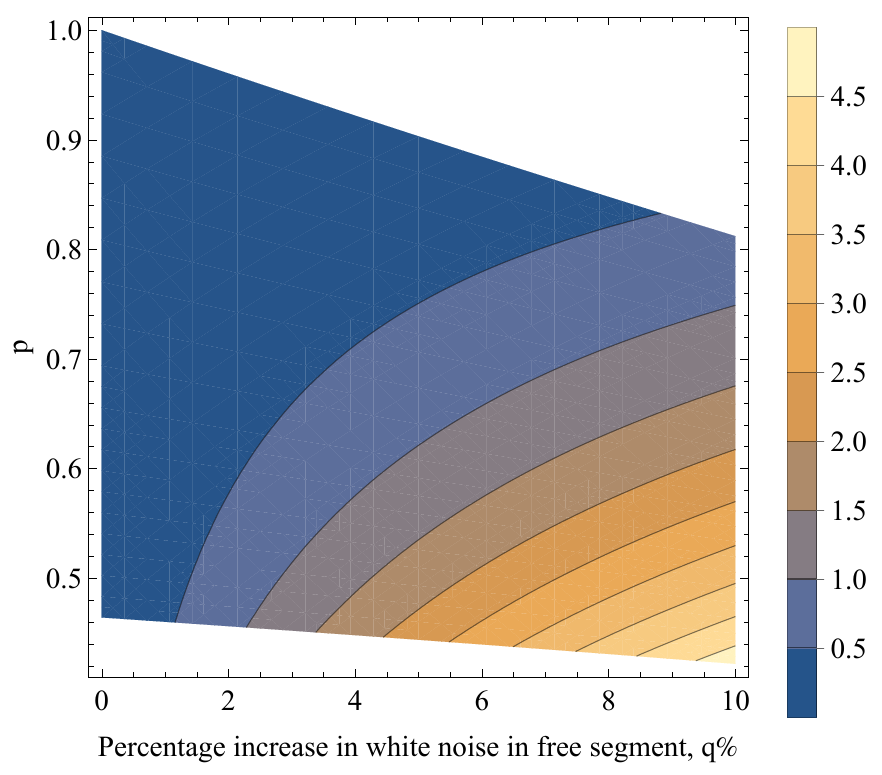}
         \caption{\centering}
         \label{f5}
         \end{subfigure}%
    \begin{subfigure}{.5\textwidth}
         \centering
         \includegraphics[height=200px, width=250px]{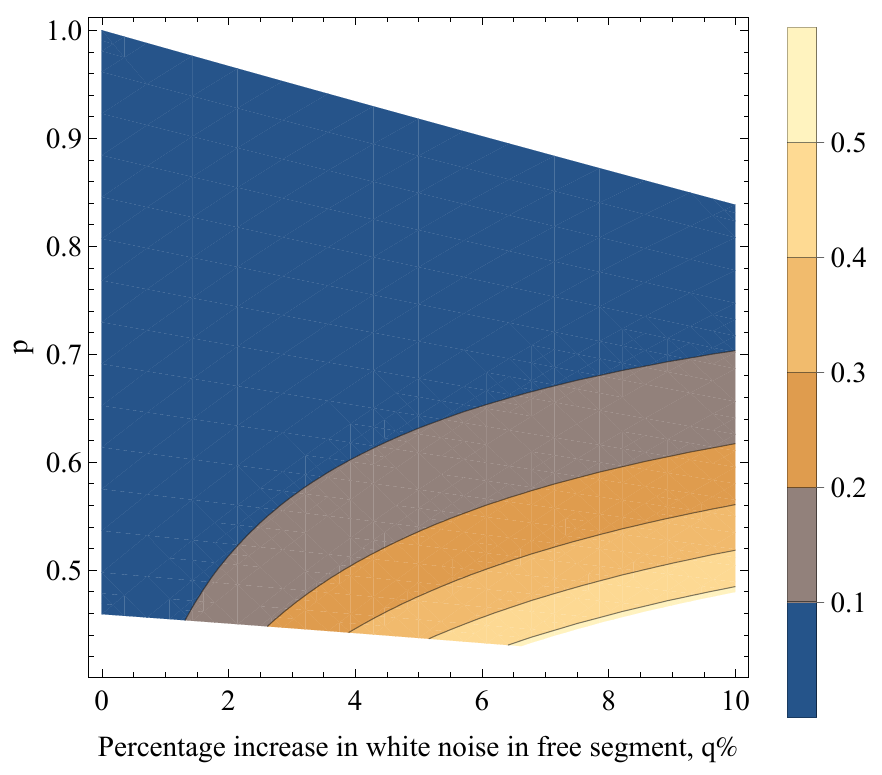}
         \caption{\centering}
         \label{f6}
    \end{subfigure}
    \caption{\textbf{(a)} Percentage change in fidelity, \( F\% = \frac{F^{*}_{AB} - F^{\text{noisy}}_{AB}}{F^{*}_{AB}} \times 100 \), as a function of the noise parameter \( p \) in the shared state within the noisy segment and the noise parameter \( q \) induced in the free segment. Notably, as \( p \) increases, the percentage change in fidelity decreases. Additionally, even for sufficiently high values of \( q \), the change in fidelity remains relatively low.\\ 
    \textbf{(b)} Percentage change in fidelity, \( F\% = \frac{F^{\text{ME}}_{AB} - F^{\text{NME}}_{AB}}{F^{\text{ME}}_{AB}} \times 100 \), when white noise (\( q \)) is added to the maximally entangled state compared to the non-maximally entangled state in the free segment, along with an increase in the noise parameter (\( p \)) of the state in the noisy segment. Similar phenomena are observed in the earlier plot. Both plots suggest that the noise in the noisy segment and the noise injected in the free segment are not arbitrary but are interrelated, as indicated by Eq. (\ref{i21}). Depending on the permissible tolerance for the fidelity change, a trade-off exists between these two noise parameters. }\label{BigF}
\end{figure}
\subsection*{\textbf{Case 2:} Photon-loss noise in the free segment}
 Next, we will showcase the robustness of our findings in the presence of a physically motivated noise model arising from the distribution of entanglement in a photonic setup that experiences undesirable photon losses. Consequently, the free segment is characterized as follows:
 \begin{align}
    \chi=(1-q)\ketbra{\Psi}+q\ketbra{00},\label{H8}
        \end{align}
        where $\ket{\Psi}=\sqrt{\frac{3}{4}}\ket{01}+\sqrt{\frac{1}{4}}\ket{10}$.
This can be thought of as a scenario where with probability $(1-q)$ we successfully distribute the entangled state, while with probability $q$ photon loss occurs. For $q=0$, this boils down to the scenario in Appendix \ref{AppendixB}.

The state in the noisy segment is of the similar form given in Eq. (\ref{h1}) and is expressed as
\begin{align}
            \rho(p)=(1-p)\ketbra{\Phi^{+}}+p\ketbra{01},
        \end{align}
  
         where $\ket{\Phi^{+}}=\frac{\ket{00}+\ket{11}}{\sqrt{2}}$.

         From Appendix \ref{AppendixB} this scenario also corresponds to $\delta=\frac{1}{2}$ and $\alpha=\frac{3}{4}$. We again perform Bell measurement at the node between Alice and Bob, thus we can choose $\beta=\frac{1}{2}$. Using these, the expression of optimal fidelity is the same as in Eq. (\ref{i4}).

For a noisy state in the free segment, the protocol that achieves optimality for \( q = 0 \), when applied to the noisy case, results in a fidelity given by:

        \begin{align}
            F^{q}_{AB}=\frac{p^3 ((19-9 (1-q)) (1-q)-8)+p^2 (1-q) (13 (1-q)-9)+p (8-3 (1-q) q)-(1-q)^2+(1-q)}{16 p (p (3 (1-q)-2)-2 (1-q)+2)}.
        \end{align}
  For the following range of $p$ and $q$ this above fidelity is valid and exceeds the classical bound:
        \begin{align}
            &\left(\frac{4}{4+\sqrt{3}}<(1-q)\leq \frac{6}{6+\sqrt{3}}~\&\&~\frac{\sqrt{3}}{6+\sqrt{3}}<p<-\frac{(1-q)}{7 (1-q)-8}\right)\nonumber\\
            &||\left(\frac{6}{6+\sqrt{3}}<(1-q)<1~\&\& ~\frac{\sqrt{3} (1-q)+4
   (1-q)-4}{\sqrt{3} (1-q)+6 (1-q)-4}<p<-\frac{(1-q)}{7 (1-q)-8}\right)\nonumber\\
   &|| \left(q=0~\&\&~ \frac{\sqrt{3}}{2+\sqrt{3}}\leq p<1\right)
        \end{align}
       Both of these reduce to Eq. (\ref{i4}) when $q=0$.
       
       To quantify the effect of noise, we focus on the percentage change in fidelity, which can be expressed mathematically as: 
       \begin{align}
             F\%=\frac{F^{*}_{AB}(\rho)-F^{q}_{AB}}{F^{*}_{AB}(\rho)}\times 100.
        \end{align}

        We plot this quantity with both the percentage increase in the noise injected in the free segment and the noise parameter of the state in the noisy segment in Figure$~$\ref{fig1}.\\



We also present the percentage change in the fully entangled fraction when the same noise level is applied to both maximally and non-maximally entangled states. Since they provide the same advantage in the noiseless case, our goal is to examine the robustness of this feature under noise. Specifically, for \( q = 0 \) within the range given by Eq. (\ref{a26}), the fidelity difference between these two cases is zero, as shown in Appendix \ref{AppendixB}. However, when noise is introduced (\( q \neq 0 \)), a finite difference between the fidelities emerges.

Here, the state in the noisy segment is given by
        \begin{align}
            \rho(p)=(1-p)\ketbra{\Phi}+p\ketbra{01},
        \end{align}
        with $\ket{\Phi}=\sqrt{\frac{3}{5}}\ket{00}+\sqrt{\frac{1}{5}}\ket{11}$ \textit{i.e.,} $\delta=\frac{3}{5}$.

 First, we consider the state in the free segment to be 
  \begin{align}
            \chi_1=(1-q)\ketbra{\Psi^{+}}+q\ketbra{00},
        \end{align}
         with $\ket{\Psi^{+}}=\frac{\ket{01}+\ket{10}}{\sqrt{2}}$.
         The fidelity that we achieve between Alice and Bob will again not be optimal and is given as
       \begin{align}
            F^{ME}_{AB}=
            \frac{-2 (p-1) (p (31 p+3)-9) (1-q)^2+(8 p-3) (p (19 p+12)-6) (1-q)-30 (p-1) p (2 p+3)}{50 p (p (11 (1-q)-6)-6 (1-q)+6)}.\label{i10}
        \end{align}
The range of p and q for which this expression is valid is given by
        \begin{align}
            \frac{4}{4+\sqrt{6}}<(1-q)\leq 1~\&\&~ \frac{\sqrt{6}}{5+\sqrt{6}}<p<-\frac{3 (1-q)}{7 (1-q)-10}
        \end{align}

        Next, we consider a case where photon-loss noise is added to a non-maximally entangled state in the free segment, \textit{i.e.,}
        \begin{align}
            \chi_{2}=(1-q)\ketbra{\Psi}+q\ketbra{00}.
        \end{align}
         The fidelity that we achieve between Alice and Bob will again not be optimal and is given in this case as
      \begin{align}
            F^{NME}_{AB}=
            \frac{-2 (p-1) (p (61 p-27)-9) (1-q)^2+(p (17 p (16 p-3)-114)+18) (1-q)-60 (p-1) p (2 p+3)}{50 p (17 p (1-q)-12 p-12 (1-q)+12)}.\label{i13}
        \end{align}

 The range of p and q for which this expression is valid is given as
       \begin{align}
            &\left (\frac {\sqrt {2}} {5 + \sqrt {2}} < 
    p < \frac {3 \sqrt {2}} {5 + 
        3 \sqrt {2}}~\&\&~\frac {20 p} {17 p + 3} < 
    (1-q) < \frac {12 p - 12} {3 \sqrt {2} p + 17 p - 3 \sqrt {2} - 
       12} \right)\nonumber\\
       &||\left (p = \frac {3
            \sqrt {2}} {5 + 
         3 \sqrt {2}}~\&\&~ \frac {60 \sqrt {2}} {15 + 60 \sqrt {2}} < 
     (1-q)\leq 1 \right)\nonumber\\
     &||\left (\frac {3 \sqrt {2}} {5 + 
      3 \sqrt {2}} < p < 1~\&\&~\frac {20 p} {17 p + 3} < 
   (1-q)\leq 1 \right)
        \end{align}
 We are interested in the quantity
        \begin{align}
            F\%=\frac{F^{ME}_{AB}-F^{NME}_{AB}}{F^{ME}_{AB}}\times 100,\label{i14}
        \end{align}
        which is the percentage change in the fidelity due to same noise introduction in a maximally entangled state vs a non-maximally entangled state in the free segment.

 But in order to calculate this quantity we have to find a range where both the expressions are valid.
   That range is given as
        \begin{align}
            &\left(\frac{4 \sqrt{6}}{3+4 \sqrt{6}}<(1-q)\leq -\frac{12}{-12-3 \sqrt{2}+\sqrt{6}}~\&\&~\frac{\sqrt{6}}{5+\sqrt{6}}<p<-\frac{3 (1-q)}{17 (1-q)-20}\right)\nonumber\\
            &||\left(-\frac{12}{-12-3
   \sqrt{2}+\sqrt{6}}<(1-q)<1~\&\&~ \frac{3 \sqrt{2} (1-q)+12 (1-q)-12}{3 \sqrt{2} (1-q)+17 (1-q)-12}<p<-\frac{3 (1-q)}{17 (1-q)-20}\right)\nonumber\\
   &||\left(q=0 ~\&\&~ \frac{3 \sqrt{2}}{5+3 \sqrt{2}}\leq
   p<1\right)\label{i36}
        \end{align}
        
       Both the expressions in Eqs. (\ref{i10}) and (\ref{i13}) reduce to the optimal value in Eqs. (\ref{h23}) and (\ref{h24}). We again plot the quantity in Eq. (\ref{i14}) both with the percentage increase in the noise injected in the free segment and the noise parameter of the state in the noisy segment in Figure \ref{fig2}.
  \begin{figure}[h!]
         \begin{subfigure}{.5\textwidth}
              \centering
         \includegraphics[height=200px, width=250px]{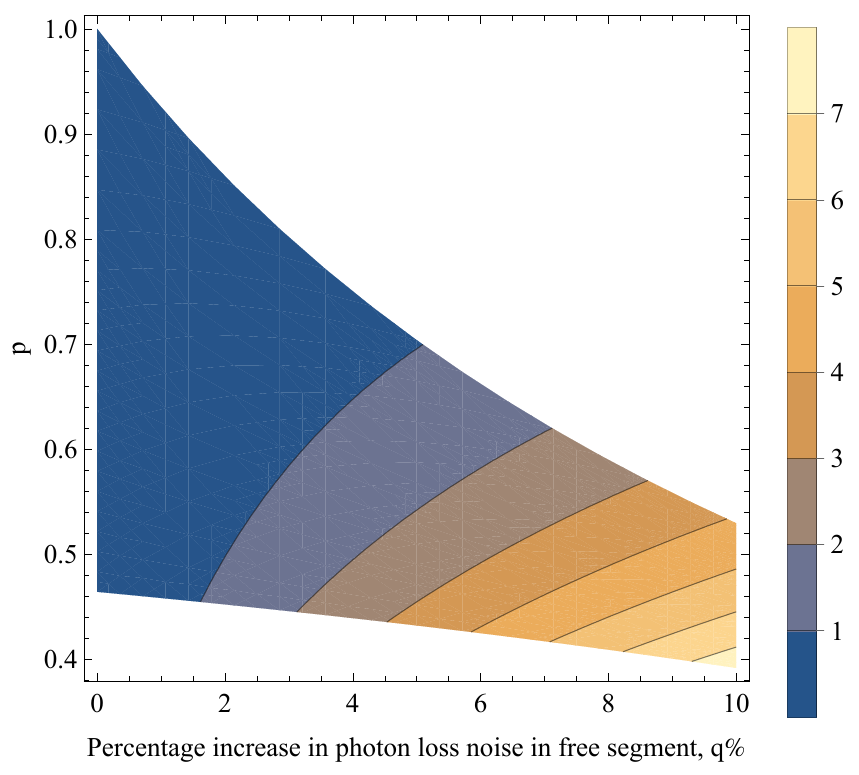}
         \caption{\centering}
         \label{fig1}
         \end{subfigure}%
    \begin{subfigure}{.5\textwidth}
         \centering
         \includegraphics[height=200px, width=250px]{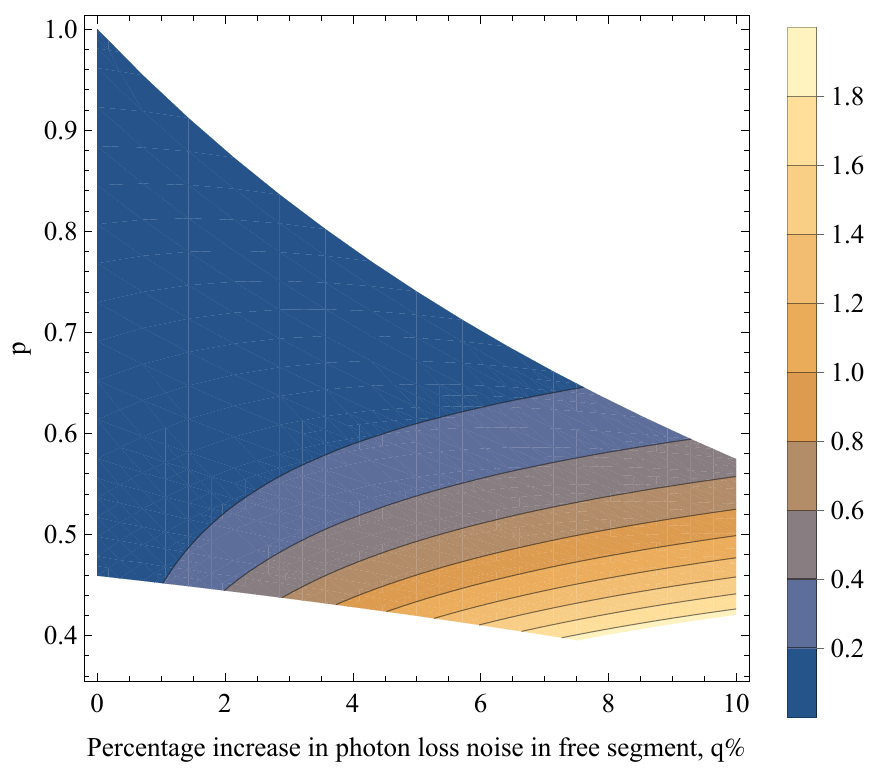}
         \caption{\centering}
         \label{fig2}
    \end{subfigure}
    \caption{\textbf{(a)} Percentage change in fidelity $ F\%=\frac{F^{*}_{AB}-F^{noisy}_{AB}}{F^{*}_{AB}}\times100$ with an increase in the noise parameter p of the state in the noisy segment and percentage increase of photon loss noise in the free segment. \textbf{(b)} Percentage change in fidelity $F\%=\frac{F^{ME}_{AB}-F^{NME}_{AB}}{F^{ME}_{AB}}\times 100$ when photon loss noise is added to the maximally entangled state as compared to the case when the same amount of photon loss noise is added to the non-maximally entangled state in the free segment along with an increase in the noise parameter p of the state in the noisy segment.  Again, both of these plots indicate that the amount of noise in the noisy segment and the amount of noise present in the free segment depend on each other as suggested by Eq. (\ref{i36}). Depending on the amount of change in the value of fidelity defined as permissible, there exists a trade-off between these two. Also comparing it with Figure \ref{BigF} we observe that fidelity is more robust under white noise injection as compared to the photon loss noise.}
\end{figure}

\subsection*{\textbf{Case 3}: Noisy Measurments}
We can consider a scenario where instead of a projective measurement in the computational basis we can have a POVM given as
\begin{align}
    &\ketbra{\Pi_0}=(1-\eta)\ketbra{0}+\eta\ketbra{1}\label{i32}\\
    &\ketbra{\Pi_1}=(1-\eta)\ketbra{1}+\eta\ketbra{0}\label{i33}
\end{align}
A measurement in the Bell basis can be thought of as the following sequence. First CNOT operation is applied on the desired two-qubit system with the first qubit as control and the second qubit as the target. Then, a $45^{\degree}$ rotation is locally applied on the first qubit. Consequently, the projections in Bell basis $\{\ketbra{\Phi^+},\ketbra{\Psi^+},\ketbra{\Phi^-},\ketbra{\Psi^-}\}$ corresponds to projections in computational basis $\{\ketbra{00},\ketbra{01},\ketbra{10},\ketbra{11}\}$. 
Thus according to Eqs. (\ref{i32}) and (\ref{i33}) these will also modify as 
\begin{align}
    &\ketbra{\Phi^+}\longrightarrow(1-\eta)^2\ketbra{\Phi^+}+\eta(1-\eta)\ketbra{\Psi^+}+\eta(1-\eta)\ketbra{\Phi^-}+\eta^2\ketbra{\Psi^-}\label{ii34}\\
    &\ketbra{\Psi^+}\longrightarrow(1-\eta)^2\ketbra{\Psi^+}+\eta(1-\eta)\ketbra{\Phi^+}+\eta(1-\eta)\ketbra{\Psi^-}+\eta^2\ketbra{\Phi^-}\label{ii35}\\
    &\ketbra{\Phi^-}\longrightarrow(1-\eta)^2\ketbra{\Phi^-}+\eta(1-\eta)\ketbra{\Phi^+}+\eta(1-\eta)\ketbra{\Psi^-}+\eta^2\ketbra{\Psi^+}\label{ii36}\\
    &\ketbra{\Psi^-}\longrightarrow(1-\eta)^2\ketbra{\Psi^-}+\eta(1-\eta)\ketbra{\Phi^+}+\eta(1-\eta)\ketbra{\Psi^-}+\eta^2\ketbra{\Phi^+}\label{ii37}
\end{align}
Considering the scenario where we have white noise in the free segment given by
\begin{align}
            \chi=(1-q)\ketbra{\Phi'}+q\frac{\mathbb{I}}{4},
        \end{align}
        where $\ket{\Phi'}=\sqrt{\frac{3}{5}}\ket{00}+\sqrt{\frac{1}{5}}\ket{11}$.

The state in the noisy segment is chosen to be
\begin{align}
            \rho(p)=(1-p)\ketbra{\Phi^{+}}+p\ketbra{01},
        \end{align}

        with the fixed value of the noise parameter $p=0.8$. From Eq. (\ref{a26}) the optimal fidelity for $\delta=\frac{1}{2}, \alpha=\frac{3}{5}$ and $\beta=\frac{1}{2}$ is 
        \begin{align}
            F^{*}_{AB}=\frac{81}{160},
        \end{align}

In the noisy case considered, the fidelity achieved will not be optimal and is given by 
\begin{align}
     &F^{noisy}_{AC}=
     \frac{-4 (2 (1-\eta)-1) (28 (1-\eta) (32 (1-\eta) (-\eta (1-\eta)+2)-59)+407) (1-q)^3}{160 ((28 (1-\eta)-11) (1-q)+15) ((56 (1-\eta)-37) (1-q)+45)}\nonumber\\
     &\qquad \qquad +\frac{(4 (1-\eta) (2135-32 (1-\eta) (-36 \eta (1-\eta)+79))-2165) (1-q)^2+2550 (2 (1-\eta)-1) (1-q)+3375}{160 ((28 (1-\eta)-11) (1-q)+15) ((56 (1-\eta)-37) (1-q)+45)}.
\end{align}
Whenever $\eta=0$ and $q=0$, this expression gives the optimal fidelity.
The range of validity in this case is given as
\begin{align}
    0.919<(1-\eta)\leq 1~\&\&  ~\frac{15}{-2 (1-\eta)+2 \sqrt{(1-\eta) ((1-\eta) (-96 \eta (1-\eta)+193)-130)+25}+1}<(1-q)\leq 1
\end{align}
 We again calculate the percentage change in fidelity as
 \begin{align}
            F\%=\frac{F^{*}_{AB}-F^{noisy}_{AC}}{F^{*}_{AB}}\times100.
        \end{align}

We contour plot $F\%$ with the percentage increase in noise in free segment $q\%$ and noise in the measurement $\eta\%$ in Figure \ref{f9}.\\

We can also consider the scenario where photon loss happens in the free segment
\begin{align}
            \chi=(1-q)\ketbra{\Phi}+q\ketbra{00},
        \end{align}
        where $\ket{\Phi}=\sqrt{\frac{3}{5}}\ket{01}+\sqrt{\frac{1}{5}}\ket{10}$.

        In this case,
        the fidelity achieved will not be optimal and is given by
\begin{align}
     &F^{noisy}_{AB}=\frac{1}{400} \Bigg(\frac{(8 \eta  (4 q-5)-16 q+45) (q (-3260 \eta +(4 \eta  (32 \eta  ((\eta -3) \eta +12)-365)+389) q+1930)+1425)}{((8 \eta -4) q-25) ((28 \eta
   -17) q-15)}\nonumber\\
   &\qquad\qquad+\frac{(8 \eta  (4 q-5)-16 q-5) (q (-4420 \eta +(4 \eta  (32 \eta  (3 (\eta -3) \eta +1)+105)-113) q+1310)+4275)}{((8 \eta -4) q+25) ((56 \eta -19)
   q-45)}\Bigg).
\end{align}
The range of validity of this expression is given as
\begin{align}
    0.919<(1-\eta)\leq 1\&\& \frac{5 (1-\eta) (7 (1-\eta)+1)}{(1-\eta) ((1-\eta) (-8 \eta (1-\eta)+51)-7)-2}<(1-q)\leq 1
\end{align}
We again contour plot $F\%$ with the percentage increase in noise in the free segment $q\%$ and noise in the measurement $\eta\%$ in Figure \ref{f10}.

\begin{figure}[h!]
         \begin{subfigure}{.5\textwidth}
             \centering
    \includegraphics[height=200px, width=250px]{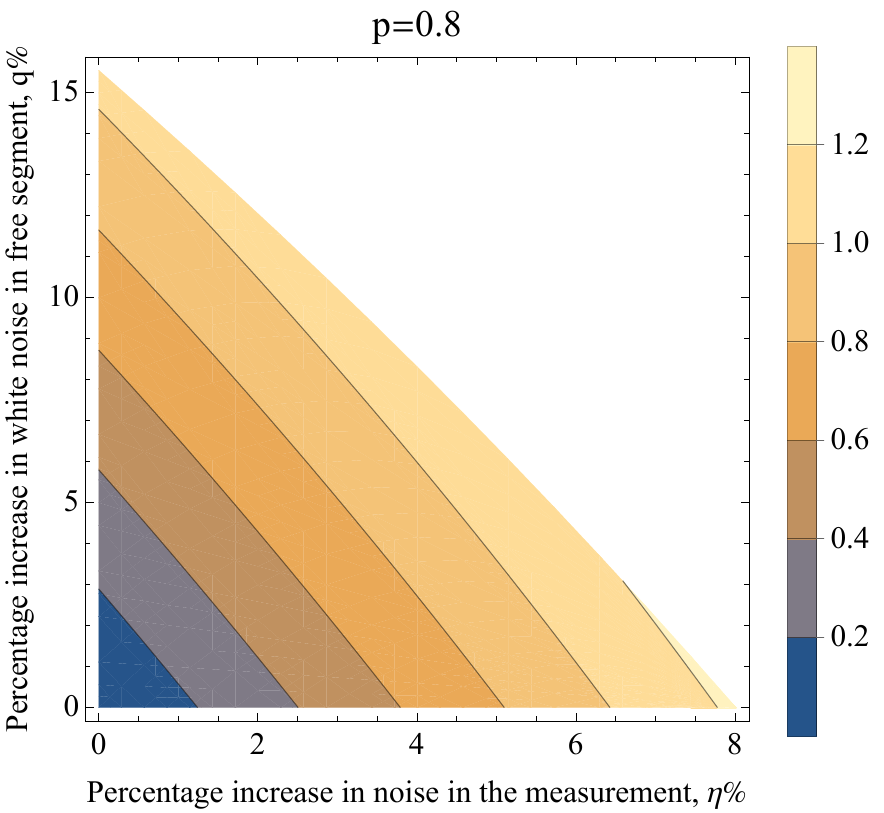}
    \caption{\centering}
    \label{f9}
         \end{subfigure}%
    \begin{subfigure}{.5\textwidth}
          \centering
    \includegraphics[height=200px, width=250px]{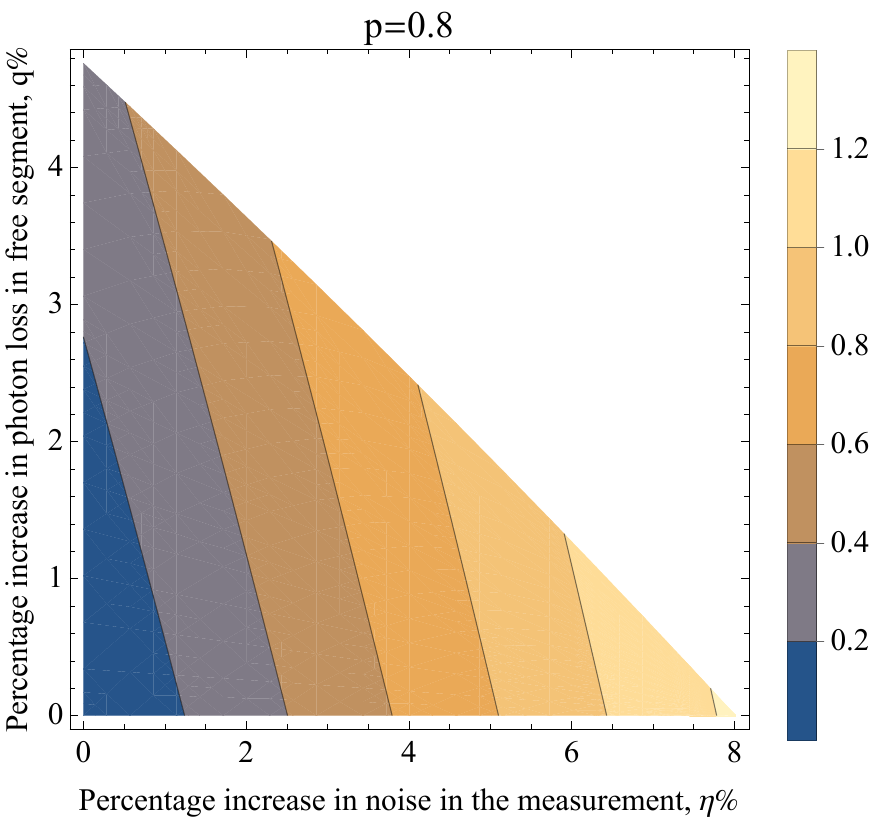}
    \caption{\centering}
    \label{f10}
    \end{subfigure}
    \caption{\textbf{(a)} Percentage change in fidelity $ F\%=\frac{F^{*}_{AB}-F^{noisy}_{AC}}{F^{*}_{AB}}\times100$ with an increase in the percentage of noise in the measurement and percentage of white noise added in the free segment. The noise parameter of the state in the noisy segment p is taken to be 0.8.  \textbf{(b)} Percentage change in fidelity  $ F\%=\frac{F^{*}_{AB}-F^{noisy}_{AC}}{F^{*}_{AB}}\times100$ with an increase in the percentage of noise in the measurement and percentage of colored noise added in the free segment. Again, the noise parameter p of the state in the noisy segment is 0.8. From both of these plots, we can infer that the noise in the measurement and the noise in the free segment constitute an interplay between them for a given amount of tolerance on the change in fidelity from the optimal value. This feature will be present for any allowed value of the noise parameter p in the noisy segment. }
\end{figure}

\section{Broader Context and Impact}\label{BCI}

In this work, we focus on optimizing teleportation fidelity in noisy environments, a key element for advancing long-distance quantum communication protocols. While our primary application remains long-distance communication, we extend our discussion to highlight the broader relevance of these results in quantum internet and quantum network, particularly in the contexts of quantum communication, quantum computing, and quantum cryptography.

\subsection{Teleportation-Based Quantum Computing}

Entanglement plays a central role not only in quantum communication but also in quantum computing, especially in measurement-based quantum computation (MBQC), where quantum gates are implemented through quantum measurements on an entangled resource state \cite{Gottesman1999,Leung2001,Nielsen2003,Robert2003,Leung2004,Jozsa2006}. A prominent variant of MBQC is teleportation-based quantum computation (TQC), which uses quantum teleportation for gate implementation \cite{Gottesman1999}. Recent advances in distributed quantum computing (DQC) have aimed to scale up quantum computations by enabling large-scale quantum processes across a network of quantum processing modules \cite{CALEFFI2024110672,Main2025}. Quantum teleportation provides a lossless method for communication across quantum modules, utilizing only bipartite entanglement shared between modules along with local operations and classical communication instead of direct transmission of quantum information.

While not directly explored in this work, it is hoped that the entanglement optimization strategies proposed here could provide an effective approach for reducing resource consumption in teleportation-based protocols, particularly in noisy environments. This may have implications for the scalability and efficiency of quantum computing architectures in the future.

\subsection{Universal Blind Quantum Computation}

Universal blind quantum computation (UBQC) allows a client to delegate quantum computation to a server while preserving both the privacy of the computation and the client's data  \cite{Broadbent2009}. Reliable quantum communication between the client and server is fundamental to UBQC, often realized through measurement-based quantum computing. Since our protocol optimizes quantum communication in noisy environments, it may potentially offer advantages in enhancing UBQC efficiency by reducing entanglement consumption. In practical implementations, this could contribute to more efficient and secure computation, although these benefits are to be explored in future work.

\subsection{ Applications in Quantum Cryptography}

Entanglement is a critical resource in quantum cryptography, particularly in entanglement-based quantum key distribution (QKD) protocols. The use of quantum repeaters to extend secure QKD over long distances has been well-established, but its practical implementation remains a challenge \cite{Yin2020}. Quantum teleportation, along with its variant quantum entanglement swapping, is pivotal for overcoming distance limitations in QKD, enabling secure transmission of quantum keys \cite{Hu2023}. It is anticipated that the optimization of entanglement consumption demonstrated in this work could lead to more efficient QKD protocols, especially in noisy environments, although further investigation into this potential is needed.

This approach could improve the scalability of QKD protocols, where minimizing entanglement consumption is crucial for practical implementations. The hope is that the insights gained here will inform future efforts in developing more resource-efficient quantum cryptography systems.

\subsection*{Summary}

The results of this study extend beyond long-distance quantum communication and suggest new avenues for future work in quantum communication, quantum computing, and quantum cryptography. Key potential advantages of the proposed protocol include: 

\begin{itemize}
    \item Achieving optimal teleportation fidelity in noisy environments.
    \item  Reducing entanglement consumption compared to conventional entanglement swapping methods.
    \item Have potential to enhance the efficiency of quantum communication across various platforms.
\end{itemize}
 
In summary, our work provides a framework for improving quantum communication performance with minimal entanglement usage. The findings point toward possible advancements in a variety of quantum information tasks, with future research exploring their broader applications. The proposed protocol may offer significant improvements in practical, noisy scenarios across multiple domains, ranging from distributed quantum computation to secure communication and quantum cryptography.

\end{document}